\newif\ifnotes
\newcommand{\MHELastName}{\texorpdfstring{Escard\'{o}}{Escardo}}
\newcommand{\MHEName}{%
  \texorpdfstring{Mart\'{i}n~H.~Escard\'{o}}{Martin\ H.\ Escardo}%
}
\newcommand{\BRPName}{Bruno~da~Rocha~Paiva}
\newcommand{\VRName}{Vincent~Rahli}
\newcommand{\ATName}{Ayberk~Tosun}
\newcommand{\IsDefinedToBe}{\vcentcolon\equiv}
\newcommand{\DeclareType}[2]{#1 : #2}
\newcommand{\NewDefinition}[2]{#1 \IsDefinedToBe #2}
\newcommand{\definiendum}[1]{\textbf{#1}}
\newcommand{\PowT}[1]{{#1}^{\mathsf{T}}}
\newcommand{\EqUpTo}[3]{#1 =_{#3} #2}
\newcommand{\AgdaLogo}{\raisebox{-0.2em}{\includegraphics[height=0.9em]{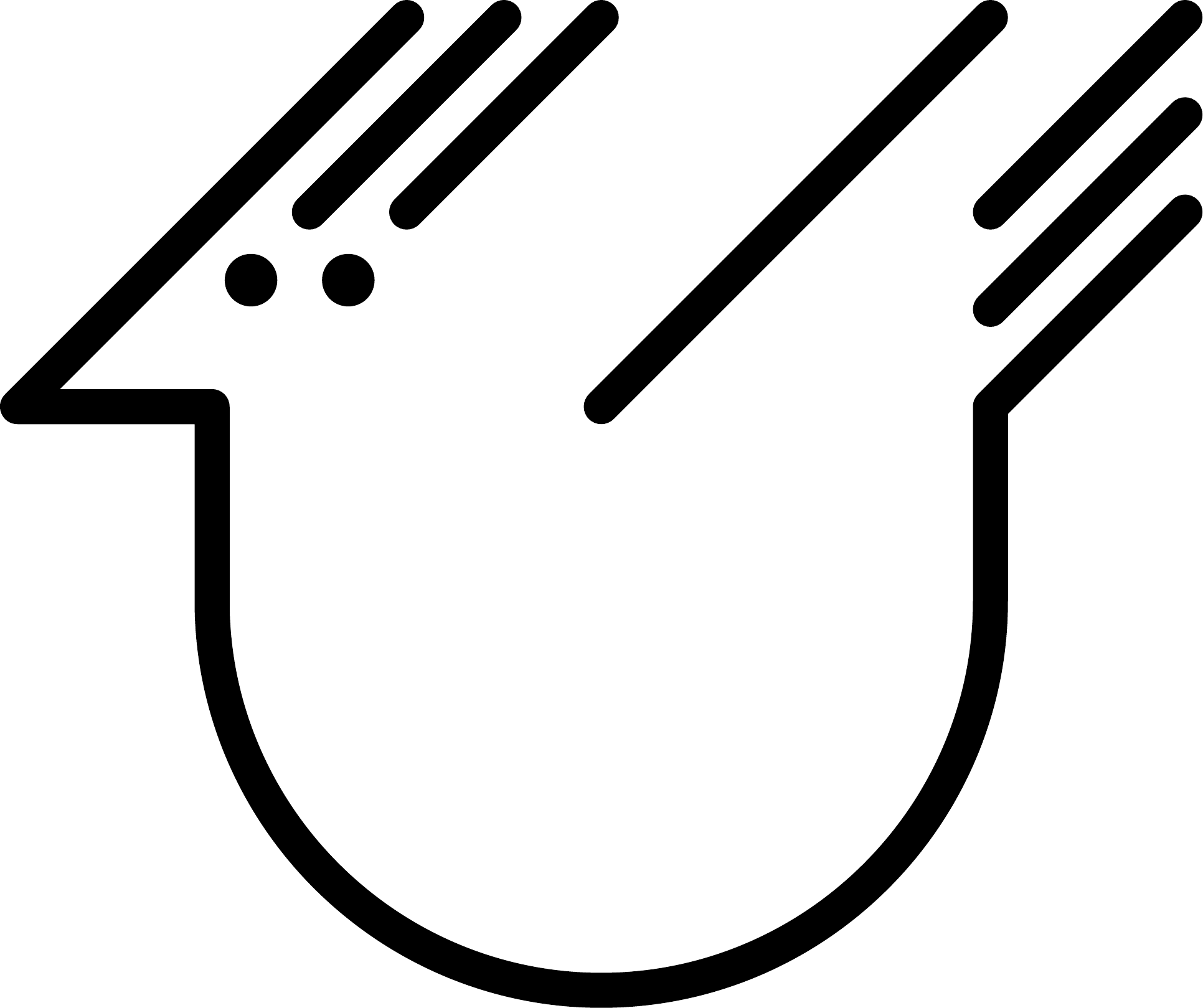}}}
\newcommand{\AgdaLink}[1]{%
  \href{%
    https://cs.bham.ac.uk/~mhe/InternalEffectfulForcing/EffectfulForcing.Internal.PaperIndex.html\##1%
  }{%
    \AgdaLogo{}%
  }\hspace{0.1341em}%
}
\newcommand{\PiTypeSymbol}{\prod}
\newcommand{\Fam}[2]{\{{#1}\}_{#2}}
\newcommand{\mlto}{\to}
\newcommand{\ArrowType}[2]{#1 \mlto #2}
\newcommand{\SigmaTypeSymbol}{\sum}
\newcommand{\HEEqualsSymbol}{\approx}
\newcommand{\HEEquals}[3]{{#1}\HEEqualsSymbol_{#3}{#2}}
\newcommand{\Fun}[3]{\lambda {#1 : #2}.\ #3}
\newcommand{\UFun}[2]{\lambda {#1}.\ #2}
\newcommand{\AppI}[2]{#1(#2)}
\newcommand{\AppII}[3]{#1(#2)(#3)}
\newcommand{\AppIII}[4]{#1(#2)(#3)(#4)}
\newcommand{\UU}{\mathsf{Type}}
\newcommand{\IdTypeSymbol}{=}
\newcommand{\IdTy}[2]{{#1}\IdTypeSymbol{#2}}
\newcommand{\IdType}[3]{{#1}\IdTypeSymbol_{#3}{#2}}
\newcommand{\DialogueSym}{\hyperref[def:dialogue-trees]%
  {\mathsf{Dial}}}
\newcommand{\Dialogue}[3]{\AppIII{\DialogueSym}{#1}{#2}{#3}}
\newcommand{\DialogueNNSym}{\hyperref[def:dialogue-trees]%
  {\mathcal{D}}}
\newcommand{\DialogueNN}[1]{\AppI{\DialogueNNSym}{#1}}
\newcommand{\KleisliSym}{\hyperref[def:kleisli_ext]%
  {\mathsf{kleisli{\mbox{-}}ext}}}
\newcommand{\Kleisli}[1]{\AppI{\KleisliSym}{#1}}
\newcommand{\GKleisliSym}{\hyperref[def:gen_kleisli_ext]%
  {\mathsf{Kleisli{\mbox{-}}ext}}}
\newcommand{\GKleisli}[1]{\GKleisliSym_{#1}}
\newcommand{\FunctorSym}{\hyperref[def:functor]%
  {\mathcal{D}\mathsf{\mbox{-}functor}}}
\newcommand{\Functor}[1]{\AppI{\FunctorSym}{#1}}
\newcommand{\Generic}{\hyperref[def:generic]%
  {\mathsf{generic}}}
\newcommand{\DialogueFSym}{\hyperref[def:dialogue_operator]%
  {\mathsf{dialogue}}}
\newcommand{\DialogueF}[2]{\AppII{\DialogueFSym}{#1}{#2}}
\newcommand{\DialogueTree}{\hyperref[def:dialogue_tree_operator]%
  {\mathsf{dialogue\mbox{-}tree}}}
\newcommand{\Nat}{\mathbb{N}}
\newcommand{\Zero}{0}
\newcommand{\One}{1}
\newcommand{\SuccSym}{1 +}
\newcommand{\Succ}[1]{\SuccSym{#1}}
\newcommand{\RecSym}{\mathsf{Natrec}}
\newcommand{\Rec}[3]{\AppIII{\RecSym}{#1}{#2}{#3}}
\newcommand{\Bool}{\mathbb{B}}
\newcommand{\MaxSym}{\mathsf{max}}
\newcommand{\Max}[2]{\AppII{\MaxSym}{#1}{#2}}
\newcommand{\MaxQuestionSym}{\hyperref[defn:max-question]%
  {\mathsf{max\mbox{-}q}}}
\newcommand{\MaxQuestion}[2]{\AppII{\MaxQuestionSym}{#1}{#2}}
\newcommand{\MaxBoolQuestionSym}{\hyperref[defn:max-bool-question]%
  {\mathsf{max\mbox{-}q_2}}}
\newcommand{\MaxBoolQuestion}[1]{\AppI{\MaxBoolQuestionSym}{#1}}
\newcommand{\ModulusSym}{\hyperref[defn:modulus]%
  {\mathsf{modulus}}}
\newcommand{\Modulus}[2]{\AppII{\ModulusSym}{#1}{#2}}
\newcommand{\ModulusUniSym}{\hyperref[def:uniform_modulus]%
  {\mathsf{modulus}_{2}}}
\newcommand{\ModulusUni}[1]{\AppI{\ModulusUniSym}{#1}}
\newcommand{\PruneSym}{\hyperref[def:pruning]%
  {\mathsf{prune}}}
\newcommand{\Prune}[1]{\AppI{\PruneSym}{#1}}
\newcommand{\EmbedSym}{\hyperref[def:pruning]%
  {\mathsf{embed}}_{\Bool}%
}
\newcommand{\Embed}[1]{\AppI{\EmbedSym}{#1}}
\newcommand{\EmbedCBSym}{%
  \mathsf{embed}_{C}%
}
\newcommand{\EmbedCB}[1]{%
  \AppI{\EmbedCBSym}{#1}%
}
\newcommand{\ChurchIntNNSym}[1]{\hyperref[def:internal_dialogue_trees]%
  {\mathcal{D}^{\mathsf{T}}_{#1}}}
\newcommand{\ChurchIntNN}[2]{\AppI{\ChurchIntNNSym{#2}}{#1}}
\newcommand{\LeafIntSym}[1]{\hyperref[def:internal_dialogue_trees]%
  {\eta^{\mathsf{T}}_{#1}}}
\newcommand{\LeafInt}[2]{\TAppI{\LeafIntSym{#2}}{#1}}
\newcommand{\BranchIntSym}[1]{\hyperref[def:internal_dialogue_trees]%
  {\beta^{\mathsf{T}}_{#1}}}
\newcommand{\EncodeExtSym}[1]{\hyperref[def:encode]%
  {\mathsf{encode}_{#1}}}
\newcommand{\EncodeExt}[2]{\EncodeExtSym{#2}({#1})}
\newcommand{\TSub}[2]{({#1})\{{#2}\}}
\newcommand{\TTypes}{\hyperref[def:systemT-syntax]%
  {\mathsf{Type}^\mathsf{T}}}
\newcommand{\TCtxs}{\hyperref[def:systemT-syntax]%
  {\mathsf{Ctx}^\mathsf{T}}}
\newcommand{\TTerms}[2]{\hyperref[def:systemT-syntax]%
  {\mathsf{Term}^\mathsf{T}({#1},{#2})}}
\newcommand{\TCTerms}[1]{\hyperref[def:systemT-syntax]%
  {\mathsf{Term}^\mathsf{T}_0({#1})}}
\newcommand{\TNat}{\hyperref[def:systemT-syntax]%
  {\iota}}
\newcommand{\tto}{\hyperref[def:systemT-syntax]%
  {\Rightarrow}}
\newcommand{\TArrowType}[2]{#1 \tto #2}
\newcommand{\TZero}{\hyperref[def:systemT-syntax]%
  {\mathtt{Zero}}}
\newcommand{\TSuccSym}{\hyperref[def:systemT-syntax]%
  {\mathtt{Succ}}}
\newcommand{\TSucc}[1]{\TAppI{\TSuccSym}{#1}}
\newcommand{\TRecSym}{\hyperref[def:systemT-syntax]%
  {\mathtt{Rec}}}
\newcommand{\TRec}[4]{\TAppI{\TRecSym_{#1}}{#2,#3,#4}}
\newcommand{\lambdabar}{{\mkern0.75mu\mathchar '26\mkern -9.75mu\lambda}}
\newcommand{\TLambdaSym}{\hyperref[def:systemT-syntax]%
  {\lambdabar}}
\newcommand{\TFun}[3]{\TLambdaSym {(#1 : #2)}.\ #3}
\newcommand{\TUFun}[2]{\TLambdaSym {#1}.\ #2}
\newcommand{\TAppI}[2]{#1(#2)}
\newcommand{\TAppII}[3]{#1(#2)(#3)}
\newcommand{\TAppIII}[4]{#1(#2)(#3)(#4)}
\newcommand{\TNumeral}[1]{\underline{#1}}
\newcommand{\TKleisliSym}[1]{\hyperref[def:internal_kleisli]%
  {\mathtt{kleisli\mbox{-}ext}^{\mathsf{T}}_{#1}}}
\newcommand{\TKleisli}[2]{\TAppI{\TKleisliSym{#2}}{#1}}
\newcommand{\TGKleisliSym}[1]{\hyperref[def:internal_gkleisli]%
  {\mathtt{Kleisli\mbox{-}ext}^{\mathsf{T}}_{#1}}}
\newcommand{\TGKleisli}[2]{\TGKleisliSym{{#1},{#2}}}
\newcommand{\TFunctorSym}[1]{\hyperref[def:internal_functor]%
  {{\mathcal{D}\mathtt{\mbox{-}functor}}^{\mathsf{T}}_{#1}}}
\newcommand{\TFunctor}[2]{\TAppI{\TFunctorSym{#2}}{#1}}
\newcommand{\TGeneric}[1]{\hyperref[def:internal_generic]%
  {\mathtt{generic}^{\mathsf{T}}_{#1}}}
\newcommand{\TDialogueFSym}{\hyperref[defn:dialogue_internal]%
  {\mathtt{dialogue}^{\mathsf{T}}}}
\newcommand{\TDialogueTreeSym}[1]{\hyperref[def:internal_dialogue_tree_operator]%
  {\mathtt{dialogue\mbox{-}tree}^{\mathsf{T}}_{#1}}}
\newcommand{\TDialogueTree}[2]{\AppI{\TDialogueTreeSym{#2}}{#1}}
\newcommand{\empctx}{\diamond}
\newcommand{\typed}[3]{{#2}:\TTerms{#1}{#3}}
\newcommand{\ctyped}[2]{{#1}:\TCTerms{#2}}
\newcommand{\TMaxSym}{\PowT{\mathtt{max}}}
\newcommand{\TMax}[2]{\AppII{\PowT{\mathtt{max}}}{#1}{#2}}
\newcommand{\TMaxQuestionSym}{\hyperref[defn:max-question-int]%
  {\PowT{\mathtt{max\mbox{-}q}}}}
\newcommand{\TMaxQuestion}[2]{\AppII{\TMaxQuestionSym}{#1}{#2}}
\newcommand{\MaxBoolQuestionIntSym}{\hyperref[def:internal_max_bool_question]%
  {\PowT{\mathtt{max\mbox{-}q}}_2}}
\newcommand{\MaxBoolQuestionInt}[1]{\TAppI{\MaxBoolQuestionIntSym}{#1}}
\newcommand{\ModulusIntSym}{\hyperref[defn:modulus]%
  {\PowT{\mathtt{modulus}}}}
\newcommand{\ModulusUniIntSym}{\hyperref[def:uniform_modulus]%
  {\PowT{\mathtt{modulus}}_{2}}}
\newcommand{\ModulusUniInt}[1]{\AppI{\ModulusUniIntSym}{#1}}
\newcommand{\typestodial}[1]{\llbracket{#1}\rrbracket_{\mathcal{D}}}
\newcommand{\ctxstodial}[1]{\llbracket{#1}\rrbracket_{\mathcal{D}}}
\newcommand{\termstodial}[2]{\llbracket{#1}\rrbracket_{\mathcal{D}}^{#2}}
\newcommand{\typestoset}[1]{\llbracket{#1}\rrbracket_\set}
\newcommand{\ctxstoset}[1]{\llbracket{#1}\rrbracket_\set}
\newcommand{\termstoset}[2]{\llbracket{#1}\rrbracket_\set^{#2}}
\newcommand{\typestoint}[2]{\llbracket{#1}\rrbracket_{\mathcal{D}^\mathsf{T}}^{#2}}
\newcommand{\ctxstoint}[2]{\llbracket{#1}\rrbracket_{\mathcal{D}^\mathsf{T}}^{#2}}
\newcommand{\termstoint}[2]{\llbracket{#1}\rrbracket_{\mathcal{D}^\mathsf{T}}^{#2}}
\newcommand{\Rdial}[4]{{#1}\vDash{#2}\approx{#3}\in{#4}}
\newcommand{\Rnorm}[3]{{#1}\sim{#2}\in{#3}}
\newcommand{\nat}{\mathbb{N}}
\newcommand{\LeafSym}{\eta}
\newcommand{\Leaf}[1]{\AppI{\LeafSym}{#1}}
\newcommand{\BranchSym}{\beta}
\newcommand{\Branch}[2]{\AppII{\BranchSym}{#1}{#2}}
\newcommand{\set}{\mathbf{Set}}
\newcommand{\VSystemT}{System~T\xspace}
\newcommand{\VAgda}{\textsc{Agda}\xspace}
\newcommand{\VMLTT}{\textsf{MLTT}\xspace}
\definecolor{Melon}{HTML}{F89E7B}
\Crefname{equation}{Eq.}{Eqs.}
\Crefname{figure}{Fig.}{Figs.}
\Crefname{tabular}{Tab.}{Tabs.}
\Crefname{section}{Sec.}{Secs.}
\Crefname{definition}{Def.}{Defs.}
\Crefname{definition}{Def.}{Defs.}
\Crefname{lemma}{Lem.}{Lems.}
\Crefname{theorem}{Thm.}{Thms.}
\Crefname{theorem}{Thm.}{Thms.}
\Crefname{paragraph}{Sec.}{Secs.}
\Crefname{appendix}{Appx.}{Appxs.}
\Crefname{corollary}{Cor.}{Cors.}
\Crefname{example}{Ex.}{Exs.}
\Crefname{example}{Ex.}{Exs.}
\Crefname{proposition}{Prop.}{Props.}
\title{Internal Effectful Forcing in \VSystemT{}}
\author%
    {\MHEName}
    {University of Birmingham}
    {m.escardo@bham.ac.uk}
    {0000-0002-4091-6334}
    {}
\author%
    {\BRPName}
    {University of Birmingham}
    {bmd202@student.bham.ac.uk}
    {0000-0002-2205-8815}
    {}
\author%
    {\VRName}
    {University of Birmingham}
    {v.rahli@bham.ac.uk}
    {0000-0002-5914-8224}
    {}
\author%
    {\ATName}
    {University of Birmingham}
    {a.tosun@pgr.bham.ac.uk}
    {0000-0002-0190-3020}
    {}
\authorrunning{\MHELastName{} et al.} 
\keywords{%
  Effectful forcing, Continuity, System T, Constructive Mathematics%
}
\begin{document}

\maketitle

\begin{abstract}
  The \emph{effectful forcing} technique allows one to show that the
  denotation of a closed \VSystemT{} term of type $(\iota \Rightarrow \iota) \Rightarrow \iota$ in the set-theoretical model is
  a continuous function $(\Nat \to \Nat) \to \Nat$. For this purpose, an alternative dialogue-tree semantics
  is defined and related to the set-theoretical semantics by a logical
  relation. In this paper, we apply effectful forcing to show that the
  dialogue tree of a \VSystemT{} term is itself \VSystemT{}-definable,
  using the Church encoding of trees. 
\end{abstract}





\todo[inline]{be consistent about using model vs translation vs interpretation.
maybe use model for actual models and translation for non models?}

\section{Introduction}%
\label{sec:intro}

It is well known that the \VSystemT{}-definable functions
$(\Nat \to \Nat) \to \Nat$ are continuous and that, moreover, their
moduli of continuity are themselves
\VSystemT{}-definable~\cite{Troelstra:1973}.  Effectful
forcing~\cite{mhe-effectful-forcing} was generalised by
Xu~\cite{xu:2020} to give an alternative proof of this fact. Effectful
forcing gives a dialogue-tree semantics, and relates it to the
set-theoretical semantics by a logical relation. In this paper, we
strengthen this by showing that the dialogue trees are themselves
\VSystemT{}-definable, using Church encoding. Dialogue trees are
equivalent variants of Brouwer
trees~\cite{Escardo+Oliva:diag2brouwer:2017,Sterling:jfp:2021}

From a constructive point of view, dialogue trees give more
information than moduli of continuity. Given a dialogue tree, it is
possible to derive a modulus of continuity, but the converse is only
known to be possible in the presence of additional assumptions. For
example, Ghani, Hancock and
Pattison~\cite{Ghani+Hancock+Pattinson:lmcs:2009} show that if a
function doesn't have a Brouwer tree, then it isn't continuous, using
dependent choice, while Capretta and
Uustalu~\cite{Capretta+Uustalu:fossacs:2016} use the assumption of bar
induction to show that every function with a \emph{stable} modulus of
continuity has a Brouwer tree. Our result does not assume dependent
choice or bar induction, and, moreover, establishes the
\emph{\VSystemT{}-definability} of a dialogue tree of any closed term
of type~$(\iota \Rightarrow \iota) \Rightarrow \iota$.

\emph{Related work.}
Continuity is a key concept in mathematics, and in particular in constructive
mathematics where it is often accepted that all real-valued functions on the
unit interval are uniformly continuous.
Prominent work in the area was done by Brouwer, who gave the first argument for
the previous result, relying on his \emph{continuity principle for numbers},
which states that all functions on the Baire space are continuous, as well as
the Fan Theorem~\cite{Troelstra+VanDalen:1988,Dummett:1977}.
Brouwer's continuity principle said that the values of a function
\(\DeclareType{F}{\ArrowType{(\ArrowType{\Nat}{\Nat})}{\Nat}}\) on the Baire
space could only rely on a finite amount of its inputs.
More precisely, \(F\) was said to be continuous if for all
\(\DeclareType{\alpha}{\ArrowType{\Nat}{\Nat}}\), there existed some
\(\DeclareType{n}{\Nat}\) such that for any inputs
\(\DeclareType{\beta}{\ArrowType{\Nat}{\Nat}}\) which agreed on the first \(n\)
entries with \(\alpha\) we would have \(\AppI{F}{\alpha}=\AppI{F}{\beta}\).
When the value of \(n\) could be picked independently of \(\alpha\), then
\(F\) was said to be uniformly continuous.
Since then, more fine-grained ways of capturing continuity information have been
devised, such as using trees to keep track of the specific entries of \(\alpha\)
that \(\AppI{F}{\alpha}\) might depend on.
These ideas were used by Kleene in his study of recursive functions of
higher types~\cite{Kleene:1978}, by Brouwer in his study of his bar
theorem~\cite{Troelstra+VanDalen:1988}, and many others, including~\cite{Ghani+Hancock+Pattinson:cmcs:2006,Ghani+Hancock+Pattinson:lmcs:2009,Ghani+Hancock+Pattinson:mfps:2009,mhe-effectful-forcing,Capretta+Uustalu:fossacs:2016,Sterling:jfp:2021,Baillon+Mahboubi+Pedrot:csl:2022,Cohen+Paiva+Rahli+Tosun:mfcs:2023,Baillon:phd:2023}.

Starting with Troelstra's work~\cite[p.158]{Troelstra:1973} for
\(\text{N-HA}^{\omega}\), the definable functions of various other systems have been shown to be continuous.
Among these we have:
System~T~\cite{mhe-effectful-forcing},
MLTT~\cite{Coquand+Jaber:2010, Coquand+Jaber:2012, Xu:phd:2015},
CTT~\cite{Rahli+Bickford:cpp:2016},
BTT~\cite{Baillon+Mahboubi+Pedrot:csl:2022}, and
\(\text{TT}^{\Box}_{\mathcal{C}}\)~\cite{Cohen+Paiva+Rahli+Tosun:mfcs:2023}.
However, the existence of a modulus of continuity operator tends to
be inconsistent with function extensionality as first shown
for \(\text{HA}_{\mathit{NB}}^{\omega}\) by Kreisel~\cite[p.154]{Kreisel:1962}
and later
for \(\text{N-HA}^{\omega}\) by Troelstra~\cite[Thm.3.1(IIA)]{Troelstra:1977b}.
In the case of MLTT we don't even need function extensionality
to get an inconsistency as shown by
Escard\'o and Xu~\cite{Escardo+Xu:2015,Xu:phd:2015}.
%
%
%
%
Further related work is discussed in \Cref{sec:conclusion}.


\emph{Main contributions.}
(1) We strengthen the above work to show that not only the modulus of
continuity of a \VSystemT-definable function is itself
\VSystemT-definable, but also its dialogue tree is
\VSystemT-definable, where we implemented trees in \VSystemT using
Church encoding.
(2) We prove the
correctness of this translation with a logical relation.
(3) We show how to
compute moduli of continuity and uniform continuity internally, and prove the
correctness of this construction.
(4) The original presentation of effectful forcing~\cite{mhe-effectful-forcing} extends \VSystemT with an oracle, and here we show that this extension is not necessary and we can work directly with \VSystemT without oracles.

\emph{Organisation.}
\Cref{sec:system_T} recalls \VSystemT{}'s syntax as well as its set
semantics~\(\termstoset{-}{}\).
\Cref{sec:effectful_forcing} presents a simplified version of
effectful forcing using the inductive dialogue
translation~\(\termstodial{-}{}\) and a modified version of the
logical relation~\(\Rdial{\alpha}{x}{y}{\sigma}\).
\Cref{sec:translation} presents the Church encodings of dialogue
trees, the resulting internal dialogue
translation~\(\termstoint{-}{}\) and the logical
relation~\(\Rnorm{x}{y}{\sigma}\) linking inductive dialogue trees and
internal Church encoded dialogue trees.
Finally, \Cref{sec:computing-moduli,sec:uniform} explain how to
compute moduli of continuity and uniform moduli of continuity inside
\VSystemT{} from these dialogue trees. These translations and logical
relations are summarised in the following diagram:

\begin{center}
\begin{tikzcd}
  & & {\TCTerms{\TNat}} \arrow[dd, "\termstodial{-}{}"] \arrow[lldd, "\termstoset{-}{}"', out=180, in=90, looseness=0.8] \arrow[rrdd, "\termstoint{-}{}", out=0, in=90, looseness=0.8] & & \\
  & & & & \\
  \Nat \arrow[rr, "\Rdial{\alpha\,}{\,\cdot\,}{\,\cdot\,}{\TNat}"', Leftrightarrow] & & \DialogueNN{\Nat} \arrow[rr,
  "\Rnorm{\,\cdot\,}{\,\cdot\,}{\TNat}"', Leftrightarrow] & &
  {\TCTerms{\ChurchIntNN{\TNat}{A}}}
\end{tikzcd}
\end{center}

\label{sub:metatheory}

\emph{Metatheory.} Our metatheory is a spartan version of \VMLTT{}
featuring \(\PiTypeSymbol\)-types, \(\SigmaTypeSymbol\)-types,
inductive types, and a universe. All of our reasoning is constructive
and we don't rely on function extensionality. We will fix some
notation for the rest of the paper: we denote the type of natural
numbers by \(\Nat\), its zero element by \(\Zero\), given a natural
number \(n\) we write \(\Succ{n}\) for its successor, and we let
\(\DeclareType{\RecSym}{\ArrowType{(\ArrowType{\Nat}{\ArrowType{X}{X}})}{\ArrowType{X}{\ArrowType{\Nat}{\Nat}}}}\)
be its recursor.

\emph{Formalisation.} The results of this paper have been formalised in \VAgda{}
and we have tried to keep the presentation faithful to the formalisation~\cite{EPRT}.
The main difference is the use of variables for clarity, while the
formalisation uses de Bruijn indices for \VSystemT{}.

\section{\VSystemT}%
\label{sec:system_T}

\VSystemT{} features three notions: contexts, types and terms. It has a single
base type \(\TNat\) of natural numbers as well as a type of functions
\(\TArrowType{\sigma}{\tau}\). Contexts are lists of paired variables and types.
Terms are built up inductively from variables, a zero constant \(\TZero\), a
successor operation \(\TSuccSym\), a recursion operator \(\TRecSym\), lambda
abstraction and function application. More formally, we have the following
definitions.

\begin{figure}
  \begin{mathpar}
    \small

    \boxed{\TTypes}

    \hspace{4em}

    \boxed{\TCtxs}

    \hspace{1em}\\

    \inferrule{  }{\TNat : \TTypes}

    \inferrule{ \sigma : \TTypes \\ \tau : \TTypes}{\TArrowType{\sigma}{\tau} : \TTypes}

    \inferrule{  }{\empctx : \TCtxs}

    \inferrule{\Gamma : \TCtxs \\ (x:\sigma)\notin\Gamma}{\Gamma,(x:\sigma) : \TCtxs}\\

    \boxed{\TTerms{\Gamma}{\sigma}}

    \inferrule{(x : \sigma) \in \Gamma}{\typed{\Gamma}{x}{\sigma}}

    \inferrule{ }{\typed{\Gamma}{\TZero}{\TNat}}

    \inferrule{\typed{\Gamma}{t}{\TNat}}{\typed{\Gamma}{\TSucc{t}}{\TNat}}

    \inferrule{
      \typed{\Gamma}{t}{\TArrowType{\TNat}{\TArrowType{\sigma}{\sigma}}} \\
      \typed{\Gamma}{p}{\sigma} \\
      \typed{\Gamma}{q}{\TNat}
    }{\typed{\Gamma}{\TRec{\sigma}{t}{p}{q}}{\sigma}}

    \inferrule{\typed{\Gamma, x:\sigma}{t}{\tau}}{\typed{\Gamma}{\TFun{x}{\sigma}{t}}{\TArrowType{\sigma}{\tau}}}

    \inferrule{\typed{\Gamma}{t}{\TArrowType{\sigma}{\tau}} \\ \typed{\Gamma}{p}{\sigma}}{\typed{\Gamma}{\TAppI{t}{p}}{\tau}}
  \end{mathpar}
\caption{The syntax of intrinsically typed \VSystemT.}%
\label{fig:systemT-syntax}
\end{figure}

\begin{definition}[\AgdaLink{Definition-1}]\label{def:systemT-syntax}
  The types
  (1)~of \VSystemT{} types \(\TTypes\);
  (2)~of \VSystemT{} contexts \(\TCtxs\);
  and (3)~given a \VSystemT{} context \(\Gamma : \TCtxs\) and
  a \VSystemT{} type \(\sigma : \TTypes\), of \VSystemT{} terms of type \(\sigma\) in context
  \(\Gamma\), denoted \(\TTerms{\Gamma}{\sigma}\),
  are defined in \Cref{fig:systemT-syntax}.
  We fix the shorthand \(\TCTerms{\sigma}\) for \(\TTerms{\empctx}{\sigma}\).
\end{definition}


Given contexts \(\Gamma, \Sigma : \TCtxs\), a substitution from \(\Gamma\)
to \(\Sigma\) is an assignment of terms \(\typed{\Gamma}{t}{\sigma}\)
to each variable \((x:\sigma)\in\Delta\). Though we don't formally define it here,
we assume we have a capture-free substitution of \VSystemT{} terms.
Given a term \(\typed{\Gamma}{t}{\sigma}\)
and substitution \(\rho\) from \(\Delta\) to \(\Gamma\), we denote
the substituted term by \(\typed{\Delta}{\TSub{t}{\rho}}{\sigma}\).

At this point one would usually introduce the reduction rules associated with
\VSystemT{}, or the corresponding notion of conversion of terms. As we will see
later, the dialogue semantics of \VSystemT{} do not respect conversion of terms
in general, hence we omit these rules. With that said though, we now introduce
the set model~\cite{Troelstra:1973} of \VSystemT{}, sometimes also called the
functional model~\cite{Avigad+Feferman:dialectica:1999}. In this model we
interpret \(\TNat\) as the metatheoretic natural numbers \(\Nat\) and function
types \(\tto\) as the metatheoretic function space \(\mlto\). A
context~\(\Gamma\) is then interpreted as the type of assignments from the
context variables to elements of the translated types. A term
\(\typed{\Gamma}{t}{\sigma}\) is interpreted as a metatheoretic function from
the interpretation of~\(\Gamma\) to the interpretation of~\(\sigma\).

\begin{figure}
  \small
 \centering
 \begin{tabular}{m{0.38\textwidth}r}
   \(
   \begin{aligned}
     \DeclareType{\typestoset{-}&}{\ArrowType{\TTypes}{\UU}}\\
     \NewDefinition{\typestoset{\TNat}&}{\nat}\\
     \NewDefinition{\typestoset{\TArrowType{\sigma}{\tau}}&}{\ArrowType{\typestoset{\sigma}}{\typestoset{\tau}}} \\
\\
   \end{aligned}
   \)
        \(
        \begin{aligned}
          \DeclareType{\ctxstoset{-}&}{\ArrowType{\TCtxs}{\UU}}\\
          \NewDefinition{\ctxstoset{\Gamma}&}{\ArrowType{(x:\sigma)\in\Gamma}{\typestoset{\sigma}}}
        \end{aligned}
        \) &
   \(
   \begin{aligned}
     \DeclareType%
     {\termstoset{-}{-}&}%
     {\ArrowType{\TTerms{\Gamma}{\sigma}}{%
      \ArrowType{\typestoset{\Gamma}}{\typestoset{\sigma}}
     }}\\
     \NewDefinition{\termstoset{x}{\gamma}&}%
       {\AppI{\gamma}{x}}\\
     \NewDefinition{\termstoset{\TZero}{\gamma}&}%
       {\Zero}\\
     \NewDefinition{\termstoset{\TSucc{t}}{\gamma}&}%
       {\Succ{\termstoset{t}{\gamma}}}\\
     \NewDefinition{\termstoset{\TRec{\sigma}{t_1}{t_2}{t_3}}{\gamma}&}%
       {\Rec{\termstoset{t_1}{\gamma}}{\termstoset{t_2}{\gamma}}{\termstoset{t_3}{\gamma}}}\\
     \NewDefinition{\termstoset{\TFun{x}{\sigma}{t}}{\gamma}&}%
       {\Fun{x^\prime}{\typestoset{\sigma}}{\termstoset{t}{\gamma,(x \mapsto x^\prime)}}}\\
     \NewDefinition{\termstoset{\TAppI{t_1}{t_2}}{\gamma}&}%
       {\AppI{\termstoset{t_1}{\gamma}}{\termstoset{t_2}{\gamma}}}
   \end{aligned}
   \)
 \end{tabular}
\caption{Set model of \VSystemT.}%
\label{fig:set-model}
\end{figure}

\begin{definition}[\AgdaLink{Definition-2a}]\label{def:set-model}
  The \definiendum{set interpretation} of types, contexts and terms of \VSystemT
  is defined in \Cref{fig:set-model}. When defining the interpretation of terms
  we let \(\gamma\) range over~\(\ctxstoset{\Gamma}\). We fix the notation
  \(\gamma,(x \mapsto x^\prime)\) to extend an assignment of
  \(\ctxstoset{\Gamma}\) to an assignment of~\(\ctxstoset{\Gamma,(x:\sigma)}\).
\end{definition}

Each natural number can be encoded as a \VSystemT term in the standard manner.

\begin{definition}[\AgdaLink{Definition-3}]
  Given a natural number \(n\) we may define a \VSystemT{} term
  \(\ctyped{\TNumeral{n}}{\TNat}\) by induction on \(n\) according to the rules
  \(\NewDefinition{\TNumeral{\Zero}}{\TZero}\) and
  \(\NewDefinition{\TNumeral{\Succ{n}}}{\TSucc{\TNumeral{n}}}\).
\end{definition}

\begin{proposition}[\AgdaLink{Proposition-4}]
  For all \(\DeclareType{n}{\Nat}\), we have
  \(\IdType{n}{\termstoset{\TNumeral{n}}{}}{}\).
\end{proposition}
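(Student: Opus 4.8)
The plan is a routine induction on $n$, unfolding the definition of the numeral $\TNumeral{-}$ and of the set interpretation $\termstoset{-}{}$. Before starting, observe that $\ctxstoset{\empctx}$ has a unique inhabitant, namely the empty assignment; this is the implicit $\gamma$ that is fed to $\termstoset{-}{\gamma}$ when we write $\termstoset{t}{}$ for a closed term $\ctyped{t}{\sigma}$, and it is what makes the statement meaningful.

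For the base case $n = \Zero$, we have $\TNumeral{\Zero} \IsDefinedToBe \TZero$ by definition of the numeral, and $\termstoset{\TZero}{\gamma} \IsDefinedToBe \Zero$ by the corresponding clause of the set model in \Cref{fig:set-model}. Hence $\IdType{\Zero}{\termstoset{\TNumeral{\Zero}}{}}{}$ holds, in fact definitionally.

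For the inductive step, assume $\IdType{n}{\termstoset{\TNumeral{n}}{}}{}$. By definition $\TNumeral{\Succ{n}} \IsDefinedToBe \TSucc{\TNumeral{n}}$, and the successor clause of the set model gives $\termstoset{\TSucc{\TNumeral{n}}}{\gamma} \IsDefinedToBe \Succ{\termstoset{\TNumeral{n}}{\gamma}}$. Applying the metatheoretic successor to both sides of the induction hypothesis (i.e.\ congruence of identity along $\Succ{-}$) then yields $\IdType{\Succ{n}}{\termstoset{\TNumeral{\Succ{n}}}{}}{}$, which completes the induction.

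I do not expect any genuine obstacle here: every step is either a definitional unfolding or an instance of congruence of identity along the successor, so the argument goes through in the spartan metatheory with no appeal to function extensionality or any other principle. The only point that needs a moment's care is the bookkeeping around the empty-context assignment mentioned above, which is already implicit in the notation $\termstoset{\TNumeral{n}}{}$.
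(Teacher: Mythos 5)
Your proof is correct and is exactly the routine induction on $n$ that the paper (which omits the proof as immediate, deferring to the formalisation) intends: both cases are definitional unfoldings of the numeral and the set model, plus congruence along the successor in the step case. Your remark about the unique empty-context assignment is accurate bookkeeping but raises no real issue.
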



\section{Oracle-less Effectful Forcing}%
\label{sec:effectful_forcing}

The original presentation of effectful forcing~\cite{mhe-effectful-forcing}
works with \VSystemT extended with an oracle. Here we show that this extension
is not necessary, and we can work directly with \VSystemT.


\subsection{Dialogue Trees and Continuity}%
\label{sub:dialogue_trees}

Effectful forcing starts by noting that a function
\(f : \ArrowType{(\ArrowType{I}{O})}{X}\) can be thought of as an effectful term
\(t_{f} : X\) with access to an oracle \(\alpha : \ArrowType{I}{O}\). The use of
an oracle can be seen as an algebraic effect with the operation
\(\texttt{question}_{X}\;(i : I)\;(k : O \to X) : X\), which prompts the oracle
with question~\(i\) and passes the oracle's answer to the continuation~\(k\)
eventually producing an element of~\(X\). Looking for the monad corresponding to
oracle computations we find dialogue trees.

\begin{definition}[\AgdaLink{Definition-5}]
  \label{def:dialogue-trees}
  The inductive type \(\Dialogue{I}{O}{X}\) of \definiendum{\((I,O,X)\)-dialogue
    trees} is defined by
  \begin{mathpar}
    \inferrule{x:X}{\Leaf{x}:\Dialogue{I}{O}{X}}

    \inferrule{\phi : \ArrowType{O}{\Dialogue{I}{O}{X}} \\
      i:I}{\Branch{\phi}{i}:\Dialogue{I}{O}{X}}
  \end{mathpar}
  In the case where the oracle input and output are \(\Nat\), we write
  \(\DialogueNN{X}\) for \(\Dialogue{\Nat}{\Nat}{X}\).
\end{definition}

Justifying this view, each \((I,O,X)\)-dialogue tree encodes a function of
type~\(\ArrowType{(\ArrowType{I}{O})}{X}\).

\begin{definition}[\AgdaLink{Definition-6}]\label{def:dialogue_operator}
  The \definiendum{dialogue} between a dialogue tree and an oracle is given by
  \begin{gather*}
    \DeclareType{\DialogueFSym}{\ArrowType{\Dialogue{I}{O}{X}}{\ArrowType{(\ArrowType{I}{O})}{X}}} \\
    \NewDefinition{\DialogueF{\Leaf{x}}{\alpha}}{x} \\
    \NewDefinition{\DialogueF{\Branch{\phi}{i}}{\alpha}}{\DialogueF{\AppI{\phi}{\AppI{\alpha}{i}}}{\alpha}}
  \end{gather*}
\end{definition}

\begin{definition}[\AgdaLink{Definition-7a}]\label{defn:continuity}
  Given a function \(\DeclareType{f}{\ArrowType{(\ArrowType{I}{O})}{X}}\)
  we say it is:
  \begin{itemize}
    \item \definiendum{dialogue continuous} if
    \(
    \exists\,d:\Dialogue{I}{O}{X},
    \forall\,\alpha:\ArrowType{I}{O},
    \IdType{\AppI{f}{\alpha}}{\DialogueF{d}{\alpha}}{}
    \).
  \end{itemize}
  When the oracle input is \(I=\Nat\), so we have
  \(\DeclareType{f}{\ArrowType{(\ArrowType{\Nat}{O})}{X}}\), we also say \(f\)
  is
  \begin{itemize}
    \item \definiendum{continuous} if
    \(
    \forall\,\alpha:\ArrowType{\Nat}{O},
    \exists\,n:\Nat,
    \forall\,\beta:\ArrowType{\Nat}{O},
    \EqUpTo{\alpha}{\beta}{n}
    \to
    \IdType{\AppI{f}{\alpha}}{\AppI{f}{\beta}}{}
    \).
    \item \definiendum{uniformly continuous} if
    \(
    \exists\,n:\Nat,
    \forall\,\alpha,\beta:\ArrowType{\Nat}{O},
    \EqUpTo{\alpha}{\beta}{n}
    \to
    \IdType{\AppI{f}{\alpha}}{\AppI{f}{\beta}}{}
    \).
  \end{itemize}
  where we have used \(\EqUpTo{\alpha}{\beta}{n}\) to mean that \(\alpha\) and
  \(\beta\) share the same initial segment of length \(n\).
\end{definition}

\begin{remark}\label{rem:dialogue_continuous_implies_continuous}
  Due to their inductive nature, any path along a dialogue tree must reach a
  leaf node in a finite number of steps and hence query the oracle a finite
  number of times. This means that given a dialogue tree
  \(d:\Dialogue{\Nat}{O}{X}\), the function
  \(\AppI{\DialogueFSym}{d} : \ArrowType{(\ArrowType{\Nat}{O})}{X}\) must be
  continuous. Further assuming that the type \(O\) is finite, for example in the
  case of \(\Bool\), then the dialogue tree \(d\) can even be fully searched for
  the largest query over all paths, in which case \(\AppI{\DialogueFSym}{d}\)
  will also be uniformly continuous.
\end{remark}

\subsection{A Dialogue Tree Translation of \VSystemT}%
\label{sub:dialogue_translation}

As mentioned, dialogue trees form a monad and it is this structure
that allows us to define the dialogue tree translation of \VSystemT{}.

\begin{definition}[\AgdaLink{Definition-9}]\label{def:kleisli_ext}
  The \definiendum{Kleisli extension of dialogue trees} is defined inductively
  by
  \begin{gather*}
    \DeclareType%
    {\KleisliSym}%
    {\ArrowType{(\ArrowType{X}{\Dialogue{I}{O}{Y}})}{\ArrowType{\Dialogue{I}{O}{X}}{\Dialogue{I}{O}{Y}}}}\\
    \AppI{\Kleisli{f}}{\Leaf{x}} \IsDefinedToBe f(x) \\
    \AppI{\Kleisli{f}}{\Branch{\phi}{i}} \IsDefinedToBe \Branch{\UFun{x}{\AppI{\Kleisli{f}}{\AppI{\phi}{x}}}}{i}
  \end{gather*}
\end{definition}

The Kleisli extension of a function \(f\) will apply it at the leaves and graft
in the resulting trees, leaving intermediate nodes unchanged. With this we may
define the functorial action.

\begin{definition}[\AgdaLink{Definition-10}]\label{def:functor}
  The \definiendum{functorial action of dialogue trees} is defined by
  \begin{gather*}
    \DeclareType{\FunctorSym}{\ArrowType{(\ArrowType{X}{Y})}{\ArrowType{\Dialogue{I}{O}{X}}{\Dialogue{I}{O}{Y}}}}\\
    \NewDefinition{\Functor{f}}{\Kleisli{\LeafSym \circ f}}
  \end{gather*}
\end{definition}

To interpret the higher-order recursion in \VSystemT{} there are two possible
approaches. In one, we choose to interpret \VSystemT{} types as algebras over
the dialogue tree monad, as done in~\cite{Sterling:jfp:2021}. Alternatively, one
may interpret \VSystemT types as metatheoretic types (akin to the set model) and
use a generalised Kleisli extension to interpret higher-order recursion. The
former gives a compositional semantics which is simpler to define, but as we are
extending the formalisation of~\cite{mhe-effectful-forcing}, we choose the
latter. The following definition refers to the translation of \VSystemT{} types,
which can be found in \Cref{fig:dialogue-model}. One should see generalised
Kleisli extension as a pointwise version of Kleisli extension.

\begin{definition}[\AgdaLink{Definition-11}]\label{def:gen_kleisli_ext}
  The \definiendum{generalised Kleisli extension of dialogue trees} is defined
  by induction on \VSystemT{} types as follows
\begin{gather*}
  \DeclareType%
    {\GKleisli{\sigma}}%
    {\ArrowType{(\ArrowType{X}{\typestodial{\sigma}})}%
               {\ArrowType{\DialogueNN{X}}{\typestodial{\sigma}}}}\\
  \NewDefinition{\GKleisli{\TNat}}{\KleisliSym}\\
  \NewDefinition{\GKleisli{\TArrowType{\sigma_{1}}{\sigma_{2}}}}{%
    \UFun{f}{%
    \UFun{d}{%
    \UFun{s}{\AppII{\GKleisli{\sigma_{2}}}{\UFun{x}{\AppII{f}{x}{s}}}{d}}}}}
\end{gather*}
\end{definition}

\begin{figure}
  \small
\hspace*{-0.2in}
 \begin{tabular}{ll}
   \begin{tabular}{l}
     \(
     \begin{aligned}
       \DeclareType{\typestodial{-}&}{\ArrowType{\TTypes}{\UU}} \\
       \NewDefinition{\typestodial{\TNat}&}{\DialogueNN{\Nat}} \\
       \NewDefinition{\typestodial{\sigma_{0}\tto\sigma_{1}}&}{\ArrowType{\typestodial{\sigma_{0}}}{\typestodial{\sigma_{1}}}}
     \end{aligned}
     \)\\ \\
     \(
     \begin{aligned}
       \DeclareType{\ctxstodial{-}&}{\ArrowType{\TCtxs}{\UU}}\\
       \NewDefinition{\ctxstodial{\Gamma}&}{\ArrowType{(x:\sigma)\in\Gamma}{\typestodial{\sigma}}}
     \end{aligned}
     \)
   \end{tabular}
   &
\hspace*{-0.1in}
   \(
   \begin{aligned}
     \DeclareType{\termstodial{-}{-} &}{\ArrowType{\TTerms{\Gamma}{\sigma}}{\ArrowType{\ctxstodial{\Gamma}}{\typestodial{\sigma}}}}\\
     \NewDefinition{\termstodial{x}{\gamma}&}{\AppI{\gamma}{x}}\\
     \NewDefinition{\termstodial{\TZero}{\gamma}&}{\Leaf{\Zero}}\\
     \NewDefinition{\termstodial{\TSucc{t}}{\gamma}&}{\AppI{\Functor{\SuccSym}}{\termstodial{t}{\gamma}}}\\
     \NewDefinition{\termstodial{\TRec{\sigma}{t_1}{t_2}{t_3}}{\gamma}&}{\AppII{\GKleisli{\sigma}}{{\AppII{\RecSym}{\termstodial{t_1}{\gamma}\circ\LeafSym}{\termstodial{t_2}{\gamma}}}}{\termstodial{t_3}{\gamma}}}\\
     \NewDefinition{\termstodial{\TFun{x}{\sigma}{t}}{\gamma}&}{\Fun{x^\prime}{\typestodial{\sigma}}{\termstodial{t}{\gamma, (x \mapsto x^\prime)}}}\\
     \NewDefinition{\termstodial{\TAppI{t_1}{t_2}}{\gamma}&}{\AppI{\termstodial{t_1}{\gamma}}{\termstodial{t_2}{\gamma}}}
   \end{aligned}
   \)
 \end{tabular}
\caption{Dialogue interpretation of \VSystemT{}.}%
\label{fig:dialogue-model}
\end{figure}

For the dialogue interpretation, we will interpret the ground type \(\TNat\) as
the type \(\DialogueNN{\Nat}\) of dialogue trees over the natural numbers, and
the function type \(\tto\) is, as in the set model, interpreted by the
metatheoretic function type \(\mlto\).

\begin{definition}[\AgdaLink{Definition-12a}]\label{def:dialogue_interpretation}
  The \definiendum{dialogue interpretation} of types, contexts and terms of
  \VSystemT{} is defined in \Cref{fig:dialogue-model}. When defining the
  interpretation of terms we let \(\gamma\) range over~\(\ctxstodial{\Gamma}\).
  We fix the notation \(\gamma,(x \mapsto x^\prime)\) to extend an assignment of
  \(\ctxstodial{\Gamma}\) to an assignment
  of~\(\ctxstodial{\Gamma,(x:\sigma)}\).
\end{definition}

This translation differs from the one in~\cite{mhe-effectful-forcing} in two
ways. The first is that we are no longer using a combinator version of
\VSystemT{}. The inclusion of variables aids in internalizing metatheoretic
functions, which we will do frequently in future sections, hence it is desirable
even if it complicates the formalisation. Aside from this, we also do not extend
\VSystemT{} with an oracle, which as we will note later, turns out to be
unnecessary.

The final piece of the puzzle for effectful forcing is the existence of a
generic sequence in this new interpretation. To compute dialogue trees we require the
existence of a \emph{generic sequence}, that is, we need
\(\DeclareType{f}{\ArrowType{\DialogueNN{\Nat}}{\DialogueNN{\Nat}}}\) such that
for all \(\DeclareType{\alpha}{\ArrowType{\Nat}{\Nat}}\), the following commutes
\begin{center}
\begin{tikzcd}
  \DialogueNN{\Nat} \arrow[rr, "f"] \arrow[d, "\DialogueF{-}{\alpha}"']
  && \DialogueNN{\Nat} \arrow[d, "\DialogueF{-}{\alpha}"] \\
  \nat \arrow[rr, "\alpha"']
  && \nat
\end{tikzcd}
\end{center}

\begin{definition}[\AgdaLink{Definition-13}]\label{def:generic}
  The \definiendum{generic sequence} is defined as the following
  Kleisli extension:
  \begin{gather*}
    \DeclareType{\Generic}{\ArrowType{\DialogueNN{\Nat}}{\DialogueNN{\Nat}}}\\
    \NewDefinition{\Generic}{\Kleisli{{\AppI{\BranchSym}{\LeafSym}}}}
  \end{gather*}
\end{definition}

We can show this sequence to indeed satisfy the mentioned commuting diagram. 

\begin{definition}[\AgdaLink{Definition-14}]\label{def:dialogue_tree_operator}
  We define the \definiendum{dialogue tree operator} by
  \begin{gather*}
    \DeclareType%
    {\DialogueTree}%
    {\ArrowType{\TCTerms{\TArrowType{(\TArrowType{\TNat}{\TNat})}{\TNat}}}{\DialogueNN{\Nat}}}\\
    \NewDefinition{\AppI{\DialogueTree}{t}}{\AppI{\termstodial{t}{}}{\Generic}}
  \end{gather*}
\end{definition}

In its original formulation, a new oracle term
\(\ctyped{\Omega}{\TArrowType{\TNat}{\TNat}}\) was added to \VSystemT{}. Under
the dialogue translation, this term was interpreted by \(\Generic\) and given
\(\ctyped{t}{\TArrowType{(\TArrowType{\TNat}{\TNat})}{\TNat}}\) we could compute
dialogue tree for \(t\) by taking the dialogue translation of
\(\TAppI{t}{\Omega}\). This extension turns out to be unnecessary, using instead
the above definition of \(\DialogueTree\) and modifying the logical relation
used to prove correctness.

\begin{definition}[\AgdaLink{Definition-15}]
  Given a \VSystemT{} type \(\sigma\), a sequence
  \(\DeclareType{\alpha}{\ArrowType{\Nat}{\Nat}}\), and
  elements~\(x : \typestoset{\sigma}\) and~\(y : \typestodial{\sigma}\), we
  define the \definiendum{effectful forcing logical relation} by induction
  on~\(\sigma\):
  \begin{mathpar}
    \boxed{\Rdial{\alpha}{x}{y}{\sigma}}

    \inferrule{\IdType{n}{\DialogueF{d}{\alpha}}{}}{\Rdial{\alpha}{n}{d}{\TNat}}

    \inferrule%
    {\forall\,x : \typestoset{\sigma_1},\,%
     \forall\,y : \typestodial{\sigma_1},\,%
      \ArrowType{(\Rdial{\alpha}{x}{y}{\sigma_1})}{(\Rdial{\alpha}{\AppI{f}{x}}{\AppI{g}{y}}{\sigma_2})}}%
    {\Rdial{\alpha}{f}{g}{\sigma_1 \tto \sigma_2}}
  \end{mathpar}
\end{definition}

From this relation's fundamental lemma we can derive the
correctness result from~\cite{mhe-effectful-forcing}.

\begin{theorem}[\AgdaLink{Theorem-16} Correctness of \(\DialogueTree\)]\label{thm:correctness_dialogue_tree}
  For all sequences \(\DeclareType{\alpha}{\ArrowType{\Nat}{\Nat}}\) and closed
  terms \(\ctyped{t}{\TArrowType{(\TArrowType{\TNat}{\TNat})}{\TNat}}\) we have
  \( \AppI{\termstoset{t}{}}{\alpha} = \DialogueF{\AppI{\DialogueTree}{t}}{\alpha} \).
\end{theorem}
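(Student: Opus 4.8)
The plan is to follow the effectful‑forcing recipe: prove a fundamental lemma saying that the set and dialogue interpretations of a \VSystemT{} term are related by $\Rdial{\alpha}{-}{-}{\sigma}$, and then specialise it to the closed term $t$ using the pair $(\alpha,\Generic)$. The groundwork is a few facts about the monad structure of dialogue trees. The key one, proved by induction on dialogue trees, is that $\DialogueFSym$ commutes with Kleisli extension: $\DialogueF{\AppI{\Kleisli{f}}{d}}{\alpha} = \DialogueF{\AppI{f}{\DialogueF{d}{\alpha}}}{\alpha}$ for all $f : \ArrowType{X}{\DialogueNN{Y}}$ and $d : \DialogueNN{X}$ --- the leaf case is definitional, and the branch case applies the induction hypothesis to $\AppI{\phi}{\AppI{\alpha}{i}}$. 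Taking $f := \LeafSym \circ g$ gives that $\DialogueFSym$ commutes with the functorial action, $\DialogueF{\AppI{\Functor{g}}{d}}{\alpha} = \AppI{g}{\DialogueF{d}{\alpha}}$, and taking $f := \AppI{\BranchSym}{\LeafSym}$ gives the commuting square for $\Generic$, namely $\DialogueF{\AppI{\Generic}{d}}{\alpha} = \AppI{\alpha}{\DialogueF{d}{\alpha}}$.

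The heart of the argument is a logical‑relation analogue of Kleisli extension, stated for the \emph{generalised} operator $\GKleisli{\sigma}$: for every \VSystemT{} type $\sigma$, every $g : \ArrowType{\Nat}{\typestoset{\sigma}}$ and $h : \ArrowType{\Nat}{\typestodial{\sigma}}$ with $\Rdial{\alpha}{\AppI{g}{n}}{\AppI{h}{n}}{\sigma}$ for all $n : \Nat$, and every $d : \DialogueNN{\Nat}$, we have $\Rdial{\alpha}{\AppI{g}{\DialogueF{d}{\alpha}}}{\AppII{\GKleisli{\sigma}}{h}{d}}{\sigma}$. This is proved by induction on $\sigma$. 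For $\sigma = \TNat$ it unfolds to $\AppI{g}{\DialogueF{d}{\alpha}} = \DialogueF{\AppI{\Kleisli{h}}{d}}{\alpha}$, which is immediate from the commutation fact above and the hypothesis on $g,h$. For $\sigma = \TArrowType{\sigma_1}{\sigma_2}$ one unfolds the defining equation of $\GKleisli{\sigma_1 \tto \sigma_2}$, takes a related pair $\Rdial{\alpha}{a}{b}{\sigma_1}$, and applies the induction hypothesis at $\sigma_2$ to the families $\UFun{n}{\AppII{g}{n}{a}}$ and $\UFun{x}{\AppII{h}{x}{b}}$, whose pointwise relatedness follows from that of $g,h$ together with $\Rdial{\alpha}{a}{b}{\sigma_1}$.

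With these, the fundamental lemma --- for all $\typed{\Gamma}{t}{\sigma}$ and all pointwise‑related environments $\gamma_{\set} : \ctxstoset{\Gamma}$ and $\gamma_{\mathcal D} : \ctxstodial{\Gamma}$, $\Rdial{\alpha}{\termstoset{t}{\gamma_{\set}}}{\termstodial{t}{\gamma_{\mathcal D}}}{\sigma}$ --- goes through by induction on $t$. Variables use the environment hypothesis; $\TZero$ is the defining equation $\DialogueF{\Leaf{\Zero}}{\alpha} = \Zero$; $\TSucc{-}$ uses functor‑commutation; abstraction and application thread related arguments through the induction hypotheses. The recursor $\TRec{\sigma}{t_1}{t_2}{t_3}$ is where the generalised Kleisli lemma is invoked, with $g := \AppII{\RecSym}{\termstoset{t_1}{\gamma_\set}}{\termstoset{t_2}{\gamma_\set}}$ and $h := \AppII{\RecSym}{\termstodial{t_1}{\gamma_{\mathcal D}} \circ \LeafSym}{\termstodial{t_2}{\gamma_{\mathcal D}}}$: a side induction on the numeral $k$ --- base case from the hypothesis for $t_2$; successor step from the hypotheses for $t_1$ and $t_2$ together with $\Rdial{\alpha}{k}{\Leaf{k}}{\TNat}$ --- establishes $\Rdial{\alpha}{\AppI{g}{k}}{\AppI{h}{k}}{\sigma}$ for all $k$, and then the lemma applied at $d := \termstodial{t_3}{\gamma_{\mathcal D}}$, using $\termstoset{t_3}{\gamma_\set} = \DialogueF{d}{\alpha}$ from the induction hypothesis for $t_3$, yields exactly the relation between the two interpretations of the recursor.

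Finally, applying the fundamental lemma to the closed $t$ with the vacuously related empty environments gives $\Rdial{\alpha}{\termstoset{t}{}}{\termstodial{t}{}}{\TArrowType{(\TArrowType{\TNat}{\TNat})}{\TNat}}$. Feeding it the pair $(\alpha,\Generic)$ requires $\Rdial{\alpha}{\alpha}{\Generic}{\TArrowType{\TNat}{\TNat}}$, i.e. that $n = \DialogueF{d}{\alpha}$ implies $\AppI{\alpha}{n} = \DialogueF{\AppI{\Generic}{d}}{\alpha}$ --- precisely the commuting square for $\Generic$. We thus obtain $\Rdial{\alpha}{\AppI{\termstoset{t}{}}{\alpha}}{\AppI{\termstodial{t}{}}{\Generic}}{\TNat}$, which by the definition of the relation at $\TNat$ and of $\DialogueTree$ (\Cref{def:dialogue_tree_operator}) is exactly $\AppI{\termstoset{t}{}}{\alpha} = \DialogueF{\AppI{\DialogueTree}{t}}{\alpha}$. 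The delicate point is stating the generalised‑Kleisli lemma so that it is at once general enough to survive the induction on $\sigma$ and strong enough to absorb the higher‑order recursor; the remaining cases are routine unfolding.
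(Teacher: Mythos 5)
Your proposal is correct and follows essentially the same route as the paper: the paper derives \Cref{thm:correctness_dialogue_tree} from the fundamental lemma of the logical relation \(\Rdial{\alpha}{\cdot}{\cdot}{\sigma}\), and your reconstruction of that lemma (via the generalised Kleisli lemma for the recursor case) together with the final step of feeding the related pair \((\alpha, \Generic)\) to the closed term is exactly the intended oracle-less argument. The only remark is that the commuting square for \(\Generic\), which you derive as a corollary of Kleisli commutation, is precisely the property the paper states just before \Cref{def:generic}.
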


This result shows that any \VSystemT{}-definable functionals
\(F:\ArrowType{(\ArrowType{\Nat}{\Nat})}{\Nat}\) are dialogue
continuous and hence continuous. Furthermore, by encoding
\(\Bool\) in \VSystemT{}, we also see that any functionals
\(F:\ArrowType{(\ArrowType{\Nat}{\Bool})}{\Nat}\) definable in \VSystemT{}
must be uniformly continuous.

As observed in~\cite{Sterling:jfp:2021,Baillon+Mahboubi+Pedrot:csl:2022}, this
translation is not a model of \VSystemT{} as the dialogue interpretation of
\(\TRecSym\) does not satisfy the usual \VSystemT{} equations. For example, it
does \emph{not} validate the conversion
\(
  \TRec{\TNat}{f}{x}{\TSucc{n}} \equiv \TAppII{f}{n}{\TRec{\TNat}{f}{x}{n}}
\)
under the context
\(
  (f:\TArrowType{\TNat}{\TArrowType{\TNat}{\TNat}}),\,(x:\TNat),\,(n:\TNat)
\)
due to the existence of effectful terms.
As a result, convertible terms may be assigned completely different dialogue
trees, but this doesn't affect the correctness of our claims.



\section{Internalising Dialogue Trees in \VSystemT}%
\label{sec:translation}

We now work on recreating the translation of \Cref{sub:dialogue_translation}
internally to \VSystemT{} using the Church encodings of dialogue trees. The main
difference from the inductive case is that due to the use of Church encodings,
we must now parameterise the translation by a \VSystemT{} type: that is, the
motive for the elimination principle of the Church encoded dialogue trees. For
the purpose of defining moduli of continuity it suffices to use the motive
\(A \IsDefinedToBe \TArrowType{(\TArrowType{\TNat}{\TNat})}{\TNat}\). With this
we could then define the dialogue tree following \Cref{sub:translation} and
compute a modulus of continuity according to \Cref{sec:computing-moduli}. The
insufficiency of considering a single motive reveals itself only when proving
the correctness of the Church encoded translation. For the correctness proof we
require multiple motives, for example \(A \IsDefinedToBe \TNat\) is used in the
\(\TRecSym\) case of \Cref{lem:main_lemma}.

\subsection{Church-Encoded Trees in \VSystemT}%
\label{sub:translation}

For the rest of this subsection we fix a motive \(A : \TTypes\) and give the
corresponding internal dialogue translation. We will also be slightly less
general with the internal definitions of dialogue trees as we will only need to
talk about the internalisation of \(\DialogueNN{\Nat}\).

\begin{definition}[\AgdaLink{Definition-17a}]\label{def:internal_dialogue_trees}
  Given \(\sigma : \TTypes\), we define
  \definiendum{internal \(\sigma\)-dialogue trees} as:
  \begin{gather*}
    \DeclareType{\ChurchIntNN{\sigma}{A}}{\TTypes}\\
    \NewDefinition%
    {\ChurchIntNN{\sigma}{A}}%
    {\TArrowType%
      {(\TArrowType{\sigma}{A})}%
      {\TArrowType%
        {(\TArrowType{(\TArrowType{\TNat}{A})}{\TArrowType{\TNat}{A}})}%
        {A}
      }
    }
  \end{gather*}
  The constructors corresponding to \(\LeafSym\) and \(\BranchSym\) are
  the functions \(\LeafIntSym{A}\) and \(\BranchIntSym{A}\) defined as
  \[
    \begin{array}{l@{\hspace{0.2in}}l}
      \begin{array}{l}
      \DeclareType%
      {\LeafIntSym{A}}%
      {\TCTerms{
          \TArrowType{\sigma}{\ChurchIntNN{\sigma}{A}}
        }}\\
      \NewDefinition
      {\LeafIntSym{A}}%
      {\TUFun{z}{\TUFun{e}{\TUFun{b}{\TAppI{e}{z}}}}}\\
      \end{array} &
      \begin{array}{l}
      \DeclareType%
      {\BranchIntSym{A}}%
      {\TCTerms{
          \TArrowType{(\TArrowType{\TNat}{\ChurchIntNN{\sigma}{A}})}{\TArrowType{\TNat}{\ChurchIntNN{\sigma}{A}}}
        }}\\
      \NewDefinition%
      {\BranchIntSym{A}}%
      {\TUFun{\phi}{\TUFun{x}{\TUFun{e}{\TUFun{b}{\TUFun{y}{\TAppIII{\varphi}{y}{e}{b}}}}}}}
      \end{array}
    \end{array}
  \]
\end{definition}

\begin{definition}[\AgdaLink{Definition-18}]\label{def:internal_kleisli}
  The \definiendum{internal Kleisli extension} is the following closed
  \VSystemT{} term
  \begin{gather*}
  \DeclareType{\TKleisliSym{A}}{\TCTerms{\TArrowType{(\TArrowType{\TNat}{\ChurchIntNN{\TNat}{A}})}{\TArrowType{\ChurchIntNN{\TNat}{A}}{\ChurchIntNN{\TNat}{A}}}}}\\
    \NewDefinition{\TKleisliSym{A}}{%
      \TUFun{f     }{%
      \TUFun{d     }{%
      \TUFun{\eta' }{%
      \TUFun{\beta'}{
        \TAppII{d}{\TFun{x}{\iota}{\TAppIII{f}{x}{\eta'}{\beta'}}}{\beta'}%
    }}}}}
  \end{gather*}
\end{definition}

\begin{definition}[\AgdaLink{Definition-19}]\label{def:internal_functor}
  The \definiendum{internal functor action} is the following
  closed \VSystemT{} term
  \begin{gather*}
    \DeclareType{\TFunctorSym{A}}{\TCTerms{\TArrowType{(\TArrowType{\TNat}{\TNat})}{\TArrowType{\ChurchIntNN{\TNat}{A}}{\ChurchIntNN{\TNat}{A}}}}}\\
    \NewDefinition{\TFunctorSym{A}}{%
    \TUFun{f}{\TKleisli{\TUFun{x}{\LeafInt{\AppI{f}{x}}{A}}}{A}
    }}
  \end{gather*}
\end{definition}

For the following definition we use the translation of types
defined ahead in \Cref{fig:internal-translation}.

\begin{definition}[\AgdaLink{Definition-20}]\label{def:internal_gkleisli}
  The \definiendum{generalised internal Kleisli extension} is a family of closed
  \VSystemT{} terms indexed by \(\sigma : \TTypes\). These are defined
  by induction on the structure of \(\sigma\).
  \begin{gather*}
    \DeclareType{\TGKleisli{\sigma}{A}}%
    {\ArrowType%
      {(\sigma : \TTypes)}%
      {\TCTerms{\TArrowType{(\TArrowType{\TNat}{\typestoint{\sigma}{A}})}{\TArrowType{\ChurchIntNN{\TNat}{A}}{\typestoint{\sigma}{A}}}}}}\\
    \NewDefinition{\TGKleisli{\TNat}{A}}{\TKleisliSym{A}}\\
    \NewDefinition{\TGKleisli{\TArrowType{\sigma_{1}}{\sigma_{2}}}{A}}{%
      \TUFun{f}{%
        \TUFun{d}{%
          \TUFun{s}{%
            \AppII{\TGKleisli{\sigma_{2}}{A}}{\TUFun{x}{\TAppII{f}{x}{s}}}{d}
          }}}}
  \end{gather*}
\end{definition}

As made clear from the type signature, \(\TGKleisli{\sigma}{A}\) depends on the
\VSystemT{} type~\(\sigma\), which is only available in the metatheory. During
the internal translation we will always know what \(\sigma\) at which point
we get a fixed \VSystemT{} term. With this we are now able to define
the internal translation and the associated dialogue tree operators.

\begin{figure}
 \small
 \centering
 \begin{tabular}{m{0.5\textwidth}m{0.4\textwidth}}
   \(
   \begin{aligned}
     \DeclareType{\typestoint{-}{A}&}{\ArrowType{\TTypes}{\TTypes}} \\
     \NewDefinition{\typestoint{\TNat}{A}&}{\ChurchIntNN{\TNat}{A}} \\
     \NewDefinition{\typestoint{\sigma_{0}\tto\sigma_{1}}{A}&}{\TArrowType{\typestoint{\sigma_{0}}{A}}{\typestoint{\sigma_{1}}{A}}}
   \end{aligned}
   \) &
        \(
        \begin{aligned}
          \DeclareType{\ctxstoint{-}{A}&}{\ArrowType{\TCtxs}{\TCtxs}}\\
          \NewDefinition{\ctxstoint{\empctx}{A}&}{\empctx}\\
          \NewDefinition{\ctxstoint{\Gamma,(x:\sigma)}{A}&}{\ctxstoint{\Gamma}{A},(x:\typestoint{\sigma}{A})}
        \end{aligned}
        \)\\\\
   \multicolumn{2}{c}{
   \(
   \begin{aligned}
    \DeclareType{\termstoint{-}{A}&}{\ArrowType{\TTerms{\Gamma}{\sigma}}{\TTerms{\ctxstoint{\Gamma}{A}}{\typestoint{\sigma}{A}}}}\\
    \NewDefinition{\termstoint{x}{A}&}%
      {x}\\
    \NewDefinition{\termstoint{\TZero}{A}&}%
      {\LeafInt{\TZero}{A}}\\
    \NewDefinition{\termstoint{\TSucc{t}}{A}&}%
      {\TAppI{\TFunctor{\TSuccSym}{A}}{\termstoint{t}{A}}}\\
    \NewDefinition{\termstoint{\TRec{\sigma}{t_1}{t_2}{t_3}}{A}&}%
      {\TAppII{\TGKleisli{\sigma}{A}}{\TAppII{\TRecSym_{\sigma}}{\TUFun{x}{\TAppI{\termstoint{t_1}{A}}{\LeafInt{x}{A}}}}{\termstoint{t_2}{A}}}{\termstoint{t_3}{A}}}\\
    \NewDefinition{\termstoint{\TFun{x}{\sigma}{t}}{A}&}%
      {\TFun{x}{\typestoint{\sigma}{A}}{\termstoint{t}{A}}}\\
    \NewDefinition{\termstoint{\TAppI{t_1}{t_2}}{A}&}%
      {\TAppI{\termstoint{t_1}{A}}{\termstoint{t_2}{A}}}
   \end{aligned}
   \)
   }
 \end{tabular}
 \caption{Internal dialogue translation of \VSystemT{} with motive
   \(A : \TTypes\)}%
\label{fig:internal-translation}
\end{figure}

\begin{definition}[\AgdaLink{Definition-21a}]
  The \definiendum{internal dialogue translation} of \VSystemT{} is defined in
  \Cref{fig:internal-translation}.
\end{definition}

\begin{definition}[\AgdaLink{Definition-22}]\label{def:internal_generic}
  The \definiendum{internal generic sequence} is given by the term
 \begin{gather*}
   \DeclareType{\TGeneric{A}}{\TCTerms{\TArrowType{\ChurchIntNN{\TNat}{A}}{\ChurchIntNN{\TNat}{A}}}}\\
   \NewDefinition{\TGeneric{A}}{\TKleisli{\TAppI{\BranchIntSym{A}}{\LeafIntSym{A}}}{A}}
 \end{gather*}
\end{definition}

\begin{definition}[\AgdaLink{Definition-23}]\label{def:internal_dialogue_tree_operator}
  The \definiendum{internal dialogue tree operator} is the metatheoretic function
  \begin{gather*}
    \DeclareType{\TDialogueTreeSym{A}}{\ArrowType{\TCTerms{\TArrowType{(\TArrowType{\TNat}{\TNat})}{\TNat}}}{\TCTerms{\ChurchIntNN{\TNat}{A}}}}\\
    \NewDefinition{\TDialogueTree{t}{A}}{\TAppI{\termstoint{t}{A}}{\TGeneric{A}}}
  \end{gather*}
\end{definition}

As made explicit in the type signature, this operator lives in the metatheory but for any term
\(\ctyped{t}{\TArrowType{(\TArrowType{\TNat}{\TNat})}{\TNat}}\)
gives another term \(\ctyped{\TDialogueTree{t}{A}}{\ChurchIntNN{\TNat}{A}}\).
Of course, if we could fully internalise this dialogue operator then
we could further implement a moduli of continuity operator inside
\VSystemT, which as we discussed, is impossible.

\subsection{Avoiding Function Extensionality}%
\label{sub:avoiding_funext}

Without functional extensionality in the metatheory, it is not provable that the
metatheoretical equality and extensional equality of functions coincide. The
correctness results we are interested, e.g.\
\Cref{thm:correctness_internal_dialogue_tree}, are about equalities of natural
numbers, so intuitively they should not rely crucially on function
extensionality. It is not so simple however, for in the proofs of said
results we will quickly encounter cases where we must reason about
equality of higher-order functions. The solution is to use
hereditarily extensional equality~\cite{Troelstra:1973} instead of
the metatheoretic or general extensional equality.

\begin{definition}[\AgdaLink{Definition-24}]\label{defn:her-ext-equality}
  Given a \VSystemT{} type~\(\sigma\) and elements \(x,y:\typestoset{\sigma}\),
  we define the \definiendum{hereditarily extensionally equality} by induction
  on~\(\sigma\):
  \begin{mathpar}
    \boxed{\HEEquals{x}{y}{\sigma}}

    \inferrule{\IdType{n}{m}{\nat}}{\HEEquals{n}{m}{\TNat}}

    \inferrule%
    {\forall\,x, y : \typestoset{\sigma_1},\,\ArrowType{\HEEquals{x}{y}{\sigma_1}}{\HEEquals{\AppI{f}{x}}{\AppI{g}{y}}{\sigma_2}}}%
    {\HEEquals{f}{g}{\sigma_1 \tto \sigma_2}}
  \end{mathpar}
\end{definition}

When clear from context we omit the type annotation from the relation symbol. At
\(\TNat\), we can see that hereditarily extensional equality coincides with with
equality of natural numbers, and at \(\TNat \tto \TNat\) it coincides with
extensional equality of functions \(\ArrowType{\nat}{\nat}\). As we look at
higher types it diverges from extensional equality of functions and in general
we will not be able to prove it reflexive. Fortunately, all functions we need
are provably well-behaved.

\begin{lemma}[\AgdaLink{Lemma-25a}]\label{lem:hee_reflexive}
  For all \VSystemT{} types \(\sigma\), the binary relation
  \(\HEEquals{}{}{\sigma}\) is symmetric and transitive. Furthermore, if
  \(\sigma\) is of the shape
  \(\sigma \vcentcolon\vcentcolon= \TNat\ |\ \TArrowType{\TNat}{\sigma}\) then
  it is also reflexive.
\end{lemma}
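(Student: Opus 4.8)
The plan is to prove all three properties — symmetry, transitivity, and (for the restricted syntactic class) reflexivity — simultaneously by induction on the \VSystemT{} type $\sigma$. The induction is needed because the relation $\HEEquals{}{}{\sigma}$ at a function type $\TArrowType{\sigma_1}{\sigma_2}$ is defined in terms of the relation at $\sigma_1$ and $\sigma_2$, so each property at a compound type reduces to the corresponding properties at the component types — except that, as the statement warns, reflexivity does not follow in general, and the argument for it will only go through when the domain side is controlled.

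\textbf{Symmetry and transitivity.} At the base type $\TNat$, both are immediate since $\HEEquals{n}{m}{\TNat}$ unfolds to $\IdType{n}{m}{\nat}$, and propositional equality on $\nat$ is symmetric and transitive. At a function type $\TArrowType{\sigma_1}{\sigma_2}$, suppose $\HEEquals{f}{g}{\sigma_1 \tto \sigma_2}$; to show $\HEEquals{g}{f}{\sigma_1 \tto \sigma_2}$ we take $x,y : \typestoset{\sigma_1}$ with $\HEEquals{x}{y}{\sigma_1}$, apply symmetry at $\sigma_1$ (induction hypothesis) to get $\HEEquals{y}{x}{\sigma_1}$, feed this into the hypothesis on $f,g$ to obtain $\HEEquals{\AppI{f}{y}}{\AppI{g}{x}}{\sigma_2}$, and finish with symmetry at $\sigma_2$. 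Transitivity at $\TArrowType{\sigma_1}{\sigma_2}$ is the slightly more delicate clause: given $\HEEquals{f}{g}{\sigma_1\tto\sigma_2}$ and $\HEEquals{g}{h}{\sigma_1\tto\sigma_2}$ and $\HEEquals{x}{y}{\sigma_1}$, one wants $\HEEquals{\AppI{f}{x}}{\AppI{h}{y}}{\sigma_2}$; the trick is to route through $\AppI{g}{y}$ (or $\AppI{g}{x}$), using the first hypothesis on the pair $(x,y)$ and the second on the pair $(y,y)$ — and for the latter we need $\HEEquals{y}{y}{\sigma_1}$, which is \emph{not} available in general. Here is where one must instead use symmetry and transitivity at $\sigma_1$ to derive $\HEEquals{y}{y}{\sigma_1}$ \emph{from} the given $\HEEquals{x}{y}{\sigma_1}$ (symmetrize to $\HEEquals{y}{x}{\sigma_1}$, then compose), so no reflexivity at $\sigma_1$ is needed — only the inductive symmetry and transitivity, which are exactly what we are proving. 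This is the one subtlety worth getting right, and it is the reason the three properties must be bundled into a single simultaneous induction rather than proved one after another.

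\textbf{Reflexivity on the restricted class.} The grammar $\sigma \vcentcolon\vcentcolon= \TNat \mid \TArrowType{\TNat}{\sigma}$ describes first-order types whose arguments are all of type $\TNat$. We prove $\HEEquals{x}{x}{\sigma}$ for such $\sigma$ by a secondary induction on this grammar. The base case $\TNat$ is just reflexivity of $\IdTy{}{}$ on $\nat$. For $\TArrowType{\TNat}{\sigma}$, given $f$ we must show that whenever $\HEEquals{x}{y}{\TNat}$ — i.e. $\IdType{x}{y}{\nat}$ — we have $\HEEquals{\AppI{f}{x}}{\AppI{f}{y}}{\sigma}$; but $\IdType{x}{y}{\nat}$ lets us transport, so it suffices to show $\HEEquals{\AppI{f}{x}}{\AppI{f}{x}}{\sigma}$, which is reflexivity at $\sigma$ from the inner induction hypothesis. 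Crucially this argument only works because the domain is $\TNat$, where $\HEEquals{}{}{}$ coincides with a genuine (substitutive) equality; at a higher-order domain there is no way to turn $\HEEquals{x}{y}{}$ into something one can transport along, which is precisely why reflexivity fails in general and why the hypothesis restricts the domains to $\TNat$. The formalisation pointer \AgdaLink{Lemma-25a} presumably bundles exactly this simultaneous induction.

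**The main obstacle** is the transitivity clause at function types, specifically spotting that the ``middle point'' application $\AppI{g}{y}$ forces a self-relatedness obligation $\HEEquals{y}{y}{\sigma_1}$ which cannot be discharged by reflexivity (unavailable) but \emph{can} be discharged from the ambient hypothesis $\HEEquals{x}{y}{\sigma_1}$ using the inductive symmetry and transitivity at $\sigma_1$. Everything else — the $\TNat$ base cases, symmetry, and the restricted reflexivity — is routine once the statements are correctly set up as one joint induction.
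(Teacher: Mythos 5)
Your proof is correct and is the standard partial-equivalence-relation argument that the paper relies on (the paper states this lemma without a written proof, deferring to the Agda formalisation, which proceeds exactly as you describe). In particular you correctly identify the key point: in the transitivity case at \(\TArrowType{\sigma_1}{\sigma_2}\) the needed self-relatedness \(\HEEquals{y}{y}{\sigma_1}\) is derived from \(\HEEquals{x}{y}{\sigma_1}\) via symmetry and transitivity at \(\sigma_1\) rather than via reflexivity, and reflexivity at \(\TArrowType{\TNat}{\sigma}\) works only because \(\HEEqualsSymbol_{\TNat}\) is a substitutive equality one can transport along.
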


\begin{lemma}[\AgdaLink{Lemma-26}]\label{lem:t-definable-reflexive}
  For all closed \(\ctyped{t}{\sigma}\), we have that
  \(\HEEquals{\termstoset{t}{}}{\termstoset{t}{}}{\sigma}\).
\end{lemma}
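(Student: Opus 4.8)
The bare statement is not directly amenable to induction, because subterms live under binders and hence in non-empty contexts. The plan is therefore to prove the standard \emph{fundamental-lemma} strengthening: for every context $\Gamma$, type $\sigma$, term $\typed{\Gamma}{t}{\sigma}$, and pair of environments $\gamma, \gamma' : \ctxstoset{\Gamma}$ that \emph{agree hereditarily}, i.e.\ satisfy $\HEEquals{\gamma(x)}{\gamma'(x)}{\tau}$ for every $(x:\tau)\in\Gamma$, one has $\HEEquals{\termstoset{t}{\gamma}}{\termstoset{t}{\gamma'}}{\sigma}$. \Cref{lem:t-definable-reflexive} is the instance $\Gamma \IsDefinedToBe \empctx$, since the unique empty environment agrees hereditarily with itself vacuously.

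The strengthened claim is proved by induction on the derivation of $\typed{\Gamma}{t}{\sigma}$ following \Cref{fig:systemT-syntax}. For a variable $x$, the two interpretations are $\gamma(x)$ and $\gamma'(x)$, related by the hypothesis on the environments. For $\TZero$ and $\TSucc{t}$ the relation at $\TNat$ is by definition equality of natural numbers, so these cases reduce to reflexivity of equality on $\Nat$ and to congruence of the successor (using the induction hypothesis for $t$ in the second case). For a lambda $\typed{\Gamma}{\TFun{x}{\sigma_1}{t}}{\TArrowType{\sigma_1}{\sigma_2}}$, unfolding the function-type clause of $\HEEquals{}{}{}$ we are given $\HEEquals{a}{b}{\sigma_1}$ and must relate $\termstoset{t}{\gamma,(x\mapsto a)}$ and $\termstoset{t}{\gamma',(x\mapsto b)}$; the extended environments still agree hereditarily, so this is exactly the induction hypothesis for $t$. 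For an application $\TAppI{t_1}{t_2}$ we apply the induction hypotheses for $t_1$ and $t_2$ and then read the conclusion off the function-type clause of $\HEEquals{}{}{}$. Finally, for a recursor $\typed{\Gamma}{\TRec{\sigma}{t_1}{t_2}{t_3}}{\sigma}$, the induction hypothesis for $t_3$ gives $\termstoset{t_3}{\gamma} = \termstoset{t_3}{\gamma'}$ in $\Nat$, and transporting along this equality we may treat both as a single $k : \Nat$; we then run a secondary induction on $k$, whose base case is the induction hypothesis for $t_2$ and whose step case combines the induction hypothesis for $t_1$ applied to $k-1$ — self-related via reflexivity of $\HEEquals{}{}{\TNat}$ — with the secondary induction hypothesis, chaining the function-type clause of $\HEEquals{}{}{}$ twice to match the defining equation $\termstoset{\TRec{\sigma}{t_1}{t_2}{t_3}}{\gamma} = \Rec{\termstoset{t_1}{\gamma}}{\termstoset{t_2}{\gamma}}{\termstoset{t_3}{\gamma}}$.

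The main obstacle is the recursor case: it is the only place where the two nested inductions interact, and it is tempting to invoke reflexivity of $\HEEquals{}{}{}$, which fails at higher types in general (\Cref{lem:hee_reflexive}). The key observation is that reflexivity is only ever needed at $\TNat$ — where it does hold — since the argument fed to the recursor's step function is a concrete natural number; every other ingredient comes from an induction hypothesis. No appeal to function extensionality is made anywhere (indeed symmetry and transitivity of \Cref{lem:hee_reflexive} are not needed for this particular lemma), so the argument is fully constructive as required.
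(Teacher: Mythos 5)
Your proof is correct and is exactly the standard argument: the paper states this lemma without an explicit proof, but the underlying formalisation establishes it via the same fundamental-lemma strengthening over hereditarily-related environments, with reflexivity of $\HEEqualsSymbol$ invoked only at $\TNat$ in the recursor case. Nothing to add.
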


\subsection{Correctness of the Syntactic Translation}%
\label{sub:translation_correctness}

For the correctness of the internal dialogue translation we expect it to agree
with the more familiar inductive dialogue translation, meaning we must somehow
equate inductive dialogue trees with their Church-encodings. We can already
turn \VSystemT{} dialogue trees into metatheoretic Church-encoded trees
with the set semantics \(\termstoset{-}{}\). For the inductive trees,
it suffices to use the following encoding function, from which we define
the correctness relation.

\begin{definition}[\AgdaLink{Definition-27}]\label{def:encode}
  Given a type \(A : \TTypes\) we define the following encode function
  by induction
  \begin{gather*}
    \DeclareType%
    {\EncodeExtSym{A}}%
    {\ArrowType%
      {\DialogueNN{\Nat}}%
      {\typestoset{\ChurchIntNN{\TNat}{A}}}
    }\\
    \NewDefinition%
    {\EncodeExt{\Leaf{z}}{A}}%
    {\AppI{\termstoset{\LeafIntSym{A}}{}}{z}}\\
    \NewDefinition%
    {\EncodeExt{\Branch{\phi}{x}}{A}}%
    {\AppII{\termstoset{\BranchIntSym{A}}{}}{\EncodeExtSym{A} \circ \phi}{x}}
  \end{gather*}
\end{definition}

\begin{definition}[\AgdaLink{Definition-28} Dialogue Correctness Logical Relation]
  \label{def:internal-logical-relation}
  Given \(\sigma : \TTypes\), \(x : \typestodial{\sigma}\) and a
  family of closed terms
  \(y : \ArrowType{(A : \TTypes)}{\TCTerms{\typestoint{\sigma}{A}}}\),
  we define the \definiendum{dialogue correctness logical relation} by induction
  on~\(\sigma\):
  \begin{mathpar}
    \boxed{\Rnorm{x}{y}{\sigma}}

    \inferrule{\forall\,A : \TTypes,\,\HEEquals{\EncodeExt{d}{A}}{t_A}{\TNat} }{\Rnorm{d}{t}{\TNat}}

    \inferrule%
    {\forall\,x : \typestodial{\sigma_1},\,%
     \forall\,y : \ArrowType{(A : \TTypes)}{\TCTerms{\typestoint{\sigma_2}{A}}},\,
      \ArrowType{\Rnorm{x}{y}{\sigma_1}}{\Rnorm{\AppI{f}{x}}{\Fam{\TAppI{g_A}{y_A}}{A:\TTypes}}{\sigma_2}}}%
    {\Rnorm{f}{g}{\sigma_1 \tto \sigma_2}}
  \end{mathpar}
  We extend this relation to act pointwise on contexts \(\Gamma\) and
  denote this extension \(\Rnorm{\gamma_{1}}{\gamma_{2}}{\Gamma}\).\todo{maybe should define this properly}
\end{definition}

For a number of the proofs it will be important that normalisation preserves this
relation. For our purposes it suffices to compute in the Set model,
once again avoiding conversion rules.

\begin{lemma}[\AgdaLink{Lemma-29}]\label{lem:Rnorm_respects_conversion}
  Fix two family of terms
  \(t,s : \ArrowType{(A:\TTypes)}{\TCTerms{\typestoint{\sigma}{A}}}\)
  and~\(x : \typestodial{\sigma}{}\). If for all \(A : \TTypes\) we have
  \(\HEEquals{\termstoset{t_{A}}{}}{\termstoset{s_{A}}{}}{\typestoint{\sigma}{A}}\)
  then \(\Rnorm{x}{t}{\sigma}\) implies \(\Rnorm{x}{s}{\sigma}\).
\end{lemma}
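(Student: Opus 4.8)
The plan is an induction on the \VSystemT{} type $\sigma$, using that the ground clause of $\Rnorm{\cdot}{\cdot}{\cdot}$ is already phrased in terms of hereditarily extensional equality, which is transitive by \Cref{lem:hee_reflexive}.

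\emph{Base case $\sigma = \TNat$.} Unfolding \Cref{def:internal-logical-relation}, the assumption $\Rnorm{x}{t}{\TNat}$ says that for every $A : \TTypes$ the encoded tree $\EncodeExt{x}{A}$ is hereditarily extensionally equal to the set interpretation of $t_A$, and the hypothesis of the lemma says $\termstoset{t_A}{}$ and $\termstoset{s_A}{}$ are hereditarily extensionally equal at $\typestoint{\TNat}{A} = \ChurchIntNN{\TNat}{A}$. Transitivity of $\HEEqualsSymbol$ (\Cref{lem:hee_reflexive}) then gives $\HEEquals{\EncodeExt{x}{A}}{\termstoset{s_A}{}}{\ChurchIntNN{\TNat}{A}}$ for all $A$, i.e.\ $\Rnorm{x}{s}{\TNat}$.

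\emph{Inductive step $\sigma = \sigma_1 \tto \sigma_2$.} Now $x : \typestodial{\sigma_1 \tto \sigma_2}$ is a metatheoretic function and $t, s$ are families of closed \VSystemT{} terms of function type. Fix $x' : \typestodial{\sigma_1}$ and a family $y : \ArrowType{(A : \TTypes)}{\TCTerms{\typestoint{\sigma_1}{A}}}$ with $\Rnorm{x'}{y}{\sigma_1}$; applying $\Rnorm{x}{t}{\sigma_1 \tto \sigma_2}$ yields $\Rnorm{\AppI{x}{x'}}{\Fam{\TAppI{t_A}{y_A}}{A : \TTypes}}{\sigma_2}$. By the induction hypothesis at $\sigma_2$, it is enough to check that for each $A$ the terms $\TAppI{t_A}{y_A}$ and $\TAppI{s_A}{y_A}$ have hereditarily extensionally equal set interpretations at $\typestoint{\sigma_2}{A}$; from this we get $\Rnorm{\AppI{x}{x'}}{\Fam{\TAppI{s_A}{y_A}}{A : \TTypes}}{\sigma_2}$, and since $x'$ and $y$ were arbitrary, $\Rnorm{x}{s}{\sigma_1 \tto \sigma_2}$. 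Since the set interpretation commutes with application, $\termstoset{\TAppI{t_A}{y_A}}{} = \AppI{\termstoset{t_A}{}}{\termstoset{y_A}{}}$ and likewise for $s$, while $\typestoint{\sigma_1 \tto \sigma_2}{A} = \TArrowType{\typestoint{\sigma_1}{A}}{\typestoint{\sigma_2}{A}}$, so the lemma hypothesis $\HEEquals{\termstoset{t_A}{}}{\termstoset{s_A}{}}{\typestoint{\sigma}{A}}$ unfolds to the statement that applying these two functions to hereditarily-equal arguments yields hereditarily-equal results.

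The only non-routine point — and the reason \Cref{lem:t-definable-reflexive} is invoked — is that to instantiate this at the argument $\termstoset{y_A}{}$ on both sides we need $\HEEquals{\termstoset{y_A}{}}{\termstoset{y_A}{}}{\typestoint{\sigma_1}{A}}$, i.e.\ reflexivity of $\HEEqualsSymbol$ at a type that may be higher-order, where \Cref{lem:hee_reflexive} does not supply it in general. This is exactly where we use that each $y_A$ is a \emph{closed \VSystemT{} term}: \Cref{lem:t-definable-reflexive} gives $\HEEquals{\termstoset{y_A}{}}{\termstoset{y_A}{}}{\typestoint{\sigma_1}{A}}$. With that in hand the chain closes and the induction is complete. (A minor bookkeeping subtlety is reading every bare $t_A$ occurring in \Cref{def:internal-logical-relation} as $\termstoset{t_A}{}$; once this is done, the argument is as above.)
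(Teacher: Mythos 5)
Your proof is correct and follows essentially the same route as the paper's: induction on $\sigma$, with transitivity of $\HEEqualsSymbol$ at the base type and a pointwise application of the inductive hypothesis at function types. The extra step you identify --- invoking \Cref{lem:t-definable-reflexive} to obtain $\HEEquals{\termstoset{y_A}{}}{\termstoset{y_A}{}}{\typestoint{\sigma_1}{A}}$ so that the hypothesis on $t_A$ and $s_A$ can be instantiated at the argument $\termstoset{y_A}{}$ --- is a genuine detail that the paper's one-line ``apply the inductive hypothesis pointwise'' leaves implicit, and you are right that it is exactly where the closedness of the terms $y_A$ is needed.
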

\begin{proof}
  We proceed by induction on the \VSystemT{} type \(\sigma\). If \(\sigma\) is
  of the form \(\TNat\) then the result follows by symmetry and transitivity of
  \(\HEEqualsSymbol\). If \(\sigma\) is a function type then we apply the inductive
  hypothesis pointwise.
\end{proof}

With this we can now prove some lemmas about the monadic operations of internal
dialogue trees. These will be used in the proof of
the fundamental lemma, that is in~\Cref{lem:main_lemma}.

\begin{lemma}[\AgdaLink{Lemma-30}]\label{lem:church_encode_kleisli}
  Given a type \(A : \TTypes\), a dialogue tree \(\DeclareType{d}{\typestodial{\TNat}}\),
  and functions
  \(\DeclareType{f_1}{\ArrowType{\Nat}{\typestodial{\TNat}}}\) and
  \(\DeclareType{f_2}{\ArrowType{\Nat}{\typestoset{\ChurchIntNN{\TNat}{A}}}}\),
  if for all \(\DeclareType{n}{\Nat}\) we have
  \(
    \HEEquals%
    {\EncodeExt{\AppI{f_1}{n}}{A}}%
    {\AppI{f_2}{n}}%
    {\ChurchIntNN{\TNat}{A}}
  \)
  then we have in addition
  \(
    \HEEquals%
    {\EncodeExt{\AppI{\Kleisli{f_1}}{d}}{A}}%
    {\AppII{\termstoset{\TKleisliSym{A}}{}}{f_2}{\EncodeExt{d}{A}}}%
    {\ChurchIntNN{\TNat}{A}}
  \).
\end{lemma}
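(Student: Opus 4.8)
The plan is to induct on the dialogue tree $d : \DialogueNN{\Nat}$, holding the functions $f_1$, $f_2$ and the hypothesis relating them fixed. The observation driving every case is that, by \Cref{def:internal_dialogue_trees} and \Cref{defn:her-ext-equality}, a proof of $\HEEquals{u}{v}{\ChurchIntNN{\TNat}{A}}$ is exactly a map sending hereditarily-extensionally-equal arguments $e,e' : \typestoset{\TArrowType{\TNat}{A}}$ and $b,b' : \typestoset{\TArrowType{(\TArrowType{\TNat}{A})}{\TArrowType{\TNat}{A}}}$ to a proof of $\HEEquals{\AppII{u}{e}{b}}{\AppII{v}{e'}{b'}}{A}$. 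So in each case we unfold the set-semantics of the closed \VSystemT{} terms involved ($\TKleisliSym{A}$ from \Cref{def:internal_kleisli}, and $\LeafIntSym{A}$, $\BranchIntSym{A}$ from \Cref{def:internal_dialogue_trees}), $\beta$-reduce, feed in the fresh arguments $e\approx e'$, $b\approx b'$, and compare the result with the hypothesis or the induction hypothesis.

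For the base case $d = \Leaf{z}$: by \Cref{def:kleisli_ext} the left-hand side is $\EncodeExt{\AppI{f_1}{z}}{A}$, and unfolding $\termstoset{\TKleisliSym{A}}{}$ together with $\EncodeExt{\Leaf{z}}{A} = \AppI{\termstoset{\LeafIntSym{A}}{}}{z}$ and $\beta$-reducing shows the right-hand side equals $\UFun{e}{\UFun{b}{\AppII{\AppI{f_2}{z}}{e}{b}}}$. After supplying $e\approx e'$, $b\approx b'$ and $\beta$-reducing the right side, the goal becomes $\HEEquals{\AppII{\EncodeExt{\AppI{f_1}{z}}{A}}{e}{b}}{\AppII{\AppI{f_2}{z}}{e'}{b'}}{A}$, which is precisely the lemma's hypothesis instantiated at $n = z$ and unfolded at $e\approx e'$, $b\approx b'$. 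No appeal to reflexivity at higher types is needed here.

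For the branch case $d = \Branch{\phi}{i}$: by \Cref{def:kleisli_ext}, $\AppI{\Kleisli{f_1}}{\Branch{\phi}{i}} = \Branch{\UFun{n}{\AppI{\Kleisli{f_1}}{\AppI{\phi}{n}}}}{i}$, so by \Cref{def:encode} the left-hand side is $\termstoset{\BranchIntSym{A}}{}$ applied to $\EncodeExtSym{A}\circ(\UFun{n}{\AppI{\Kleisli{f_1}}{\AppI{\phi}{n}}})$ and $i$, while the right-hand side is $\termstoset{\TKleisliSym{A}}{}$ applied to $f_2$ and to $\termstoset{\BranchIntSym{A}}{}$ applied to $\EncodeExtSym{A}\circ\phi$ and $i$. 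Unfolding the semantics of $\BranchIntSym{A}$ and $\TKleisliSym{A}$, $\beta$-reducing, and supplying $e\approx e'$, $b\approx b'$, both sides turn into applications of $b$, respectively $b'$, to $i$ and to a function $\Nat\to\typestoset{A}$. Using that $b\approx b'$ (and reflexivity of $\approx$ at $\TNat$ for the query $i$), it remains to show the two functions are hereditarily extensionally equal at $\TArrowType{\TNat}{A}$; evaluated at any $n$, and after recognising that $\AppII{\EncodeExt{\AppI{\phi}{n}}{A}}{(\UFun{x}{\AppII{\AppI{f_2}{x}}{e'}{b'}})}{b'}$ is just $\termstoset{\TKleisliSym{A}}{}$ applied to $f_2$ and $\EncodeExt{\AppI{\phi}{n}}{A}$ and then evaluated at $e'$ and $b'$, this reduces exactly to the induction hypothesis for the subtree $\AppI{\phi}{n}$, unfolded at $e\approx e'$, $b\approx b'$.

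The only real difficulty is keeping the bookkeeping straight: since $\HEEqualsSymbol$ is not provably reflexive at arbitrary types (cf.\ \Cref{lem:hee_reflexive}), every comparison must be routed through the relation itself rather than through metatheoretic equality, and one must avoid silently replacing a subterm by a $\beta$-convertible one on a single side in a way that breaks the syntactic match with the (unfolded) induction hypothesis. Both concerns are resolved by the observation above: the definition of $\approx$ at $\ChurchIntNN{\TNat}{A}$ already hands us the hereditarily-equal arguments $e\approx e'$ and $b\approx b'$, so instantiating the induction hypothesis at exactly those arguments delivers the pointwise equalities needed under the branch node, and reflexivity is invoked only at the base type $\TNat$.
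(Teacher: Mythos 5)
Your proof is correct and follows essentially the same route as the paper's: induction on the dialogue tree $d$, with the leaf case discharged by the hypothesis at $n=z$ and the branch case by applying the induction hypothesis to each subtree $\AppI{\phi}{n}$. The paper's version is just far terser, leaving implicit the unfolding of $\approx$ at $\ChurchIntNN{\TNat}{A}$ and the careful handling of the arguments $e\approx e'$, $b\approx b'$ that you spell out.
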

\begin{proof}
  We proceed by induction on the dialogue tree \(d\). In the case that \(d\) is
  of the form \(\Leaf{n}\), then the left-hand-side computes to
  \(\EncodeExt{\AppI{f_1}{n}}{A}\) and the right side computes to
  \(\termstoset{\TAppI{f_2}{\TNumeral{n}}}{}\), but these are equal by our
  assumption. In the case that \(d\) is of the form \(\Branch{\phi}{i}\), then
  we can apply the inductive hypothesis to each subtree \(\AppI{\phi}{n}\).
\end{proof}

\begin{corollary}[\AgdaLink{Corollary-31}]\label{lem:church_encode_natural}
  Given a type \(A : \TTypes\), a dialogue tree \(d : \DialogueNN{\Nat}\) and
  functions \(\DeclareType{g_1,g_2}{\ArrowType{\Nat}{\Nat}}\), if
  \(\HEEquals{g_1}{g_2}{}\) then
  \(
    \HEEquals%
    {\EncodeExt{\AppI{\Functor{g_2}}{d}}{A}}%
    {\AppII{\termstoset{\TFunctorSym{A}}{}}{g_1}{\EncodeExt{d}}{A}}%
    {\ChurchIntNN{\TNat}{A}}
  \).
\end{corollary}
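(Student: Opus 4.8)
The plan is to obtain this as a direct instance of \Cref{lem:church_encode_kleisli}. First I would rewrite both sides in terms of Kleisli extension: by \Cref{def:functor}, $\Functor{g_2} = \Kleisli{\LeafSym \circ g_2}$, while unfolding \Cref{def:internal_functor} and pushing the set interpretation through the $\lambdabar$-abstraction and application gives
\[
  \termstoset{\TFunctorSym{A}}{}\,g_1 \;=\; \termstoset{\TKleisliSym{A}}{}\,\bigl(\UFun{n}{\AppI{\termstoset{\LeafIntSym{A}}{}}{\AppI{g_1}{n}}}\bigr).
\]
Thus the statement to prove is precisely the conclusion of \Cref{lem:church_encode_kleisli} instantiated with $f_1 \IsDefinedToBe \LeafSym \circ g_2$ (so that $\AppI{f_1}{n} = \Leaf{\AppI{g_2}{n}}$) and $f_2 \IsDefinedToBe \UFun{n}{\AppI{\termstoset{\LeafIntSym{A}}{}}{\AppI{g_1}{n}}}$, and it remains only to discharge that lemma's hypothesis.

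That hypothesis asks that $\HEEquals{\EncodeExt{\AppI{f_1}{n}}{A}}{\AppI{f_2}{n}}{\ChurchIntNN{\TNat}{A}}$ for every $n : \Nat$. By \Cref{def:encode} we have $\EncodeExt{\AppI{f_1}{n}}{A} = \EncodeExt{\Leaf{\AppI{g_2}{n}}}{A} = \AppI{\termstoset{\LeafIntSym{A}}{}}{\AppI{g_2}{n}}$, so what is needed is $\HEEquals{\AppI{\termstoset{\LeafIntSym{A}}{}}{\AppI{g_2}{n}}}{\AppI{\termstoset{\LeafIntSym{A}}{}}{\AppI{g_1}{n}}}{\ChurchIntNN{\TNat}{A}}$. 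From the assumption $\HEEquals{g_1}{g_2}{}$ one extracts $\AppI{g_1}{n} = \AppI{g_2}{n}$, hence $\HEEquals{\AppI{g_2}{n}}{\AppI{g_1}{n}}{\TNat}$; since $\LeafIntSym{A}$ is a closed \VSystemT{} term (of type $\TArrowType{\TNat}{\ChurchIntNN{\TNat}{A}}$), \Cref{lem:t-definable-reflexive} gives $\HEEquals{\termstoset{\LeafIntSym{A}}{}}{\termstoset{\LeafIntSym{A}}{}}{\TArrowType{\TNat}{\ChurchIntNN{\TNat}{A}}}$, and feeding this the equality $\HEEquals{\AppI{g_2}{n}}{\AppI{g_1}{n}}{\TNat}$ yields the claim in exactly the orientation \Cref{lem:church_encode_kleisli} expects.

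The only delicate point — more a matter of care than a genuine obstacle — is this last step: one cannot simply invoke reflexivity of $\HEEqualsSymbol$ at $\ChurchIntNN{\TNat}{A}$, since that type is not of the restricted form for which \Cref{lem:hee_reflexive} supplies reflexivity, so the argument has to route through the \VSystemT{}-definability of $\LeafIntSym{A}$. Everything else is bookkeeping: matching the unfolded forms of $\FunctorSym$ and $\TFunctorSym{A}$ so that \Cref{lem:church_encode_kleisli} applies verbatim, and keeping the orientation of $\HEEqualsSymbol$ consistent with that lemma's hypothesis format.
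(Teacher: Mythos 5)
Your proof is correct and takes essentially the same route as the paper: the paper's entire proof is the one-line instruction to instantiate \Cref{lem:church_encode_kleisli} with \(\LeafSym\) composed with one of the \(g_i\) and \(\termstoset{\LeafIntSym{A}}{}\) composed with the other, and you give exactly that instantiation plus the routine discharge of its hypothesis. Your observation that the hypothesis must be routed through \Cref{lem:t-definable-reflexive} applied to \(\LeafIntSym{A}\) (since reflexivity of \(\HEEqualsSymbol\) is unavailable at \(\ChurchIntNN{\TNat}{A}\)) is the right way to fill in the detail the paper leaves implicit, and your choice of \(f_1 \IsDefinedToBe \LeafSym \circ g_2\), \(f_2 \IsDefinedToBe \termstoset{\LeafIntSym{A}}{} \circ g_1\) actually matches the orientation of the stated corollary more carefully than the paper's own phrasing, which swaps the indices.
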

\begin{proof}
  Instantiate \Cref{lem:church_encode_kleisli} with \(\LeafSym \circ g_{1}\)
  and \(\termstoset{\LeafIntSym{A}}{} \circ g_{2}\).
\end{proof}

\begin{lemma}[\AgdaLink{Lemma-32} Kleisli Lemma]\label{lem:kleisli_lemma}
  Fix a function
  \(
  \DeclareType{f}{\ArrowType{\Nat}{\typestodial{\sigma}}}
  \),
  a dialogue tree
  \(
  \DeclareType{n}{\DialogueNN{\Nat}}
  \),
  and families of terms
  \(
  \DeclareType%
  {g}%
  {\ArrowType{(A:\TTypes)}{\TCTerms{\typestoint{\sigma}{A}}}}
  \)
  and
  \(
  \DeclareType%
  {m}%
  {\ArrowType{(A:\TTypes)}{\TCTerms{\ChurchIntNN{\TNat}{A}}}}
  \).
  If
  \(
  \Rnorm{n}{m}{\TNat}
  \)
  and
  \(
  \forall i:\Nat, \Rnorm{\AppI{f}{i}}{\TAppI{g}{\TNumeral{i}}}{\sigma}
  \)\todo{applying g to i like this might be confusing}
  then
  \(
    \Rnorm%
    {\AppII{\GKleisli{\sigma}}{f}{n}}%
    {\AppII{\TGKleisli{\sigma}{A}}{g}{m}}%
    {\sigma}
  \).
\end{lemma}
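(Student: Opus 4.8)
The plan is to prove the statement by induction on the \VSystemT{} type $\sigma$, reading the second component of the conclusion as the family $\Fam{\AppII{\TGKleisli{\sigma}{A}}{g_A}{m_A}}{A}$. The base case carries all of the content and reduces to \Cref{lem:church_encode_kleisli}; the function-type case is a mechanical unfolding of the defining clauses of the dialogue correctness relation (\Cref{def:internal-logical-relation}), of $\GKleisli{\sigma}$, and of $\TGKleisli{\sigma}{A}$. Throughout, we freely replace a family of \VSystemT{} terms by one whose members have the same set denotations: this is harmless because the set model validates $\beta$-reduction and, since $\HEEqualsSymbol$ is symmetric and transitive (\Cref{lem:hee_reflexive}) and $\termstoset{-}{}$ of a closed term is hereditarily well-behaved (\Cref{lem:t-definable-reflexive}), \Cref{lem:Rnorm_respects_conversion} then lets us move $\Rnorm{x}{-}{\sigma}$ along such an equality.

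\emph{Base case $\sigma = \TNat$.} Here $\GKleisli{\TNat} = \KleisliSym$ and $\TGKleisli{\TNat}{A} = \TKleisliSym{A}$, so, unfolding the relation at $\TNat$, the goal is that for every $A : \TTypes$,
\[
  \HEEquals{\EncodeExt{\AppI{\Kleisli{f}}{n}}{A}}
           {\termstoset{\AppII{\TKleisliSym{A}}{g_A}{m_A}}{}}
           {\ChurchIntNN{\TNat}{A}},
\]
and the right-hand side equals $\AppII{\termstoset{\TKleisliSym{A}}{}}{\termstoset{g_A}{}}{\termstoset{m_A}{}}$ since the set interpretation commutes with application. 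I would then apply \Cref{lem:church_encode_kleisli} with its $d$ taken to be $n$, its $f_1$ taken to be $f$, and its $f_2$ taken to be $\termstoset{g_A}{}$: its hypothesis, namely that $\HEEquals{\EncodeExt{\AppI{f}{i}}{A}}{\AppI{\termstoset{g_A}{}}{i}}{\ChurchIntNN{\TNat}{A}}$ for all $i$, is exactly the given per-index assumption $\Rnorm{\AppI{f}{i}}{\TAppI{g}{\TNumeral{i}}}{\TNat}$, once the relation is unfolded and one uses that numerals denote themselves (so that $\termstoset{\TAppI{g_A}{\TNumeral{i}}}{} = \AppI{\termstoset{g_A}{}}{i}$). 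Its conclusion, $\HEEquals{\EncodeExt{\AppI{\Kleisli{f}}{n}}{A}}{\AppII{\termstoset{\TKleisliSym{A}}{}}{\termstoset{g_A}{}}{\EncodeExt{n}{A}}}{\ChurchIntNN{\TNat}{A}}$, then differs from the goal only in that $\EncodeExt{n}{A}$ appears where we want $\termstoset{m_A}{}$; but $\Rnorm{n}{m}{\TNat}$ gives $\HEEquals{\EncodeExt{n}{A}}{\termstoset{m_A}{}}{\ChurchIntNN{\TNat}{A}}$, and since $\termstoset{\TKleisliSym{A}}{}$ and $\termstoset{g_A}{}$ respect $\HEEqualsSymbol$ (\Cref{lem:t-definable-reflexive}) and $\HEEqualsSymbol$ is transitive (\Cref{lem:hee_reflexive}), we are done.

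\emph{Step $\sigma = \sigma_1 \tto \sigma_2$.} By the definition of the relation at a function type, it suffices to fix $a : \typestodial{\sigma_1}$ and a family $b$ with $\Rnorm{a}{b}{\sigma_1}$ and to prove $\Rnorm{\AppI{(\AppII{\GKleisli{\sigma_1 \tto \sigma_2}}{f}{n})}{a}}{\Fam{\AppI{(\AppII{\TGKleisli{\sigma_1 \tto \sigma_2}{A}}{g_A}{m_A})}{b_A}}{A}}{\sigma_2}$. Unwinding the definitions of $\GKleisli{\sigma_1 \tto \sigma_2}$ and $\TGKleisli{\sigma_1 \tto \sigma_2}{A}$ — the latter being a genuine \VSystemT{} lambda-term, so the reduction only holds up to set-model $\beta$ and is absorbed by \Cref{lem:Rnorm_respects_conversion} — this becomes the claim $\Rnorm{\AppII{\GKleisli{\sigma_2}}{\UFun{x}{\AppII{f}{x}{a}}}{n}}{\Fam{\AppII{\TGKleisli{\sigma_2}{A}}{\TUFun{x}{\TAppII{g_A}{x}{b_A}}}{m_A}}{A}}{\sigma_2}$, which is precisely the instance of the induction hypothesis at $\sigma_2$ for the function $\UFun{x}{\AppII{f}{x}{a}}$, the tree $n$, and the families $\Fam{\TUFun{x}{\TAppII{g_A}{x}{b_A}}}{A}$ and $m$. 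Its first premise $\Rnorm{n}{m}{\TNat}$ is unchanged; for the second I must show $\Rnorm{\AppII{f}{i}{a}}{\Fam{\TAppII{g_A}{\TNumeral{i}}{b_A}}{A}}{\sigma_2}$ for each $i$, which follows by instantiating the given $\Rnorm{\AppI{f}{i}}{\TAppI{g}{\TNumeral{i}}}{\sigma_1 \tto \sigma_2}$ with the related pair $\Rnorm{a}{b}{\sigma_1}$, i.e.\ by using the function-type clause of the relation in the forward direction.

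The difficulty here is organisational rather than conceptual: the only substantial ingredient is \Cref{lem:church_encode_kleisli}, which is already available, and the real care goes into (i) carrying the motive $A$ consistently through every family, and (ii) remembering that a $\beta$-step on a \VSystemT{} term such as $\TGKleisli{\sigma_1 \tto \sigma_2}{A}(g_A)(m_A)(b_A)$ is \emph{not} a definitional equality of \VSystemT{} terms, so each such step has to be routed through the set semantics and \Cref{lem:Rnorm_respects_conversion}. A minor trap is that the ``$n$'' of the statement is a dialogue tree, not the induction index $i : \Nat$.
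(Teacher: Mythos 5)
Your proof is correct and follows essentially the same route as the paper: induction on $\sigma$, with the base case discharged by \Cref{lem:church_encode_kleisli} together with the hypothesis $\Rnorm{n}{m}{\TNat}$ (plus symmetry/transitivity of $\HEEqualsSymbol$), and the function-type case reduced to the inductive hypothesis at $\sigma_2$ because both Kleisli extensions at $\TArrowType{\sigma_1}{\sigma_2}$ are defined pointwise in terms of $\sigma_2$. Your explicit routing of the $\beta$-steps on \VSystemT{} terms through the set semantics and \Cref{lem:Rnorm_respects_conversion} is exactly the bookkeeping the paper leaves implicit.
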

\begin{proof}
  We proceed by induction on the \VSystemT{} type \(\sigma\). For the base case
  we must show for all \(A:\TTypes\) that
  \(
    \HEEquals%
    {\termstoset{\TAppI{\TKleisli{g_{A}}{A}}{m_{A}}}{}}%
    {\EncodeExt{\AppI{\Kleisli{f}}{n}}{A}}%
    {}
  \).
  By the assumption that \(\Rnorm{n}{m}{\TNat}\), the left hand side equals
  \(
  \AppII{\termstoset{\TKleisliSym{A}}{}}{\termstoset{g_{A}}{}}{\EncodeExt{n}{A}}
  \)
  and by \Cref{lem:church_encode_kleisli} this equals the right hand side. For
  the recursive step the inductive hypothesis suffices since generalised Kleisli
  extension at \(\TArrowType{\sigma_1}{\sigma_2}\) is defined in terms of
  \(\sigma_{2}\).
\end{proof}

\begin{lemma}[\AgdaLink{Lemma-33} Fundamental Lemma]\label{lem:main_lemma}
  Given \(\typed{\Gamma}{t}{\sigma}\), an assignment
  \(\gamma_1 : \typestodial{\Gamma}\) and a substitution \(\gamma_2\) from
  \(\empctx\) to \(\typestoint{\Gamma}{}\), if
  \( \Rnorm{\gamma_1}{\gamma_2}{\Gamma} \) then
  \( \Rnorm{\termstodial{t}{\gamma_1}}{\TSub{\termstoint{t}{}}{\gamma_2}}{\sigma} \).
\end{lemma}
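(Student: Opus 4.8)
The plan is to prove \Cref{lem:main_lemma} by induction on the structure of the \VSystemT{} term $t$ (equivalently, on its typing derivation), keeping $\Gamma$, $\gamma_1$ and $\gamma_2$ universally quantified so that the induction hypothesis is available for subterms sitting under extended contexts. One move will recur throughout: neither $\termstodial{-}{}$ nor $\termstoint{-}{}$ respects \VSystemT{} conversion definitionally, but both do once one passes to the set semantics, so whenever a $\beta$-redex or a $\TRecSym$-unfolding is produced on the internal side I will invoke \Cref{lem:Rnorm_respects_conversion} to swap the family of terms in play for a convertible one whose $\termstoset{-}{}$ is hereditarily extensionally equal.

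The cases for variables, application and $\lambda$-abstraction are essentially bookkeeping. For a variable $x$ both translations project out of their environments, so the goal is exactly $\Rnorm{\gamma_1}{\gamma_2}{\Gamma}$ read off at $x$. For $\TAppI{t_1}{t_2}$ I unfold the arrow clause of $\Rnorm{-}{-}{\sigma_1 \tto \sigma_2}$ on the induction hypothesis for $t_1$ and apply it to the induction hypothesis for $t_2$, using that substitution commutes with application. For $\TFun{x}{\sigma}{t}$, given related arguments $\Rnorm{x'}{y}{\sigma}$, I extend the environments to $\gamma_1,(x \mapsto x')$ and $\gamma_2,(x \mapsto y)$, note that they remain related over $\Gamma,(x:\sigma)$, apply the induction hypothesis to $t$, and use \Cref{lem:Rnorm_respects_conversion} to absorb the $\beta$-redex created by applying the translated abstraction to $y$. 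For $\TZero$ the goal is $\HEEquals{\EncodeExt{\Leaf{\Zero}}{A}}{\termstoset{\LeafInt{\TZero}{A}}{}}{\ChurchIntNN{\TNat}{A}}$ for every $A$, and both sides are literally $\AppI{\termstoset{\LeafIntSym{A}}{}}{\Zero}$, so this is \Cref{lem:t-definable-reflexive} applied to the closed term $\LeafInt{\TZero}{A}$. For $\TSucc{t}$ I apply \Cref{lem:church_encode_natural} with $g_1 \IsDefinedToBe g_2 \IsDefinedToBe \UFun{n}{\Succ{n}}$, which is reflexive for $\HEEqualsSymbol$ by \Cref{lem:hee_reflexive}, and compose with the induction hypothesis for $t$.

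The substantive case is $\TRec{\sigma}{t_1}{t_2}{t_3}$. Abbreviating $d_i \IsDefinedToBe \termstodial{t_i}{\gamma_1}$, the inductive translation yields $\AppII{\GKleisli{\sigma}}{\AppII{\RecSym}{d_1 \circ \LeafSym}{d_2}}{d_3}$, while the internal one, after pushing $\gamma_2$ through the closed combinators, yields $\AppII{\TGKleisli{\sigma}{A}}{g_A}{m_A}$ with $g_A \IsDefinedToBe \TAppII{\TRecSym_{\sigma}}{\TUFun{x}{\TAppI{\TSub{\termstoint{t_1}{A}}{\gamma_2}}{\LeafInt{x}{A}}}}{\TSub{\termstoint{t_2}{A}}{\gamma_2}}$ and $m_A \IsDefinedToBe \TSub{\termstoint{t_3}{A}}{\gamma_2}$. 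So I apply the Kleisli Lemma (\Cref{lem:kleisli_lemma}) with $f \IsDefinedToBe \AppII{\RecSym}{d_1 \circ \LeafSym}{d_2}$, $n \IsDefinedToBe d_3$, $g \IsDefinedToBe \Fam{g_A}{A:\TTypes}$ and $m \IsDefinedToBe \Fam{m_A}{A:\TTypes}$. Its first hypothesis $\Rnorm{n}{m}{\TNat}$ is the induction hypothesis for $t_3$. The second, $\forall i:\Nat,\ \Rnorm{\AppI{f}{i}}{\Fam{\TAppI{g_A}{\TNumeral{i}}}{A:\TTypes}}{\sigma}$, I prove by an inner induction on $i$. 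For $i = \Zero$ we have $\AppI{f}{\Zero} = d_2$ while $\TAppI{g_A}{\TNumeral{\Zero}}$ is convertible to $\TSub{\termstoint{t_2}{A}}{\gamma_2}$, so the induction hypothesis for $t_2$ gives it via \Cref{lem:Rnorm_respects_conversion}. For $i = \Succ{j}$ we have $\AppI{f}{\Succ{j}} = \AppII{d_1}{\Leaf{j}}{\AppI{f}{j}}$ and $\TAppI{g_A}{\TNumeral{\Succ{j}}}$ is convertible to $\TAppII{\TSub{\termstoint{t_1}{A}}{\gamma_2}}{\LeafInt{\TNumeral{j}}{A}}{\TAppI{g_A}{\TNumeral{j}}}$; since $\Rnorm{\Leaf{j}}{\Fam{\LeafInt{\TNumeral{j}}{A}}{A:\TTypes}}{\TNat}$ holds — both $\EncodeExt{\Leaf{j}}{A}$ and $\termstoset{\LeafInt{\TNumeral{j}}{A}}{}$ are $\AppI{\termstoset{\LeafIntSym{A}}{}}{j}$ using $\termstoset{\TNumeral{j}}{} = j$, and then \Cref{lem:t-definable-reflexive} applies — I feed the induction hypothesis for $t_1$ this relation and then the inner hypothesis $\Rnorm{\AppI{f}{j}}{\Fam{\TAppI{g_A}{\TNumeral{j}}}{A:\TTypes}}{\sigma}$, once more closing the residual gap with \Cref{lem:Rnorm_respects_conversion}. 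The function-type step inside \Cref{lem:kleisli_lemma} handles $\sigma$ itself being an arrow type, completing the recursor case and the induction.

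The main obstacle is precisely this $\TRecSym$ case. Beyond the nested induction on the numeral argument, one has to keep track of two interleaved layers of substitution — the metatheoretic environment $\gamma_1$ against the object-level substitution $\gamma_2$, together with the extra substitutions generated by $\beta$- and $\TRecSym$-reductions — and one has to route every comparison through the set semantics, because the two translations agree only up to conversion rather than definitionally. The $\lambda$-abstraction case hides a miniature instance of the substitution bookkeeping, and the $\TSucc{t}$ case a miniature instance of the set-semantics detour.
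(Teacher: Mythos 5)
Your proposal is correct and follows essentially the same route as the paper's proof: structural induction on $t$, with \Cref{lem:Rnorm_respects_conversion} absorbing the conversions introduced on the internal side, \Cref{lem:church_encode_natural} for the successor case, and the Kleisli Lemma in the recursor case with its pointwise hypothesis discharged by an inner induction on the numeral using the inductive hypotheses for $t_1$ and $t_2$, and the tree hypothesis coming from $t_3$. You spell out some details the paper leaves implicit (the explicit instantiation of the Kleisli Lemma and the use of \Cref{lem:t-definable-reflexive} in the $\TZero$ and leaf-relatedness steps), but the decomposition and the key lemmas are the same.
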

\begin{proof}
  We proceed by induction on~\(t\).
  We will often implicitly use \Cref{lem:Rnorm_respects_conversion} whenever we
  must show a term satisfies the logical relation after some normalisation
  occurs.
  The variable case follows from the assumption that
  \(\Rnorm{\gamma_1}{\gamma_2}{\Gamma}\).
  The application case follows by induction on both subterms and the lambda
  abstraction case follows by expanding the substitution and applying the
  inductive hypothesis to the lambda body.
  The \(\TZero\) case amounts to showing that
  \(
  \HEEquals{\termstoset{\LeafInt{\TZero}{A}}{}}{\EncodeExt{\Leaf{\Zero}}{A}}{}
  \)
  which is seen to hold by expanding both definitions.
  The \(\TSucc{t_1}\) case follows by applying the inductive hypothesis to
  \(t_1\) and using \Cref{lem:church_encode_natural} to commute the
  \(\EncodeExtSym{A}\) to the outside.
  Finally, the \(\TRec{\sigma}{t_1}{t_2}{t_3}\) case follows from
  \Cref{lem:kleisli_lemma} which has two assumptions:
  for the first, we show for all \(n\) by induction that
  \(
    \Rnorm%
    {\Rec{\termstodial{t_1}{\gamma_1}}{\termstodial{t_2}{\gamma_1}}{n}}%
    {\TSub%
      {\TRec{}{\termstoint{t_1}{}\circ\LeafIntSym{A}}{\termstoint{t_2}{}}{\TNumeral{n}}}%
      {\gamma_2}}%
    {\sigma}
  \)
  using the inductive hypothesis from the subterms \(t_1\) and \(t_2\).
  The second assumption holds by the inductive hypothesis from the subterm
  \(t_{3}\).
\end{proof}

\begin{lemma}[\AgdaLink{Lemma-34} Dialogue tree agreement]\label{lem:dialogue_agreement}
  Given \(A : \TTypes\) and
  \(
  \ctyped{t}{\TArrowType{(\TArrowType{\TNat}{\TNat})}{\TNat}}
  \),
  we have
  \(
    \HEEquals%
    {\termstoset{\TDialogueTree{t}{A}}{}}%
    {\EncodeExt{\AppI{\DialogueTree}{t}}{A}}%
    {\ChurchIntNN{\TNat}{A}}
  \).
\end{lemma}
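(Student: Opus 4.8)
The plan is to derive this from the Fundamental Lemma (\Cref{lem:main_lemma}) by feeding the two generic sequences into it and then reading off the base clause of the logical relation. The only ingredient not already available in the excerpt is that the inductive and Church-encoded generic sequences are themselves related, i.e.\ $\Rnorm{\Generic}{\Fam{\TGeneric{A}}{A:\TTypes}}{\TArrowType{\TNat}{\TNat}}$, so I would prove this first.

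For the generic agreement, unfold the function-type clause of \Cref{def:internal-logical-relation}: it suffices to fix $x : \DialogueNN{\Nat}$ and a family $y$ with $\Rnorm{x}{y}{\TNat}$ and show $\Rnorm{\AppI{\Generic}{x}}{\Fam{\TAppI{\TGeneric{A}}{y_A}}{A:\TTypes}}{\TNat}$. Since $\Generic = \Kleisli{\AppI{\BranchSym}{\LeafSym}}$, $\TGeneric{A} = \TKleisli{\TAppI{\BranchIntSym{A}}{\LeafIntSym{A}}}{A}$, and by definition $\GKleisli{\TNat} = \KleisliSym$ and $\TGKleisli{\TNat}{A} = \TKleisliSym{A}$, this is exactly an instance of the Kleisli Lemma (\Cref{lem:kleisli_lemma}) at $\sigma = \TNat$, with $n = x$, $m = y$, grafting function $i \mapsto \Branch{\LeafSym}{i}$, and grafting term family $\Fam{\TAppI{\BranchIntSym{A}}{\LeafIntSym{A}}}{A:\TTypes}$. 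Its first hypothesis is precisely $\Rnorm{x}{y}{\TNat}$, and its second asks that $\Rnorm{\Branch{\LeafSym}{i}}{\Fam{\TAppI{(\TAppI{\BranchIntSym{A}}{\LeafIntSym{A}})}{\TNumeral{i}}}{A:\TTypes}}{\TNat}$ for every $i : \Nat$. Unfolding $\EncodeExtSym{A}$ at a branch node (\Cref{def:encode}) and using that numerals denote themselves in the set model, this reduces to $\HEEquals{\AppII{\termstoset{\BranchIntSym{A}}{}}{\EncodeExtSym{A} \circ \LeafSym}{i}}{\AppII{\termstoset{\BranchIntSym{A}}{}}{\termstoset{\LeafIntSym{A}}{}}{i}}{\ChurchIntNN{\TNat}{A}}$, which follows because $\EncodeExtSym{A} \circ \LeafSym$ agrees hereditarily with $\termstoset{\LeafIntSym{A}}{}$ (immediate from the leaf case of \Cref{def:encode} and reflexivity of $\HEEqualsSymbol$ on $\termstoset{\LeafIntSym{A}}{}$, \Cref{lem:t-definable-reflexive}), together with the fact that $\termstoset{\BranchIntSym{A}}{}$ respects $\HEEqualsSymbol$ (again \Cref{lem:t-definable-reflexive}).

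With generic agreement in hand, apply \Cref{lem:main_lemma} to the closed term $t$: taking the empty assignment $\gamma_1$ and empty substitution $\gamma_2$, which satisfy $\Rnorm{\gamma_1}{\gamma_2}{\empctx}$ vacuously, yields $\Rnorm{\termstodial{t}{}}{\Fam{\termstoint{t}{A}}{A:\TTypes}}{\TArrowType{(\TArrowType{\TNat}{\TNat})}{\TNat}}$. Unfolding the function-type clause of \Cref{def:internal-logical-relation} and instantiating it at the related pair $x = \Generic$, $y = \Fam{\TGeneric{A}}{A:\TTypes}$ from the previous step gives $\Rnorm{\AppI{\termstodial{t}{}}{\Generic}}{\Fam{\TAppI{\termstoint{t}{A}}{\TGeneric{A}}}{A:\TTypes}}{\TNat}$, which by \Cref{def:dialogue_tree_operator} and \Cref{def:internal_dialogue_tree_operator} is $\Rnorm{\AppI{\DialogueTree}{t}}{\Fam{\TDialogueTree{t}{A}}{A:\TTypes}}{\TNat}$. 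Unfolding the base clause at the given motive $A$ gives $\HEEquals{\EncodeExt{\AppI{\DialogueTree}{t}}{A}}{\termstoset{\TDialogueTree{t}{A}}{}}{\ChurchIntNN{\TNat}{A}}$, and symmetry of $\HEEqualsSymbol$ (\Cref{lem:hee_reflexive}) delivers the statement.

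I expect the main obstacle to be the generic agreement step, and in particular the care needed to avoid function extensionality: one cannot simply substitute $\EncodeExtSym{A} \circ \LeafSym$ for $\termstoset{\LeafIntSym{A}}{}$ under $\termstoset{\BranchIntSym{A}}{}$, since these functions are only extensionally equal (and provably not metatheoretically equal without funext), so the computation must be carried out modulo $\HEEqualsSymbol$ throughout -- which is exactly why \Cref{lem:kleisli_lemma} is phrased in terms of the logical relation and why \Cref{lem:t-definable-reflexive} is invoked to propagate $\HEEqualsSymbol$ through the Church-encoding combinators. Everything after that -- plugging the generic sequences into the fundamental lemma and unfolding the $\TNat$ clause -- is routine.
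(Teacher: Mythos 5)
Your proposal is correct and follows essentially the same route as the paper: reduce the statement to the logical relation $\Rnorm{\AppI{\DialogueTree}{t}}{\Fam{\TDialogueTree{t}{A}}{A}}{\TNat}$, obtain $\Rnorm{\termstodial{t}{}}{\termstoint{t}{}}{\TArrowType{(\TArrowType{\TNat}{\TNat})}{\TNat}}$ from the Fundamental Lemma, and discharge the remaining obligation $\Rnorm{\Generic}{\TGeneric{A}}{\TArrowType{\TNat}{\TNat}}$ via the Kleisli Lemma, since both generic sequences are Kleisli extensions. The only difference is that you spell out the verification of the Kleisli Lemma's second hypothesis for the branch/leaf grafting functions, which the paper leaves implicit.
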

\begin{proof}
  It suffices to show
  \(
  \Rnorm%
  {\AppI{\termstodial{t}{}}{\Generic}}%
  {\Fam{\TAppI{\termstoint{t}{A}}{\TGeneric{A}}}{A:\TTypes}}%
  {\TNat}
  \).
  By \Cref{lem:main_lemma} we know that
  \(\Rnorm{\termstodial{t}{}}{\termstoint{t}{}}{\TArrowType{(\TArrowType{\TNat}{\TNat})}{\TNat}}\),
  leaving us to show \(\Rnorm{\Generic}{\TGeneric{A}}{\TArrowType{\TNat}{\TNat}}\).
  This follows from \Cref{lem:kleisli_lemma} as both the generic
  sequences are defined with kleisli extension.
\end{proof}

\begin{definition}[\AgdaLink{Definition-35}]\label{defn:dialogue_internal}
  We define the \definiendum{internal dialogue operator} as the closed term
  \begin{gather*}
    \DeclareType%
    {\TDialogueFSym}%
    {\TCTerms{\TArrowType%
      {\ChurchIntNN{\TNat}{\TArrowType{(\TArrowType{\TNat}{\TNat})}{\TNat}}}%
      {\TArrowType{(\TArrowType{\TNat}{\TNat})}{\TNat}}
    }}\\
    \NewDefinition{\TDialogueFSym}{%
      \TUFun{d}{\TAppII{d}{\TUFun{z}{\TUFun{\_}{z}}}{\TUFun{\phi}{\TUFun{x}{\TUFun{\alpha}{\TAppII{\phi}{\TAppI{\alpha}{x}}{\alpha}}}}}}}%
  \end{gather*}
\end{definition}

\begin{lemma}[\AgdaLink{Lemma-36}]\label{lem:dialogues-agreement}
  Given a dialogue tree \(d : \DialogueNN{\Nat}\) and sequence
  \(\alpha : \ArrowType{\Nat}{\Nat}\) we have the
  following equality of natural numbers
  \(
    \IdType{\DialogueF{d}{\alpha}}{\AppII{\termstoset{\TDialogueFSym}{}}{\EncodeExt{d}{\TArrowType{(\TArrowType{\TNat}{\TNat})}{\TNat}}}{\alpha}}{}
  \).
\end{lemma}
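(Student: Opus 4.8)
The plan is to prove the claimed equality by structural induction on the dialogue tree $d$, with $\alpha$ held fixed (or universally quantified in the induction hypothesis). First I would abbreviate $A \IsDefinedToBe \TArrowType{(\TArrowType{\TNat}{\TNat})}{\TNat}$ and unfold the set interpretation of $\TDialogueFSym$. Since $\termstoset{-}{}$ sends abstraction to metatheoretic abstraction and application to metatheoretic application, $\termstoset{\TDialogueFSym}{}$ is literally the function $\lambda d.\ d\,(\lambda z.\lambda \_.\,z)\,(\lambda \phi.\lambda x.\lambda \alpha.\,\phi(\alpha(x))(\alpha))$. Writing $e^{\star} \IsDefinedToBe \lambda z.\lambda \_.\,z$ and $b^{\star} \IsDefinedToBe \lambda \phi.\lambda x.\lambda \alpha.\,\phi(\alpha(x))(\alpha)$, the right-hand side of the statement unfolds to $\EncodeExt{d}{A}\,e^{\star}\,b^{\star}\,\alpha$, so it suffices to show $\DialogueF{d}{\alpha} = \EncodeExt{d}{A}\,e^{\star}\,b^{\star}\,\alpha$ for all $d$.

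In the base case $d = \Leaf{z}$: by \Cref{def:dialogue_operator} the left-hand side is $z$; on the other hand $\EncodeExt{\Leaf{z}}{A} = \termstoset{\LeafIntSym{A}}{}(z)$, and unfolding $\LeafIntSym{A}$ gives the metatheoretic function $\lambda e.\lambda b.\,e(z)$, so $\EncodeExt{\Leaf{z}}{A}\,e^{\star}\,b^{\star}\,\alpha = e^{\star}(z)(\alpha) = z$. Both sides agree.

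In the inductive step $d = \Branch{\phi}{i}$: by \Cref{def:encode} we have $\EncodeExt{\Branch{\phi}{i}}{A} = \termstoset{\BranchIntSym{A}}{}(\EncodeExtSym{A}\circ\phi)(i)$. Unfolding the set interpretation of $\BranchIntSym{A}$ and $\beta$-reducing the applications to $e^{\star}$, $b^{\star}$ and $\alpha$ rewrites the right-hand side: the grafting built into the Church-encoded branch constructor feeds $e^{\star}$ and $b^{\star}$ down to the subtrees, and $b^{\star}$ then consults the oracle at $i$, selecting the subtree $\phi(\alpha(i))$; concretely the right-hand side reduces to $\EncodeExt{\phi(\alpha(i))}{A}\,e^{\star}\,b^{\star}\,\alpha$. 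This mirrors the recursive clause $\DialogueF{\Branch{\phi}{i}}{\alpha} = \DialogueF{\phi(\alpha(i))}{\alpha}$, so the induction hypothesis applied to the subtree $\phi(\alpha(i))$ closes the case.

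I do not expect a genuine obstacle: every step above is a definitional $\beta$-reduction in the metatheory — the set model being plain metatheoretic $\lambda$-calculus — so no appeal to function extensionality is needed, and the conclusion is merely an equality of natural numbers. The only point that needs care is the bookkeeping of application order in the Church encoding, in particular checking that the set interpretation of $\BranchIntSym{A}$ threads the eliminator arguments $e^{\star}$ and $b^{\star}$ through to each subtree correctly, so that the encoded branch node interacting with $b^{\star}$ faithfully reproduces the ``query the oracle at $i$, then recurse'' behaviour of $\DialogueFSym$.
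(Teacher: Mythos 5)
Your proof is correct: the paper states this lemma without a written proof (it is discharged only in the Agda formalisation), and your structural induction on $d$, with the two $\beta$-computations for the leaf and branch cases and a single instance of the induction hypothesis at the subtree $\phi(\alpha(i))$, is exactly the intended argument, needing no function extensionality since every step is a definitional unfolding in the set model. The one caveat is that your branch case relies on the standard Church-encoded constructor $\BranchIntSym{A}(\phi)(x)(e)(b) = b\,(\lambda y.\ \phi(y)(e)(b))\,(x)$; the displayed definition of $\BranchIntSym{A}$ in \Cref{def:internal_dialogue_trees} appears to have dropped the outer application of $b$ to the family and the index, so your reading silently corrects that typo, and with it the computation $\EncodeExt{\Branch{\phi}{i}}{A}\,e^{\star}\,b^{\star}\,\alpha = \EncodeExt{\phi(\alpha(i))}{A}\,e^{\star}\,b^{\star}\,\alpha$ goes through as you describe.
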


\todo[inline]{the following proof was changed to use internal versions only,
  make sure to change it in the formalisation}

\begin{theorem}[\AgdaLink{Theorem-37} Correctness of \(\TDialogueTreeSym{A}\)]\label{thm:correctness_internal_dialogue_tree}
  Given
  \(\ctyped{t}{\TArrowType{(\TArrowType{\TNat}{\TNat})}{\TNat}}\) and a sequence
  \(\DeclareType{\alpha}{\ArrowType{\Nat}{\Nat}}\), we have
  \(
    \IdType%
    {\AppI{\termstoset{t}{}}{\alpha}}%
    {\AppI{\termstoset{\TAppI{\TDialogueFSym}{\TDialogueTree{t}{\TArrowType{(\TArrowType{\TNat}{\TNat})}{\TNat}}}}{}}{\alpha}}%
    {}
  \).\todo{i changed this to use \(\TDialogueFSym\) instead of the external one}
\end{theorem}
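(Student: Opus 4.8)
The plan is to compose the three correctness facts already in hand, using the hereditarily extensional equality of \Cref{defn:her-ext-equality} to pass from an inductive dialogue tree to its Church encoding without appealing to function extensionality. Write $A \IsDefinedToBe \TArrowType{(\TArrowType{\TNat}{\TNat})}{\TNat}$ for the motive, which is exactly the one at which \Cref{lem:dialogues-agreement} is stated. Unfolding the application clause of the set model, the right-hand side of the goal is $\AppII{\termstoset{\TDialogueFSym}{}}{\termstoset{\TDialogueTree{t}{A}}{}}{\alpha}$, so it suffices to prove $\AppI{\termstoset{t}{}}{\alpha} = \AppII{\termstoset{\TDialogueFSym}{}}{\termstoset{\TDialogueTree{t}{A}}{}}{\alpha}$.

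First I rewrite the left-hand side using \Cref{thm:correctness_dialogue_tree} as $\DialogueF{\AppI{\DialogueTree}{t}}{\alpha}$, and then apply \Cref{lem:dialogues-agreement} with $d \IsDefinedToBe \AppI{\DialogueTree}{t}$ to rewrite this as $\AppII{\termstoset{\TDialogueFSym}{}}{\EncodeExt{\AppI{\DialogueTree}{t}}{A}}{\alpha}$. It then remains to replace the Church-encoded tree $\EncodeExt{\AppI{\DialogueTree}{t}}{A}$ by the syntactic one $\termstoset{\TDialogueTree{t}{A}}{}$ underneath $\termstoset{\TDialogueFSym}{}$. The key input is \Cref{lem:dialogue_agreement}, which gives $\HEEquals{\termstoset{\TDialogueTree{t}{A}}{}}{\EncodeExt{\AppI{\DialogueTree}{t}}{A}}{\ChurchIntNN{\TNat}{A}}$. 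Since $\TDialogueFSym$ is a closed term, \Cref{lem:t-definable-reflexive} gives $\HEEquals{\termstoset{\TDialogueFSym}{}}{\termstoset{\TDialogueFSym}{}}{\TArrowType{\ChurchIntNN{\TNat}{A}}{A}}$; unfolding the function clause of \Cref{defn:her-ext-equality} and feeding in the previous relation yields $\HEEquals{\AppI{\termstoset{\TDialogueFSym}{}}{\termstoset{\TDialogueTree{t}{A}}{}}}{\AppI{\termstoset{\TDialogueFSym}{}}{\EncodeExt{\AppI{\DialogueTree}{t}}{A}}}{A}$. Because $A$ is itself of the shape $\TArrowType{(\TArrowType{\TNat}{\TNat})}{\TNat}$, I apply the function clause once more, now using reflexivity $\HEEquals{\alpha}{\alpha}{\TArrowType{\TNat}{\TNat}}$ from \Cref{lem:hee_reflexive} (available since $\TArrowType{\TNat}{\TNat}$ has the required shape), which collapses the relation down to the ground type $\TNat$, i.e.\ to the genuine equation $\AppII{\termstoset{\TDialogueFSym}{}}{\termstoset{\TDialogueTree{t}{A}}{}}{\alpha} = \AppII{\termstoset{\TDialogueFSym}{}}{\EncodeExt{\AppI{\DialogueTree}{t}}{A}}{\alpha}$. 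Chaining the three equalities closes the goal.

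The only delicate step — and the reason the hereditarily extensional equality was introduced — is the rewriting underneath $\termstoset{\TDialogueFSym}{}$: without function extensionality one cannot simply substitute a pointwise-equal Church-encoded argument into $\termstoset{\TDialogueFSym}{}$, so the argument must route through \Cref{lem:t-definable-reflexive} and the inductive clauses of $\HEEqualsSymbol$, exploiting that $\HEEqualsSymbol$ is literal equality at $\TNat$ and reflexive at $\TArrowType{\TNat}{\TNat}$. Everything else is routine unfolding of the set model and rewriting along \Cref{thm:correctness_dialogue_tree} and \Cref{lem:dialogues-agreement}.
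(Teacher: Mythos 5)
Your proof is correct and follows essentially the same route as the paper's own argument: both compose \Cref{thm:correctness_dialogue_tree}, \Cref{lem:dialogues-agreement}, and \Cref{lem:dialogue_agreement}, using \Cref{lem:t-definable-reflexive} together with the function clause of \Cref{defn:her-ext-equality} to rewrite the Church-encoded argument underneath \(\termstoset{\TDialogueFSym}{}\). If anything, you are slightly more explicit than the paper about the final application of the function clause to \(\alpha\) (via reflexivity at \(\TArrowType{\TNat}{\TNat}\)) to land at the ground type where \(\HEEqualsSymbol\) coincides with equality.
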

\begin{proof}
  By \Cref{lem:dialogue_agreement} we know that
  \(
    \HEEquals%
    {\termstoset{\TDialogueTree{t}{A}}{}}%
    {\EncodeExt{\AppI{\DialogueTree}{t}}{A}}%
    {}
  \).
  and by \Cref{lem:t-definable-reflexive} we also have
  \(
    \HEEquals%
    {\termstoset{\TDialogueFSym}{}}%
    {\termstoset{\TDialogueFSym}{}}%
    {}
  \).
  Composing these we get
  \(
    \IdType%
    {\termstoset{\TAppI{\TDialogueFSym}{\TDialogueTree{t}{A}}}{}}%
    {\AppI{\termstoset{\TDialogueFSym}{}}{\EncodeExt{\AppI{\DialogueTree}{t}}{\TArrowType{(\TArrowType{\TNat}{\TNat})}{\TNat}}}}%
    {}
  \),
  noting that we changed from \(\HEEqualsSymbol\) to the metatheoretic equality
  as they coincide for natural numbers.
  Using \Cref{lem:dialogues-agreement} we can replace the right hand side
  by
  \(
    \DialogueF{\AppI{\DialogueTree}{t}}{\alpha}
  \)
  and chaining this with \Cref{thm:correctness_dialogue_tree} we get
  \(\AppI{\termstoset{t}{}}{\alpha}\) as needed.
\end{proof}


\section{Computing Moduli of Continuity Internally}%
\label{sec:computing-moduli}

As we have previously explained, the fundamental idea of a dialogue tree is to
encode information on how a \VSystemT{} term of type
$(\TNat \tto \TNat) \tto \TNat$ interacts with a given argument while computing
a result. In \Cref{sec:translation}, we presented the \VSystemT{} encodings of
such dialogue trees. We now proceed to define two \VSystemT{} operators that
compute \emph{moduli of continuity} using the information contained in these
internal trees:
\begin{enumerate}
  \item One, presented in this section, that takes an $\Nat$-branching dialogue
    tree and computes the modulus of continuity of the function it encodes
    \emph{at a given point} of the Baire space.
  \item Another one, presented in \Cref{sec:uniform}, that takes a
    binary-branching dialogue tree and computes the \emph{modulus of uniform
    continuity} of the function that it encodes.
\end{enumerate}

We assume we have a \(\MaxSym\) function on natural numbers as well as a
corresponding \VSystemT{} term
\(\TMaxSym : \TCTerms{\TArrowType{\TNat}{\TArrowType{\TNat}{\TNat}}}\), the
details of which can be found in our formalisation.

The main content of our modulus operator is given by a function that we call
$\MaxQuestionSym$, which computes the maximum question occurring on a given path
of a dialogue tree. We continue to follow our convention of implementing
constructions involving dialogue trees in two forms: on external inductive type
encodings, and on internal Church encodings. This is not only for the sake of
clarity but also to enable the precise formulation of the lemmas that we need
for our main result. Accordingly, we define the following two functions:
\begin{enumerate}
  \item $\MaxQuestionSym$ on external inductive type encodings.
  \item $\TMaxQuestionSym$ on internal Church encodings, implemented in
    \VSystemT{}.
\end{enumerate}

\begin{definition}[\AgdaLink{Definition-38}]\label{defn:max-question}
  We define the \definiendum{max question along a path} by induction on
  dialogue trees:
  \begin{gather*}
    \DeclareType{\MaxQuestionSym}{%
      \ArrowType{\DialogueNN{\Nat}}{\ArrowType{(\ArrowType{\Nat}{\Nat})}{\Nat}}%
    }\\
    \NewDefinition{\MaxQuestionSym(\Leaf{n})(\alpha)}{0}\\
    \NewDefinition{\MaxQuestionSym(\Branch{\phi}{n})(\alpha)}{%
      \Max{n}{\MaxQuestionSym(\phi(\alpha(n)))(\alpha)}
    }
  \end{gather*}
\end{definition}

\begin{definition}[\AgdaLink{Definition-39}]\label{defn:max-question-int}
  We define the \definiendum{internal max question along a path} function
  as
  {
   \begin{align*}
    \DeclareType{%
      &\TMaxQuestionSym
    }{%
      \TCTerms{\TArrowType{\ChurchIntNN{\TNat}{\TNat}}{\TArrowType{(\TArrowType{\TNat}{\TNat})}{\TNat}}}
    }\\
    \NewDefinition{&\TMaxQuestionSym}{%
      \TUFun{d}{%
        \TUFun{\alpha}{%
          \TAppII{d}{\TUFun{\_}{\TZero}}{\TUFun{g}{\TUFun{x}{\TMax{x}{\TAppI{g}{\TAppI{\alpha}{x}}}}}}
        }
      }
    }
   \end{align*}}
\end{definition}

\begin{lemma}[\AgdaLink{Lemma-40}]\label{lem:max-question-agreement-ext}
  For all dialogue trees \(d:\DialogueNN{\Nat}\) and sequences
  \(\alpha : \Nat \mlto \Nat\), we have that
  \(\AppII{\MaxQuestionSym}{d}{\alpha} = \AppII{\termstoset{\TMaxQuestionSym}{}}{\EncodeExt{d}{\TNat}}{\alpha}\).
\end{lemma}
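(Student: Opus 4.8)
The plan is a straightforward induction on the dialogue tree $d$, carried out after unfolding the set interpretation of $\TMaxQuestionSym$ together with the encoding function $\EncodeExtSym{\TNat}$. Since the set model sends function types to function types, $\iota$ to $\Nat$, and (recalling that its interpretation of $\TMaxSym$ computes $\MaxSym$) satisfies $\AppII{\termstoset{\TMaxSym}{}}{a}{b} = \Max{a}{b}$, the right-hand side of the statement reduces definitionally to $\AppII{\EncodeExt{d}{\TNat}}{e_0}{b_0}$, where I abbreviate $e_0 \IsDefinedToBe \UFun{\_}{\Zero}$ and $b_0 \IsDefinedToBe \UFun{g}{\UFun{x}{\Max{x}{\AppI{g}{\AppI{\alpha}{x}}}}}$ for the two handlers appearing in $\TMaxQuestionSym$.

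First I would record the two ``fold'' equations for the Church encoding at motive $\TNat$: unfolding \Cref{def:encode} together with the set interpretations of $\LeafIntSym{\TNat}$ and $\BranchIntSym{\TNat}$ gives, for every $e : \Nat \mlto \Nat$ and $b : (\Nat \mlto \Nat) \mlto \Nat \mlto \Nat$, the equalities $\AppII{\EncodeExt{\Leaf{n}}{\TNat}}{e}{b} = \AppI{e}{n}$ and $\AppII{\EncodeExt{\Branch{\phi}{i}}{\TNat}}{e}{b} = \AppII{b}{\UFun{y}{\AppII{\EncodeExt{\AppI{\phi}{y}}{\TNat}}{e}{b}}}{i}$. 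These hold definitionally in the metatheory, since the set model introduces no conversion of its own.

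For the base case $d = \Leaf{n}$, the left-hand side is $\MaxQuestion{\Leaf{n}}{\alpha} = \Zero$ by \Cref{defn:max-question}, and the right-hand side is $\AppII{\EncodeExt{\Leaf{n}}{\TNat}}{e_0}{b_0} = \AppI{e_0}{n} = \Zero$. For the inductive step $d = \Branch{\phi}{i}$, the right-hand side unfolds to $\AppII{b_0}{\UFun{y}{\AppII{\EncodeExt{\AppI{\phi}{y}}{\TNat}}{e_0}{b_0}}}{i}$, which $\beta$-reduces to $\Max{i}{\AppII{\EncodeExt{\AppI{\phi}{\AppI{\alpha}{i}}}{\TNat}}{e_0}{b_0}}$. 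Applying the induction hypothesis to the subtree $\AppI{\phi}{\AppI{\alpha}{i}}$ rewrites this to $\Max{i}{\MaxQuestion{\AppI{\phi}{\AppI{\alpha}{i}}}{\alpha}}$, which is precisely $\MaxQuestion{\Branch{\phi}{i}}{\alpha}$ by \Cref{defn:max-question}.

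There is no serious obstacle here. The only points requiring a little care are that the equalities between the set-model interpretations and their ``expected'' metatheoretic meanings (for $\TMaxSym$, for the leaf and branch encoders, and for the body of $\TMaxQuestionSym$) are all definitional, so that no appeal to function extensionality or to $\HEEqualsSymbol$ is needed and the conclusion is a genuine metatheoretic identity of natural numbers; and that on both sides the recursive call is made on the subtree selected by $\AppI{\alpha}{i}$, so that the inductive hypothesis applies verbatim.
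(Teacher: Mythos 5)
Your proof is correct and is essentially the only sensible argument here; the paper states \Cref{lem:max-question-agreement-ext} without a written proof (deferring to the Agda formalisation), and your direct induction on \(d\) --- unfolding the set interpretation of \(\TMaxQuestionSym\), using the two fold equations of \(\EncodeExtSym{\TNat}\), and noting that the induction hypothesis is only ever invoked at the single subtree \(\AppI{\phi}{\AppI{\alpha}{i}}\), so that no function extensionality is needed --- is exactly what that formalised proof does. One small caveat: your branch fold equation presupposes the intended reading of \(\BranchIntSym{\TNat}\), namely that the branch handler \(b\) is applied to the mapped subforest and the query index; the displayed definition in \Cref{def:internal_dialogue_trees} appears to omit that application of \(b\) (and writes \(\varphi\) for \(\phi\)), so the equation you state is the one that actually type-checks and is the right one to use.
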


We are now ready to define our modulus operators.

\begin{definition}[\AgdaLink{Definition-41a}]\label{defn:modulus}
  We define the \definiendum{external and internal modulus operators} as
  \[
    \begin{array}{l@{\hspace{0.2in}}l}
      \begin{array}{l}
      \DeclareType{\ModulusSym}{\ArrowType{\DialogueNN{\Nat}}{\ArrowType{(\ArrowType{\Nat}{\Nat})}{\Nat}}}\\
      \NewDefinition{\Modulus{d}{\alpha}}{\Succ{\MaxQuestion{d}{\alpha}}}\\
      \end{array} &
      \begin{array}{l}
      \DeclareType{\ModulusIntSym}{\TCTerms{\TArrowType{\ChurchIntNN{\TNat}{\TNat}}{\TArrowType{(\TArrowType{\TNat}{\TNat})}{\TNat}}}}\\
      \NewDefinition{\ModulusIntSym}{\TUFun{d}{\TUFun{\alpha}{\TSucc{\TMaxQuestion{d}{\alpha}}}}}
      \end{array}
    \end{array}
  \]
\end{definition}

Before we proceed to prove the correctness of the internal modulus of continuity
operator, we first formally define the notion of modulus of continuity.

\begin{definition}[\AgdaLink{Definition-42} Modulus of continuity]
  Let $f : (\Nat \mlto \Nat) \mlto \Nat$ be a function on the Baire space and
  let $\alpha : \ArrowType{\Nat}{\Nat}$ be a point. A natural number $m : \Nat$
  is a \definiendum{modulus of continuity} for $f$ at \(\alpha\) if
  the following holds:
  \(\forall \beta : \ArrowType{\Nat}{\Nat}.\ \EqUpTo{\alpha}{\beta}{m} \mlto
    f(\alpha) = f(\beta). \)
\end{definition}

We mentioned in \Cref{rem:dialogue_continuous_implies_continuous} that the
computation encoded by any dialogue tree is continuous. The function
$\ModulusSym$, which we defined above, can be seen as the \emph{computational
content} witnessing this fact.

\begin{lemma}\label{lem:mod-dialogue}
  Given any dialogue tree $d : \DialogueNN{\Nat}$ and any sequence
  $\alpha : \Nat \mlto \Nat$, $\Modulus{d}{\alpha}$ is a modulus of
  continuity of the function
  $\AppI{\DialogueFSym}{d} : (\Nat \to \Nat) \mlto \Nat$ at point~$\alpha$.
\end{lemma}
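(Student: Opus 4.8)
The plan is to prove, by structural induction on the dialogue tree $d$ (with $\alpha$ fixed throughout), the statement that for every $\beta : \ArrowType{\Nat}{\Nat}$, if $\EqUpTo{\alpha}{\beta}{\Modulus{d}{\alpha}}$ then $\IdType{\DialogueF{d}{\alpha}}{\DialogueF{d}{\beta}}{}$; unfolding $\Modulus{d}{\alpha} = \Succ{\MaxQuestion{d}{\alpha}}$, this is exactly the assertion that $\Modulus{d}{\alpha}$ is a modulus of continuity of $\AppI{\DialogueFSym}{d}$ at $\alpha$. No genuine strengthening of the goal is needed: the induction carries itself because $\DialogueFSym$ and $\MaxQuestionSym$ recurse on dialogue trees in lockstep, and because the hypothesis is already universally quantified over $\beta$, so it can be reused at subtrees. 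The two elementary auxiliary facts I will use freely are the defining inequalities $k \le \Max{k}{b}$ and $b \le \Max{k}{b}$ of $\MaxSym$, and the downward closure of initial-segment agreement: if $\EqUpTo{\alpha}{\beta}{m}$ and $j \le m$, then $\EqUpTo{\alpha}{\beta}{j}$.

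For the base case $d = \Leaf{n}$ the claim is immediate: by definition of $\DialogueFSym$ we have $\DialogueF{\Leaf{n}}{\alpha} = n = \DialogueF{\Leaf{n}}{\beta}$ for any $\beta$, so the hypothesis $\EqUpTo{\alpha}{\beta}{1}$ is not even used. For the inductive step $d = \Branch{\phi}{k}$, assume $\EqUpTo{\alpha}{\beta}{\Modulus{\Branch{\phi}{k}}{\alpha}}$, i.e.\ that $\alpha$ and $\beta$ agree on the first $\Succ{(\Max{k}{\MaxQuestion{\phi(\alpha(k))}{\alpha}})}$ entries. Two consequences follow. First, $k$ lies strictly below this bound (since $k \le \Max{k}{\MaxQuestion{\phi(\alpha(k))}{\alpha}}$), hence $\IdType{\alpha(k)}{\beta(k)}{}$. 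Second, since $\MaxQuestion{\phi(\alpha(k))}{\alpha} \le \Max{k}{\MaxQuestion{\phi(\alpha(k))}{\alpha}}$, downward closure gives $\EqUpTo{\alpha}{\beta}{\Modulus{\phi(\alpha(k))}{\alpha}}$. Now unfold $\DialogueFSym$ at a branch node and use $\IdType{\alpha(k)}{\beta(k)}{}$:
\[
\DialogueF{\Branch{\phi}{k}}{\alpha} = \DialogueF{\phi(\alpha(k))}{\alpha}, \qquad \DialogueF{\Branch{\phi}{k}}{\beta} = \DialogueF{\phi(\beta(k))}{\beta} = \DialogueF{\phi(\alpha(k))}{\beta}.
\]
Applying the induction hypothesis to the subtree $\phi(\alpha(k))$ and the sequence $\beta$, using the agreement just established, yields $\IdType{\DialogueF{\phi(\alpha(k))}{\alpha}}{\DialogueF{\phi(\alpha(k))}{\beta}}{}$, and chaining the three equalities closes the case.

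There is essentially no obstacle. The one point worth flagging is that the $\Succ{(-)}$ in the definition of $\ModulusSym$ is precisely what makes the index $k$ queried at a branch node fall inside the agreed-upon prefix of $\alpha$ and $\beta$; without it the inductive step would already fail at the first branching. One must also be careful to extract $\IdType{\alpha(k)}{\beta(k)}{}$ before invoking the induction hypothesis, since $\MaxQuestionSym$ and $\DialogueFSym$ both descend into $\phi(\alpha(k))$ (the subtree determined by $\alpha$, not by $\beta$), so the hypothesis supplied to the recursive call must be about that same subtree.
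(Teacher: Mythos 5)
Your proof is correct. The paper states this lemma without a written proof (deferring to the formalisation), and your structural induction on $d$ --- using $k \le \Max{k}{b}$ and $b \le \Max{k}{b}$, downward closure of prefix agreement, and the care to rewrite $\phi(\beta(k))$ to $\phi(\alpha(k))$ via $\alpha(k)=\beta(k)$ before invoking the inductive hypothesis --- is exactly the standard argument sketched informally in \Cref{rem:dialogue_continuous_implies_continuous}, with the role of the $\SuccSym$ in $\ModulusSym$ correctly identified.
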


In order to prove the correctness of \(\ModulusIntSym\), we also need to know
that the internal modulus operator agrees with the external modulus operator. In
preparation for this, we prove the following lemma connecting the internal and
external processes of extracting moduli of continuity from the respective
encodings.

\begin{lemma}[\AgdaLink{Lemma-44}]\label{lem:main}
  For every term \(t : \TCTerms{\TArrowType{(\TArrowType{\TNat}{\TNat})}{\TNat}}\)
  and every point \(\alpha : \ArrowType{\Nat}{\Nat}\) of the Baire space, we have
  \(
  \AppI{\termstoset{\TAppI{\TMaxQuestionSym}{\TDialogueTree{t}{\TNat}}}{}}{\alpha}
  = \MaxQuestion{\AppI{\DialogueTree}{t}}{\alpha}
  \).
\end{lemma}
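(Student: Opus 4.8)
The plan is to chain the two agreement lemmas already proved, \Cref{lem:dialogue_agreement} and \Cref{lem:max-question-agreement-ext}, bridging them with the reflexivity facts about hereditarily extensional equality. First I would unfold the left-hand side using compositionality of the set interpretation (the interpretation of an application is the application of the interpretations, read off directly from \Cref{fig:set-model}): thus $\AppI{\termstoset{\TAppI{\TMaxQuestionSym}{\TDialogueTree{t}{\TNat}}}{}}{\alpha} = \AppII{\termstoset{\TMaxQuestionSym}{}}{\termstoset{\TDialogueTree{t}{\TNat}}{}}{\alpha}$.

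The central step is to replace the argument $\termstoset{\TDialogueTree{t}{\TNat}}{}$ by the external Church encoding $\EncodeExt{\AppI{\DialogueTree}{t}}{\TNat}$. By \Cref{lem:dialogue_agreement}, instantiated at the motive $A \IsDefinedToBe \TNat$, these two elements of $\typestoset{\ChurchIntNN{\TNat}{\TNat}}$ satisfy $\HEEquals{\termstoset{\TDialogueTree{t}{\TNat}}{}}{\EncodeExt{\AppI{\DialogueTree}{t}}{\TNat}}{\ChurchIntNN{\TNat}{\TNat}}$. To push this equality through $\termstoset{\TMaxQuestionSym}{}$ and the application to $\alpha$, I would invoke \Cref{lem:t-definable-reflexive} on the closed term $\TMaxQuestionSym$, giving $\HEEquals{\termstoset{\TMaxQuestionSym}{}}{\termstoset{\TMaxQuestionSym}{}}{\TArrowType{\ChurchIntNN{\TNat}{\TNat}}{\TArrowType{(\TArrowType{\TNat}{\TNat})}{\TNat}}}$; by the definition of $\HEEqualsSymbol$ at a function type this says precisely that $\termstoset{\TMaxQuestionSym}{}$ sends $\HEEqualsSymbol$-related inputs to $\HEEqualsSymbol$-related outputs. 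Applying it to the pair from \Cref{lem:dialogue_agreement} and then to the pair $(\alpha, \alpha)$ — the latter admissible since $\HEEqualsSymbol$ is reflexive at $\TArrowType{\TNat}{\TNat}$ by \Cref{lem:hee_reflexive}, that type having the required shape — yields $\HEEquals{\AppII{\termstoset{\TMaxQuestionSym}{}}{\termstoset{\TDialogueTree{t}{\TNat}}{}}{\alpha}}{\AppII{\termstoset{\TMaxQuestionSym}{}}{\EncodeExt{\AppI{\DialogueTree}{t}}{\TNat}}{\alpha}}{\TNat}$, which is an honest equality of natural numbers because $\HEEqualsSymbol$ at $\TNat$ is equality on $\Nat$.

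Finally I would apply \Cref{lem:max-question-agreement-ext} with $d \IsDefinedToBe \AppI{\DialogueTree}{t}$ to rewrite $\AppII{\termstoset{\TMaxQuestionSym}{}}{\EncodeExt{\AppI{\DialogueTree}{t}}{\TNat}}{\alpha}$ as $\MaxQuestion{\AppI{\DialogueTree}{t}}{\alpha}$, the right-hand side of the claim; composing the three equalities closes the proof. The only real obstacle is bookkeeping: one must keep the motive fixed at $A = \TNat$ throughout, so that the Church-encoding type produced by \Cref{lem:dialogue_agreement} matches the one \Cref{lem:max-question-agreement-ext} consumes, and one must appeal to reflexivity of $\HEEqualsSymbol$ only at the Baire-space type $\TArrowType{\TNat}{\TNat}$, where \Cref{lem:hee_reflexive} does apply. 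No fresh induction is needed here; all the work is already hidden inside the two agreement lemmas.
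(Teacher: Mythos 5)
Your proposal is correct and follows the same three-step chain as the paper's proof: unfold the set interpretation of the application, replace $\termstoset{\TDialogueTree{t}{\TNat}}{}$ by $\EncodeExt{\AppI{\DialogueTree}{t}}{\TNat}$ via \Cref{lem:dialogue_agreement}, and conclude with \Cref{lem:max-question-agreement-ext}. The only difference is that you spell out the justification of the middle step --- pushing the hereditarily extensional equality through $\termstoset{\TMaxQuestionSym}{}$ using \Cref{lem:t-definable-reflexive} and \Cref{lem:hee_reflexive} at $\TArrowType{\TNat}{\TNat}$ --- which the paper leaves implicit; this is a correct and welcome elaboration, not a different route.
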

\begin{proof}
  Fix a term \(t : \TCTerms{\TArrowType{(\TArrowType{\TNat}{\TNat})}{\TNat}}\)
  and sequence \(\alpha : \ArrowType{\Nat}{\Nat}\). We have the following
  \begin{gather*}
  \AppI{\termstoset{\TAppI{\TMaxQuestionSym}{\TDialogueTree{t}{\TNat}}}{}}{\alpha} \\
  \quad=\AppII{\termstoset{\TMaxQuestionSym}{}}{\termstoset{\TDialogueTree{t}{\TNat}}{}}{\alpha}
  \tag{by \Cref{def:set-model}}\\
  \quad=\AppII{\termstoset{\TMaxQuestionSym}{}}{\EncodeExt{\AppI{\DialogueTree}{t}}{\TNat}}{\alpha}
  \tag{by \Cref{lem:dialogue_agreement}} \\
  \quad=\MaxQuestion{\AppI{\DialogueTree}{t}}{\alpha}
  \tag{by \Cref{lem:max-question-agreement-ext}}
  \end{gather*}
  which completes the proof.
\end{proof}

The key step in the above proof is the use of \Cref{lem:dialogue_agreement},
which relies on the logical relation from \Cref{def:internal-logical-relation}
and the hereditarily extensional equality relation from
\Cref{defn:her-ext-equality}.
With \Cref{lem:main} established, we may now proceed to prove our main result
for functions on the Baire space: $\ModulusIntSym$ computes moduli of continuity
for \VSystemT{}-definable functions on the Baire space.

\begin{theorem}[\AgdaLink{Theorem-45} Correctness of $\ModulusIntSym$]
  Let $t : (\TNat \tto \TNat) \tto \TNat$ be a \VSystemT{} function on the Baire
  space. The result $\termstoset{\TAppI{\ModulusIntSym}{t}}{}$ is a function
  giving a modulus of continuity for
  $\termstoset{t}{} : (\Nat \mlto \Nat) \mlto \Nat$ at each point of the Baire
  space.
\end{theorem}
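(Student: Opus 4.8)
Here $\TAppI{\ModulusIntSym}{t}$ abbreviates $\TAppI{\ModulusIntSym}{\theta}$, where $\theta \IsDefinedToBe \TDialogueTree{t}{\TNat}$ is the internal Church-encoded dialogue tree of $t$ at motive $\TNat$ (recall that $\ModulusIntSym$ expects a Church-encoded tree as input), and we also write $d \IsDefinedToBe \AppI{\DialogueTree}{t}$ for the external dialogue tree of $t$. The plan is to assemble the claim from \Cref{lem:main} together with the external correctness results of \Cref{sec:effectful_forcing}, with no essentially new work. Fix an arbitrary point $\alpha : \Nat \mlto \Nat$. First I would unfold the internal modulus operator: computing in the set model (\Cref{def:set-model}) and using the definition of $\ModulusIntSym$ from \Cref{defn:modulus}, we obtain $\AppI{\termstoset{\TAppI{\ModulusIntSym}{\theta}}{}}{\alpha} = \Succ{\AppI{\termstoset{\TAppI{\TMaxQuestionSym}{\theta}}{}}{\alpha}}$. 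Then \Cref{lem:main} rewrites the inner value, giving $\AppI{\termstoset{\TAppI{\ModulusIntSym}{\theta}}{}}{\alpha} = \Succ{\MaxQuestion{d}{\alpha}}$, which is the external modulus $\Modulus{d}{\alpha}$ by \Cref{defn:modulus}.

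It then remains to show that $\Modulus{d}{\alpha}$ is a modulus of continuity for $\termstoset{t}{}$ at $\alpha$. For the function $\AppI{\DialogueFSym}{d}$ this is exactly \Cref{lem:mod-dialogue}: for every $\beta$ with $\EqUpTo{\alpha}{\beta}{\Modulus{d}{\alpha}}$ we have $\DialogueF{d}{\alpha} = \DialogueF{d}{\beta}$. To pass from $\AppI{\DialogueFSym}{d}$ to $\termstoset{t}{}$ I would invoke \Cref{thm:correctness_dialogue_tree}, which states $\AppI{\termstoset{t}{}}{\gamma} = \DialogueF{d}{\gamma}$ for every $\gamma : \Nat \mlto \Nat$; instantiating at $\gamma \IsDefinedToBe \alpha$ and at $\gamma \IsDefinedToBe \beta$ and combining with the previous equation yields $\AppI{\termstoset{t}{}}{\alpha} = \AppI{\termstoset{t}{}}{\beta}$. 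No function extensionality is needed, since being a modulus of continuity is a purely pointwise property of the function's values. As $\alpha$ was arbitrary, $\termstoset{\TAppI{\ModulusIntSym}{\theta}}{}$ assigns to each point of the Baire space a modulus of continuity of $\termstoset{t}{}$ there, which is the claim.

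I do not expect a genuine obstacle at this level: the argument is a short chain of equalities between natural numbers followed by one substitution into the definition of ``modulus of continuity''. All of the substance has been delegated to the cited lemmas --- in particular \Cref{lem:main}, which rests on the dialogue tree agreement lemma \Cref{lem:dialogue_agreement} and hence on the fundamental lemma \Cref{lem:main_lemma} for the correctness logical relation of \Cref{def:internal-logical-relation}, and \Cref{lem:mod-dialogue}, whose proof is a routine induction on $d$ (at a leaf, $\AppI{\DialogueFSym}{d}$ ignores $\alpha$; at a branch node, agreement of the two sequences up to the stated bound forces them to agree at the current query and hands down the inductive hypothesis for the relevant subtree). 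The one point I would take care over is keeping the motive fixed at $A = \TNat$ throughout: that is the single instance for which \Cref{lem:main} and \Cref{lem:max-question-agreement-ext} are stated, and it is also what makes $\TAppI{\ModulusIntSym}{\theta}$ well-typed, so that the whole chain closes.
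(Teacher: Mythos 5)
Your proof is correct and follows essentially the same route as the paper's: both reduce the claim to the equality $\AppI{\termstoset{\TAppI{\ModulusIntSym}{\TDialogueTree{t}{\TNat}}}{}}{\alpha} = \Modulus{\AppI{\DialogueTree}{t}}{\alpha}$ via \Cref{lem:main}, and then conclude with \Cref{lem:mod-dialogue} and \Cref{thm:correctness_dialogue_tree}. Your explicit reading of $\TAppI{\ModulusIntSym}{t}$ as applied to the internal dialogue tree at motive $\TNat$ is the intended one, which the paper leaves implicit.
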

\begin{proof}
  We know by \Cref{thm:correctness_dialogue_tree} that $t$ can be encoded by the
  inductive dialogue tree \(\AppI{\DialogueTree}{t}\), i.e.\ we have
  \(\AppI{\termstoset{t}{}}{\alpha} = \DialogueF{\AppI{\DialogueTree}{t}}{\alpha}\) for
  every $\alpha : \Nat \mlto \Nat$. We therefore know, by \Cref{lem:mod-dialogue},
  that $\ModulusSym$ gives moduli of continuity for $\termstoset{t}{}$, which is
  to say, we just have to show
  \(
    \AppI{\termstoset{\TAppI{\ModulusIntSym}{t}}{}}{\alpha}
    =
    \Modulus{\AppI{\DialogueTree}{t}}{\alpha},
  \)
  for all $\alpha : \Nat \mlto \Nat$, and this follows from \Cref{lem:main}.
\end{proof}



\section{Computing Moduli of Uniform Continuity Internally}%
\label{sec:uniform}

We now extend our work from the previous section as to define an operator
computing moduli of \emph{uniform continuity} of functions on the Cantor space.
Our development here closely follows that of \Cref{sec:computing-moduli}: we
implement an analogue of the $\MaxQuestionSym$ function from
\Cref{defn:max-question}, which we call $\MaxBoolQuestionSym$. This computes the
maximum question in a $\Bool$-branching dialogue tree. Unlike the
$\MaxQuestionSym$ operator, however, it performs this computation over the
entire tree rather than a specific path, exploiting the finiteness of the
branching in the context of the Cantor space.

Once again, we start by defining two operators:
\begin{enumerate}
  \item $\MaxBoolQuestionSym$ for external inductive type encodings.
  \item $\MaxBoolQuestionIntSym$ for internal Church encodings.
\end{enumerate}

\VSystemT{} does not include a type of Booleans, so we cannot directly talk
about the points of the Cantor space in it i.e.\ functions of type \(\alpha : \Nat
\mlto \Bool\). To avoid extending \VSystemT with another ground type, we work
with the embedding of the Cantor space into the Baire space, which we define
below.

\begin{definition}[\AgdaLink{Definition-46}]
  The embedding of the Cantor space into the Baire space is given by the
  function
  \(\EmbedCBSym : (\Nat \mlto \Bool) \to (\Nat \mlto \Nat)\), defined as
  \(\AppI{\EmbedCB{\alpha}}{i} = \Embed{\AppI{\alpha}{i}}\),
  where
  $\EmbedSym$ denotes the function mapping
  $\mathsf{false}$ to $\Zero$ and $\mathsf{true}$ to $\One$.
\end{definition}

The fact that we are working with such an embedding of the Cantor space implies
that we have to work with encodings of $\Bool$-branching dialogue trees into
$\Nat$-branching ones. Accordingly, we define a pruning function that converts a
$\Bool$-branching tree back into a $\Nat$-branching one.

\begin{definition}[\AgdaLink{Definition-47}]\label{def:pruning}
  We define a \definiendum{pruning operation} by induction on
  dialogue trees by
  \begin{gather*}
    \DeclareType{\PruneSym}{\DialogueNN{\Nat} \mlto \Dialogue{\Nat}{\Bool}{\Nat}}\\
    \NewDefinition{\Prune{\Leaf{n}}}{\Leaf{n}}\\
    \NewDefinition{\Prune{\Branch{\phi}{n}}}{%
      \Branch{\PruneSym \circ \phi \circ \EmbedSym}{n}
    }
  \end{gather*}
  where the $\EmbedSym$ function maps $\mathsf{false}$ to $\Zero$ and
  $\mathsf{true}$ to $\One$.
\end{definition}

We now proceed to define the binary versions of the $\MaxQuestionSym$ functions.

\begin{definition}[\AgdaLink{Definition-48}]\label{defn:max-bool-question}
  We define the function $\MaxBoolQuestionSym$ as follows:
  \begin{align*}
    \DeclareType{& \MaxBoolQuestionSym}{\ArrowType{\Dialogue{\Nat}{\Bool}{\Nat}}{\Nat}}\\
    \NewDefinition{& \MaxBoolQuestion{\Leaf{n}}}{\Zero}\\
    \NewDefinition{& \MaxBoolQuestion{\Branch{\phi}{n}}}{%
      \Max{n}{\Max{\MaxBoolQuestion{\AppI{\phi}{\Zero}}}{\MaxBoolQuestion{\AppI{\phi}{\One}}}}
    }
  \end{align*}
\end{definition}

\begin{definition}[\AgdaLink{Definition-49}]\label{def:internal_max_bool_question}
  We define the internal Church encoding version of $\MaxBoolQuestionSym$
  as
  \begin{align*}
    \DeclareType{& \MaxBoolQuestionIntSym}{%
      \TCTerms{\TArrowType{\ChurchIntNN{\TNat}{\TNat}}{\TNat}}
    }\\
    \NewDefinition{& \MaxBoolQuestionIntSym}{%
      \TUFun{d}{\TAppII{d}{\TUFun{\_}{\TZero}}{\TUFun{g}{\TUFun{x}{\TMax{x}{\TMax{\TAppI{g}{\TNumeral{\Zero}}}{\TAppI{g}{\TNumeral{\One}}}}}}}}
    }
  \end{align*}
\end{definition}

\begin{lemma}[\AgdaLink{Lemma-50}]\label{lem:max-bool-question-agreement-ext}
  Given a tree $d : \DialogueNN{\Nat}$, we have
  \(
    \IdTy{\MaxBoolQuestion{\Prune{d}}}{\AppI{\termstoset{\MaxBoolQuestionIntSym}{}}{\EncodeExt{d}{\TNat}}}
  \).
\end{lemma}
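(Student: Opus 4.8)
The plan is to prove the equality by induction on the dialogue tree $d$, unfolding both sides in the set model (\Cref{def:set-model}); since everything is computed directly in $\set$, no conversion rules are needed.

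For the base case $d = \Leaf{n}$, the left-hand side is $\MaxBoolQuestion{\Prune{\Leaf{n}}} = \MaxBoolQuestion{\Leaf{n}} = \Zero$ by \Cref{def:pruning,defn:max-bool-question}. On the right, \Cref{def:encode} gives $\EncodeExt{\Leaf{n}}{\TNat} = \AppI{\termstoset{\LeafIntSym{\TNat}}{}}{n}$, which unfolds to the set-theoretic function $\UFun{e}{\UFun{b}{\AppI{e}{n}}}$; feeding this to $\termstoset{\MaxBoolQuestionIntSym}{}$ substitutes the leaf algebra $\UFun{\_}{\Zero}$ for $e$, so the expression reduces to $\Zero$, matching the left-hand side.

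For the inductive step $d = \Branch{\phi}{n}$ with $\phi : \ArrowType{\Nat}{\DialogueNN{\Nat}}$, I would first expand the left-hand side: by \Cref{def:pruning}, $\Prune{\Branch{\phi}{n}} = \Branch{\PruneSym \circ \phi \circ \EmbedSym}{n}$, so, using $\AppI{\EmbedSym}{\mathsf{false}} = \Zero$ and $\AppI{\EmbedSym}{\mathsf{true}} = \One$, \Cref{defn:max-bool-question} yields $\MaxBoolQuestion{\Prune{d}} = \Max{n}{\Max{\MaxBoolQuestion{\Prune{\AppI{\phi}{\Zero}}}}{\MaxBoolQuestion{\Prune{\AppI{\phi}{\One}}}}}$. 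On the right, \Cref{def:encode} gives $\EncodeExt{\Branch{\phi}{n}}{\TNat} = \AppII{\termstoset{\BranchIntSym{\TNat}}{}}{\EncodeExtSym{\TNat} \circ \phi}{n}$, which unfolds to the set-theoretic function sending $e, b$ to $\AppII{b}{\UFun{y}{\AppII{\EncodeExt{\AppI{\phi}{y}}{\TNat}}{e}{b}}}{n}$. Feeding this to $\termstoset{\MaxBoolQuestionIntSym}{}$ substitutes the leaf algebra $e_0 \IsDefinedToBe \UFun{\_}{\Zero}$ for $e$ and the branch algebra $b_0 \IsDefinedToBe \UFun{g}{\UFun{x}{\Max{x}{\Max{\AppI{g}{\Zero}}{\AppI{g}{\One}}}}}$ for $b$ (the numerals $\TNumeral{0}, \TNumeral{1}$ interpreting to $\Zero, \One$ in $\set$), and after $\beta$-reduction yields $\Max{n}{\Max{\AppII{\EncodeExt{\AppI{\phi}{\Zero}}{\TNat}}{e_0}{b_0}}{\AppII{\EncodeExt{\AppI{\phi}{\One}}{\TNat}}{e_0}{b_0}}}$.

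The key step — and the place I expect to be the main obstacle — is to recognise the self-referential pattern that for every tree $d'$ one has $\AppII{\EncodeExt{d'}{\TNat}}{e_0}{b_0} = \AppI{\termstoset{\MaxBoolQuestionIntSym}{}}{\EncodeExt{d'}{\TNat}}$, which is immediate once $\termstoset{\MaxBoolQuestionIntSym}{}$ is unfolded but is easy to overlook in the middle of the bookkeeping. Applying this with $d' = \AppI{\phi}{\Zero}$ and $d' = \AppI{\phi}{\One}$, and then invoking the induction hypothesis for these two subtrees, rewrites the right-hand side as $\Max{n}{\Max{\MaxBoolQuestion{\Prune{\AppI{\phi}{\Zero}}}}{\MaxBoolQuestion{\Prune{\AppI{\phi}{\One}}}}}$, which is exactly the left-hand side computed above. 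The only remaining care is to check that the Boolean children of the pruned tree get probed precisely at the natural-number indices $\Zero, \One$ that $b_0$ uses, which is what $\AppI{\EmbedSym}{\mathsf{false}} = \Zero$ and $\AppI{\EmbedSym}{\mathsf{true}} = \One$ guarantee; everything else is routine $\beta$-reduction in $\set$.
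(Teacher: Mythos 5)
Your proof is correct and is essentially the argument the paper intends: the paper states this lemma without a written proof (deferring to the formalisation), but the evident proof is exactly your induction on $d$ with unfolding of $\PruneSym$, $\EncodeExtSym{\TNat}$ and the set interpretation of $\MaxBoolQuestionIntSym$, in the same style as the paper's proof of \Cref{lem:church_encode_kleisli}. Your identification of $\AppII{\EncodeExt{d'}{\TNat}}{e_0}{b_0}$ with $\AppI{\termstoset{\MaxBoolQuestionIntSym}{}}{\EncodeExt{d'}{\TNat}}$ before invoking the induction hypothesis, and the bookkeeping of $\EmbedSym$ against the indices $\Zero,\One$ used by the branch algebra, are exactly the right points to be careful about.
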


\begin{definition}[\AgdaLink{Definition-51a}]\label{def:uniform_modulus}
  We define the \definiendum{external and internal uniform modulus operators} as:
  \[
    \begin{array}{l@{\hspace{0.2in}}l}
      \begin{array}{l}
        \DeclareType{\ModulusUniSym}{\Dialogue{\Nat}{\Bool}{\Nat} \mlto \Nat}\\
        \NewDefinition{\ModulusUni{d}}{1 + \MaxBoolQuestion{d}}\\
      \end{array} &
      \begin{array}{l}
        \DeclareType{\ModulusUniIntSym}{\TCTerms{\TArrowType{\ChurchIntNN{\TNat}{\TNat}}{\TNat}}}\\
        \NewDefinition{\ModulusUniIntSym}{\TUFun{d}{\TSucc{\MaxBoolQuestionInt{d}}}}
      \end{array}
    \end{array}
  \]
\end{definition}

\begin{definition}[\AgdaLink{Definition-52}]
  Given a function on the Cantor space $f : (\Nat \mlto \Bool) \mlto \Nat$, a
  natural number $m : \Nat$ is said to be a \definiendum{modulus of uniform
  continuity} for~$f$ if the following holds:
  \(
    \forall \alpha, \beta : \Nat \mlto \Bool.\ \EqUpTo{\alpha}{\beta}{m} \to \AppI{f}{\alpha} = \AppI{f}{\beta}\text{.}
  \)
\end{definition}

From \Cref{rem:dialogue_continuous_implies_continuous}, we also know that the
computation encoded by any $\Bool$-branching dialogue tree is \emph{uniformly}
continuous. The function $\ModulusUniSym$ above can be seen as the computational
content of this fact.

\begin{lemma}[\AgdaLink{Lemma-53}]\label{lem:mod-uni-dialogue}
  Given any dialogue tree $d : \Dialogue{\Nat}{\Nat}{\Nat}$, the result
  $\ModulusUni{\Prune{d}}$ is a modulus of uniform continuity of the function
  $\AppI{\DialogueFSym}{\Prune{d}} : (\Nat \to \Bool) \mlto \Nat$.
\end{lemma}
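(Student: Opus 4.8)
The plan is to prove, by structural induction on dialogue trees, the slightly more general statement that for every $\Bool$-branching tree $e : \Dialogue{\Nat}{\Bool}{\Nat}$ and all $\alpha,\beta : \Nat \mlto \Bool$, if $\EqUpTo{\alpha}{\beta}{\ModulusUni{e}}$ then $\DialogueF{e}{\alpha} = \DialogueF{e}{\beta}$. The lemma is then the special case where $e$ is $\Prune{d}$: by \Cref{def:uniform_modulus} we have $\ModulusUni{\Prune{d}} = 1 + \MaxBoolQuestion{\Prune{d}}$, and the generalised statement instantiated at $\Prune{d}$ is, by \Cref{def:dialogue_operator} and the definition of modulus of uniform continuity, exactly the claim that this number is a modulus of uniform continuity for $\AppI{\DialogueFSym}{\Prune{d}}$.

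For the induction, the leaf case $e = \Leaf{n}$ is immediate: by \Cref{def:dialogue_operator} both $\DialogueF{\Leaf{n}}{\alpha}$ and $\DialogueF{\Leaf{n}}{\beta}$ reduce to $n$, so the hypothesis is not even needed. In the branch case $e = \Branch{\phi}{n}$ with $\phi : \Bool \mlto \Dialogue{\Nat}{\Bool}{\Nat}$, \Cref{defn:max-bool-question} yields $\MaxBoolQuestion{\Branch{\phi}{n}} \geq n$ and $\MaxBoolQuestion{\Branch{\phi}{n}} \geq \MaxBoolQuestion{\phi(b)}$ for each of the two Booleans $b$; hence $\ModulusUni{\Branch{\phi}{n}} = 1 + \MaxBoolQuestion{\Branch{\phi}{n}} > n$ and $\ModulusUni{\Branch{\phi}{n}} \geq \ModulusUni{\phi(b)}$. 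From $\ModulusUni{\Branch{\phi}{n}} > n$ and $\EqUpTo{\alpha}{\beta}{\ModulusUni{\Branch{\phi}{n}}}$ we get $\alpha(n) = \beta(n)$; write $b$ for this common value. Then, by \Cref{def:dialogue_operator}, $\DialogueF{\Branch{\phi}{n}}{\alpha} = \DialogueF{\phi(b)}{\alpha}$ and $\DialogueF{\Branch{\phi}{n}}{\beta} = \DialogueF{\phi(b)}{\beta}$, so it suffices to prove $\DialogueF{\phi(b)}{\alpha} = \DialogueF{\phi(b)}{\beta}$. This is the induction hypothesis for the subtree $\phi(b)$, whose only premise, $\EqUpTo{\alpha}{\beta}{\ModulusUni{\phi(b)}}$, follows from $\ModulusUni{\phi(b)} \leq \ModulusUni{\Branch{\phi}{n}}$ together with the elementary fact that $\EqUpTo{\alpha}{\beta}{k}$ and $k' \leq k$ imply $\EqUpTo{\alpha}{\beta}{k'}$.

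The only auxiliary facts used are that $\Max{\cdot}{\cdot}$ dominates both of its arguments and that agreement on an initial segment is antitone in the segment length; both are routine and presumably already in the development. I do not expect a real obstacle. The one subtlety worth stressing is the role of the $+1$ in $\ModulusUni{d} = 1 + \MaxBoolQuestion{d}$ (\Cref{def:uniform_modulus}): it is exactly what places the root question $n$ of a branch inside the compared initial segment, so that $\alpha$ and $\beta$ are forced to descend into the same subtree. Since the recursion of \Cref{defn:max-bool-question} inspects every branch of the finite $\Bool$-branching tree rather than a single path, the resulting bound does not depend on the oracle, which is precisely why the modulus obtained is \emph{uniform} — in contrast with \Cref{lem:mod-dialogue}, where exploring one path is enough for ordinary continuity.
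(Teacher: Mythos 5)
Your proof is correct and is exactly the argument the paper intends: the paper states \Cref{lem:mod-uni-dialogue} without an explicit proof (deferring to the formalisation), but the induction you give — using the $+1$ to force $\alpha(n)=\beta(n)$ at each branch and the whole-tree maximum to make the bound oracle-independent — is precisely the reasoning sketched in \Cref{rem:dialogue_continuous_implies_continuous}. No gaps.
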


Towards proving the correctness of the internal modulus of uniform continuity
operator in \Cref{thm:correctness-of-modulus-of-uniform-continuity}, we now
prove the following lemma, connecting the internal and external processes of
extracting moduli of uniform continuity from the respective encodings of
dialogue trees.

\begin{lemma}[\AgdaLink{Lemma-54}]\label{lem:main-uni}
  For all terms
  \(t : \TCTerms{\TArrowType{(\TArrowType{\TNat}{\TNat})}{\TNat}}\), the
  external and internal uniform max questions agree, i.e.\
  \(\IdTy{\termstoset{\MaxBoolQuestionInt{\TDialogueTree{t}{\TNat}}}{}}{\MaxBoolQuestion{\Prune{\AppI{\DialogueTree}{t}}}}\).
\end{lemma}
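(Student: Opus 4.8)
The plan is to mirror the proof of \Cref{lem:main} almost verbatim, replacing the point-indexed max-question operators by their uniform analogues and dropping the base point $\alpha$ altogether, since a modulus of uniform continuity carries no such point. Fix a closed term $t : \TCTerms{\TArrowType{(\TArrowType{\TNat}{\TNat})}{\TNat}}$; the identity will be obtained as a short chain of three equalities.

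First I would unfold the set interpretation of the application: since $\MaxBoolQuestionInt{\TDialogueTree{t}{\TNat}}$ is the closed term of type $\TNat$ obtained by applying $\MaxBoolQuestionIntSym$ to $\TDialogueTree{t}{\TNat}$, \Cref{def:set-model} gives $\termstoset{\MaxBoolQuestionInt{\TDialogueTree{t}{\TNat}}}{} = \AppI{\termstoset{\MaxBoolQuestionIntSym}{}}{\termstoset{\TDialogueTree{t}{\TNat}}{}}$. Next I would rewrite the argument with \Cref{lem:dialogue_agreement} instantiated at the motive $A \IsDefinedToBe \TNat$, which gives $\HEEquals{\termstoset{\TDialogueTree{t}{\TNat}}{}}{\EncodeExt{\AppI{\DialogueTree}{t}}{\TNat}}{\ChurchIntNN{\TNat}{\TNat}}$; applying $\termstoset{\MaxBoolQuestionIntSym}{}$ to these two hereditarily-equal arguments — legitimate because $\MaxBoolQuestionIntSym$ is a closed \VSystemT{} term, hence hereditarily well-behaved by \Cref{lem:t-definable-reflexive} — yields hereditarily-equal results at $\TNat$, i.e.\ a genuine equality of natural numbers $\AppI{\termstoset{\MaxBoolQuestionIntSym}{}}{\termstoset{\TDialogueTree{t}{\TNat}}{}} = \AppI{\termstoset{\MaxBoolQuestionIntSym}{}}{\EncodeExt{\AppI{\DialogueTree}{t}}{\TNat}}$. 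Finally I would invoke \Cref{lem:max-bool-question-agreement-ext} on the tree $\AppI{\DialogueTree}{t}$, giving $\AppI{\termstoset{\MaxBoolQuestionIntSym}{}}{\EncodeExt{\AppI{\DialogueTree}{t}}{\TNat}} = \MaxBoolQuestion{\Prune{\AppI{\DialogueTree}{t}}}$, and then chain the three equalities to conclude.

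The only delicate point — and the closest thing to an obstacle — is the passage between hereditarily extensional equality and strict equality in the middle step; this is handled exactly as in \Cref{lem:main}, using that the two relations coincide at the ground type $\TNat$ (see \Cref{defn:her-ext-equality}). All the genuine content has already been discharged upstream: in \Cref{lem:dialogue_agreement} (which runs through the fundamental lemma \Cref{lem:main_lemma} and the Kleisli lemma \Cref{lem:kleisli_lemma}) and in \Cref{lem:max-bool-question-agreement-ext} (a routine induction on dialogue trees, matching $\Prune$ together with the $\mathsf{false}/\mathsf{true}$-branching of $\MaxBoolQuestionSym$ against the two-point branch algebra $\TUFun{g}{\TUFun{x}{\TMax{x}{\TMax{\TAppI{g}{\TNumeral{\Zero}}}{\TAppI{g}{\TNumeral{\One}}}}}}$ built into $\MaxBoolQuestionIntSym$). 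Consequently the present lemma requires no new technical work beyond assembling these pieces.
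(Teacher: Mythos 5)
Your proposal is correct and follows exactly the same three-step chain of equalities as the paper's own proof: unfold the set interpretation of the application, rewrite the argument via \Cref{lem:dialogue_agreement} at motive $\TNat$, and conclude with \Cref{lem:max-bool-question-agreement-ext}. Your extra remark on how the hereditarily extensional equality is transported through $\termstoset{\MaxBoolQuestionIntSym}{}$ via \Cref{lem:t-definable-reflexive} is a correct elaboration of a step the paper leaves implicit.
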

\begin{proof}
  Let \(t : \TCTerms{\TArrowType{(\TArrowType{\TNat}{\TNat})}{\TNat}}\) be a
  \VSystemT{} term.
  \begin{gather*}
    \termstoset{\MaxBoolQuestionInt{\TDialogueTree{t}{\TNat}}}{}\\
    \quad=\AppI{\termstoset{\MaxBoolQuestionIntSym}{}}{\termstoset{\TDialogueTree{t}{\TNat}}{}}
    \tag{by \Cref{def:set-model}}\\
    \quad=\AppI{\termstoset{\MaxBoolQuestionIntSym}{}}{\EncodeExt{\AppI{\DialogueTree}{t}}{\TNat}} \tag{by \Cref{lem:dialogue_agreement}}\\
    \quad={\MaxBoolQuestion{\Prune{\AppI{\DialogueTree}{t}}}} \tag{by \cref{lem:max-bool-question-agreement-ext}}
  \end{gather*}
\end{proof}

Similar to \Cref{lem:main}, the key step in the above proof is the use of
\Cref{lem:dialogue_agreement}, which relies on the logical relation from
\Cref{def:internal-logical-relation}.
With \Cref{lem:main-uni} established, we can now proceed to prove our main
result for functions on the Cantor space: $\ModulusUniIntSym$ computes moduli of
uniform continuity for \VSystemT{}-definable functions on the Cantor space.

\begin{theorem}[\AgdaLink{Theorem-55} Correctness of $\ModulusUniIntSym$]
\label{thm:correctness-of-modulus-of-uniform-continuity}
Let \(t : \TCTerms{\TArrowType{(\TArrowType{\TNat}{\TNat})}{\TNat}}\) be a
\VSystemT{} function on the Baire space.
The result \(\termstoset{\ModulusUniInt{t}}{}\) is a modulus of uniform
continuity of the function \(\termstoset{t}{} \circ \EmbedCBSym\).
\end{theorem}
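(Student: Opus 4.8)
The plan is to reduce the statement to the external results already proved — \Cref{lem:mod-uni-dialogue}, that $\ModulusUni{\Prune{d}}$ is a modulus of uniform continuity of $\AppI{\DialogueFSym}{\Prune{d}}$, and \Cref{lem:main-uni}, that the internal uniform max question agrees with the external one — together with one small new lemma relating the pruning operation to the Cantor-to-Baire embedding $\EmbedCBSym$.

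First I would unfold the internal operator. By \Cref{def:uniform_modulus} and \Cref{def:internal_dialogue_tree_operator}, $\termstoset{\ModulusUniInt{t}}{}$ is $\Succ{\termstoset{\MaxBoolQuestionInt{\TDialogueTree{t}{\TNat}}}{}}$, which by \Cref{lem:main-uni} equals $\Succ{\MaxBoolQuestion{\Prune{\AppI{\DialogueTree}{t}}}}$, i.e.\ the external number $\ModulusUni{\Prune{\AppI{\DialogueTree}{t}}}$ (\Cref{def:uniform_modulus} again). By \Cref{lem:mod-uni-dialogue} this number is a modulus of uniform continuity of $\AppI{\DialogueFSym}{\Prune{\AppI{\DialogueTree}{t}}} : (\Nat \mlto \Bool) \mlto \Nat$. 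Since being a modulus of uniform continuity for $f$ only constrains the values of $f$, the property is preserved when $f$ is replaced by a function agreeing with it on every point; hence it remains to check that $\AppI{\DialogueFSym}{\Prune{\AppI{\DialogueTree}{t}}}$ and $\termstoset{t}{} \circ \EmbedCBSym$ agree at every $\alpha : \Nat \mlto \Bool$.

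For that I would prove, by induction on $d : \DialogueNN{\Nat}$, the auxiliary lemma $\DialogueF{\Prune{d}}{\alpha} = \DialogueF{d}{\EmbedCB{\alpha}}$ for all $\alpha : \Nat \mlto \Bool$. The leaf case is immediate. In the branch case $\Prune{\Branch{\phi}{n}} = \Branch{\PruneSym \circ \phi \circ \EmbedSym}{n}$ (\Cref{def:pruning}), so by \Cref{def:dialogue_operator} the left-hand side reduces to $\DialogueF{\Prune{\AppI{\phi}{\Embed{\AppI{\alpha}{n}}}}}{\alpha}$; using $\AppI{\EmbedCB{\alpha}}{n} = \Embed{\AppI{\alpha}{n}}$ and the induction hypothesis on the subtree $\AppI{\phi}{\AppI{\EmbedCB{\alpha}}{n}}$, this becomes $\DialogueF{\AppI{\phi}{\AppI{\EmbedCB{\alpha}}{n}}}{\EmbedCB{\alpha}} = \DialogueF{\Branch{\phi}{n}}{\EmbedCB{\alpha}}$, as wanted. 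Instantiating at $d := \AppI{\DialogueTree}{t}$ and invoking \Cref{thm:correctness_dialogue_tree} (so $\AppI{\termstoset{t}{}}{\gamma} = \DialogueF{\AppI{\DialogueTree}{t}}{\gamma}$ for all $\gamma : \Nat \mlto \Nat$) yields, for every $\alpha : \Nat \mlto \Bool$, $\DialogueF{\Prune{\AppI{\DialogueTree}{t}}}{\alpha} = \DialogueF{\AppI{\DialogueTree}{t}}{\EmbedCB{\alpha}} = \AppI{\termstoset{t}{}}{\EmbedCB{\alpha}} = \AppI{(\termstoset{t}{} \circ \EmbedCBSym)}{\alpha}$, which closes the proof.

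The heavy lifting sits inside \Cref{lem:mod-uni-dialogue} and \Cref{lem:main-uni}, so I do not expect a real obstacle; the only genuinely new ingredient is the $\PruneSym$/$\EmbedCBSym$ compatibility lemma, a routine induction. The one point to keep straight is the domains: \Cref{lem:mod-uni-dialogue} talks about functions on the Cantor space $(\Nat \mlto \Bool) \mlto \Nat$, so both the pointwise-equality transfer and the appeal to \Cref{thm:correctness_dialogue_tree} must be carried out after composing with $\EmbedCBSym$, not on plain Baire-space functions.
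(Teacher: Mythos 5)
Your proof is correct and follows essentially the same route as the paper: unfold $\ModulusUniIntSym$, use \Cref{lem:main-uni} to pass to the external $\ModulusUni{\Prune{\AppI{\DialogueTree}{t}}}$, and conclude with \Cref{lem:mod-uni-dialogue} and \Cref{thm:correctness_dialogue_tree}. The only difference is that you make explicit the compatibility lemma $\DialogueF{\Prune{d}}{\alpha} = \DialogueF{d}{\EmbedCB{\alpha}}$ needed to transfer the modulus from $\AppI{\DialogueFSym}{\Prune{\AppI{\DialogueTree}{t}}}$ to $\termstoset{t}{} \circ \EmbedCBSym$ --- a step the paper's prose proof leaves implicit --- and you correctly identify it as a routine induction on dialogue trees.
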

\begin{proof}
  Let $t : (\TNat \tto \TNat) \tto \TNat$ be a \VSystemT{} term.
  We know by \Cref{thm:correctness_dialogue_tree} that $t$ can be encoded by the inductive dialogue tree
  \(\AppI{\DialogueTree}{t}\).
  We therefore know that, as given by \Cref{lem:mod-uni-dialogue},
  $\ModulusUniSym$ gives a modulus of uniform continuity for
  $\termstoset{t}{}$,
  which is to say that we just have to show
   \(
   \IdTy
   {\termstoset{\MaxBoolQuestionInt{\TDialogueTree{t}}}{}}
   {\MaxBoolQuestion{\Prune{\AppI{\DialogueTree}{t}}}}
   \)
  which is given by~\Cref{lem:main-uni}.
\end{proof}


\section{Discussion and conclusion}%
\label{sec:conclusion}

This paper relies on and extends the work on effectful forcing by
Escard\'o~\cite{mhe-effectful-forcing}. We present the first
constructive internalisation of the dialogue trees featured in this technique.
In addition to constructing such trees in \VSystemT, we define the operators
that compute moduli of continuity from dialogue trees inside of \VSystemT.

\emph{Further related work.}
Putting dialogue trees aside, other internalisations of effectful forcing have
been explored before.
In~\cite{Kawai:2019} the authors provide a framework for internalising the
effectful forcing technique for \VSystemT{}.
It abstracts away from dialogue trees to a postulated type with enough structure
to carry out the analogous translation.
Due to \VSystemT{}'s lack of polymorphism, there is a disconnection between
Church encoded dialogue trees and inductive dialogue trees.
Namely, the type of the former is too big and will contain terms that do not
behave at all like dialogue trees.
As a result it seems impossible to apply this framework to the case of Churh
encoded dialogue trees, requiring instead our more direct approach.
Similarly, in~\cite{xu:2020}, with a nucleus as a parameter, a general
translation from \VSystemT into itself is developed. Assuming some
suitable conditions on the parameters of the translation, it is
then proved sound through a logical relation
With different instantiations of this translation the author is able to derive
\VSystemT-definable moduli of continuity, as well as \VSystemT-definable bar
recursion functionals.
The present paper strengthens this continuity result by showing that the
dialogue tree of a \VSystemT function is itself \VSystemT-definable.

In the literature one can also find different extensions of Escard\'o's
effectful forcing.
In~\cite{Sterling:jfp:2021} it is extended to prove that \VSystemT{} validates
the realizable bar thesis, i.e., Brouwer's bar thesis such that the bar is
\VSystemT-definable, which is equivalent to the dialogue continuity.
In~\cite{Baillon+Mahboubi+Pedrot:csl:2022} this technique is further generalised
to handle dependent type theory, and to then show that all functions on the
Baire space of the dependent theory BTT~\cite{Pedrot+Tabareau:lics:2017} are
continuous by building external dialogue trees.
The authors rely on a call-by-name interpretation, which as opposed
to~\cite{mhe-effectful-forcing,Sterling:jfp:2021}, gives rise to a
dialogue-based model of the theory.

In~\cite{Cohen+Paiva+Rahli+Tosun:mfcs:2023} the authors use a different
technique to prove the continuity of the \(\text{TT}^{\Box}_{\mathcal{C}}\)
functions on the Baire and Cantor space. This technique consists of deriving
Brouwer trees internally to the theory using effects.
\(\text{TT}^{\Box}_{\mathcal{C}}\) is an effectful and extensional variant of
MLTT, which therefore allows using effects to build such trees internally to the
theory, using computations introduced in~\cite{Longley:1999}.
While the techniques used
in~\cite{mhe-effectful-forcing,Sterling:jfp:2021,Baillon+Mahboubi+Pedrot:csl:2022}
build dialogue trees by induction on terms, such inductive constructions are not
internalizable without reflection mechanisms, which are not supported by
TT$^{\Box}_{\mathcal{C}}$.
Therefore, \cite{Cohen+Paiva+Rahli+Tosun:mfcs:2023} relies instead on classical
logic to prove finiteness of the computed trees and termination of the internal
program.

The above line of work relies on the dialogue-based notion of continuity defined
in \Cref{defn:continuity} to derive the continuity of functions on the Baire
space and the uniform continuity of functions on the Cantor space.
%


\emph{Open questions and further directions.}
It would interesting to determine whether the assumptions of bar induction and \emph{stable} moduli of continuity used by Capretta and Uustalu~\cite{Capretta+Uustalu:fossacs:2016} to prove the equivalence of continuity with the existence of Brouwer trees are strictly necessary or not.
Finally, we envisage some possible applications of our result, such as characterising the \VSystemT-definable functions $(\Nat \to \Nat) \to \Nat$ in terms of the heights of the trees measured by ordinal notations. For instance, it is natural to conjecture that the heights of such trees are bounded by the ordinal $\epsilon_0$.

\bibliographystyle{plainurl}
\bibliography{references.bib}

\end{document}
\endinput
